\documentclass[a4paper,10pt]{article}

\usepackage[T1]{fontenc}
\usepackage{graphicx}
\usepackage{graphics}
\usepackage{color}
\usepackage{amsmath}
\usepackage{amssymb}
\usepackage{amsfonts}
\usepackage[english]{babel}
\usepackage{enumitem}
\usepackage{amsthm}
\newtheorem{theorem}{Theorem}
\usepackage{mathtools}
\usepackage{fullpage}
\usepackage{array}
\usepackage{authblk}

\usepackage[colorinlistoftodos]{todonotes}

\usepackage{algpseudocode}
\usepackage[nothing]{algorithm}
\algtext*{EndWhile}
\algtext*{EndIf}
\algtext*{EndFor}

\floatname{algorithm}{Procedure}

\newcommand{\anum}{k} 
\newcommand{\cnum}{z} 

\newcommand{\colE}{c}         
\newcommand{\colV}{c}         
\newcommand{\colS}{\tilde{c}} 
\newcommand{\cS}{\mathcal{S}} 
\newcommand{\cD}{\mathcal{D}} 
\newcommand{\cC}{\mathcal{C}} 
\newcommand{\cR}{\mathcal{R}} 
\newcommand{\card}[1]{\lvert #1\rvert} 
\newcommand{\numofs}[1]{\card{#1}}
\newcommand{\st}{\hspace{0.1cm}\bigl|\bigr.\hspace{0.1cm}}
\newcommand{\SimpTree}{\tilde{T}}   		

\newcommand{\sn}[1]{\textup{\texttt{s}}(#1)}     
\newcommand{\csn}[1]{\textup{\texttt{cs}}(#1)}   
\newcommand{\hsn}[1]{\textup{\texttt{hs}}(#1)}   
\newcommand{\hcsn}[1]{\textup{\texttt{hcs}}(#1)} 

\newcommand{\msn}[1]{\textup{\texttt{ms}}(#1)}
\newcommand{\mcsn}[1]{\textup{\texttt{mcs}}(#1)}
\newcommand{\mhsn}[1]{\textup{\texttt{mhs}}(#1)}
\newcommand{\mhcsn}[1]{\textup{\texttt{mhcs}}(#1)}

\newcommand{\YES}{\textup{\texttt{YES}}}

\newcommand{\problemHGS}{\textup{HGS}}
\newcommand{\problemSAT}{3\textup{-SAT}}
\newcommand{\problemHCGS}{\textup{HCGS}}
\newcommand{\agentsInS}[3]{#1_{#3}(#2)}  

\newcommand{\first}[1]{R\textup{-}pr(v_{#1})}
\newcommand{\areas}[1]{Areas(#1)}
\newcommand{\progress}[1]{#1\textup{-}progress}
\newcommand{\reconf}[1]{#1\textup{-}reconfig}

\newcommand{\cB}{\mathcal{B}}

\newcommand{\cF}{\mathcal{F}}
\newcommand{\cG}{\mathcal{G}}
\newcommand{\cK}{\mathcal{K}} 
 
\newcommand{\cX}{\mathcal{X}} 
\newcommand{\cY}{\mathcal{Y}} 

\newcommand{\f}[2]{R\textup{-}pr(\tilde{v}_{#1}, #2)}
\newcommand{\induced}[2]{#1[#2]}
\newcommand{\cset}[1]{Q(#1)} 
\newcommand{\clean}[2]{Clean(#1,#2)}
\newcommand{\cont}[2]{Cont(#1,#2)}

\newcommand{\att}[2]{Attempt(#1,#2)}

\newcommand{\atts}[3]{#3\textup{-}Attempt(#1,#2)}
\newcommand{\cT}{\mathcal{T}} 
 
\newcommand{\cQ}{\mathcal{Q}} 

\newcommand{\lb}[1]{\beta(#1)}
\newcommand{\problemMHGS}{\textup{MHGS}}

\newcommand{\F}{S\textup{-}Attempt(\tilde{P},1)}
\newcommand{\Fb}{\atts{\tilde{P}_{\tilde{b}}^{+}} {i(0)}{S}}

\newcommand{\literal}[2]{l_{#1, #2}}

\newcommand{\NPCtrue}[1]{\textup{\texttt{T}}_{#1}} 
\newcommand{\NPCfalse}[1]{\textup{\texttt{F}}_{#1}} 
\newcommand{\NPCclause}[1]{\textup{\texttt{C}}_{#1}} 
\newcommand{\NPCvariable}[1]{\textup{\texttt{V}}_{#1}}  
\newcommand{\RED}{\textup{\texttt{R}}}
\newcommand{\NPCvalve}[2]{\textup{\texttt{O}}_{#1, #2}}

\newcommand{\TSAT}{T_{\textup{SAT}}}
\newcommand{\TSATP}{\tilde{T}_{\textup{SAT}}}
\newcommand{\ProcClean}{Clean}

\newtheorem{lemma}{Lemma}[section]

\newtheorem{observation}{Observation}[section]

\newtheorem{fact}{Fact}[section]

\algblockdefx{Foreach}{EndForeach}[1]{\textbf{Foreach} #1}{\textbf{EndForeach}}

\title{Searching by Heterogeneous Agents\thanks{Research partially supported by National Science Centre (Poland) grant number 2015/17/B/ST6/01887.}\thanks{A preliminary version of this paper appeared in the Proc. 11th International Conference on Algorithms and Complexity (CIAC 2019).}}
\author[1]{Dariusz Dereniowski}
\author[2]{{\L}ukasz Kuszner}
\author[1]{Robert Ostrowski}

\affil[1]{Faculty of Electronics, Telecommunications and Informatics, Gda{\'n}sk University of~Technology, Poland\\
\texttt{deren@eti.pg.edu.pl}, \texttt{robostro@student.pg.gda.pl}}
\affil[2]{Faculty of Mathematics, Physics and Informatics, University of Gda{\'n}sk, Poland\\
\texttt{lkuszner@inf.ug.edu.pl}}

\date{}

\begin{document}
\maketitle

\begin{abstract}
In this work we introduce and study a pursuit-evasion game in which the search is performed by heterogeneous entities. We incorporate heterogeneity into the classical edge search problem by considering edge-labeled graphs: once a search strategy initially assigns labels to the searchers, each searcher can be only present on an edge of its own label. We prove that this problem is not monotone even for trees and we give instances in which the number of recontamination events is asymptotically quadratic in the tree size. Other negative results regard the NP-completeness of the monotone, and NP-hardness of an arbitrary (i.e., non-monotone) heterogeneous search in trees. These properties show that this problem behaves very differently from the classical edge search. On the other hand, if all edges of a particular label form a (connected) subtree of the input tree, then we show that optimal heterogeneous search strategy can be computed efficiently.
\end{abstract}

\noindent
\textbf{Keywords:} edge search, graph searching, mobile agent computing, monotonicity, pursuit-evasion

\sloppy

\section{Introduction} \label{sec:intro}

Consider a scenario in which a team of searchers should propose a search strategy, i.e., a sequence of their moves, that results in capturing a fast and invisible fugitive hiding in a graph.
This strategy should succeed regardless of the actions of the fugitive and the fugitive is considered captured when at some point it shares the same location with a searcher. In a strategy, the searchers may perform the following moves: a searcher may be placed/removed on/from a vertex of the graph, and a searcher may slide along an edge from currently occupied vertex to its neighbor.
The fugitive may represent an entity that does not want to be captured but may as well be an entity that wants to be found but is constantly moving and the searchers cannot make any assumptions on its behavior.
There are numerous models of graph searching that have been introduced and studied and these models can be produced by enforcing some properties of the fugitive (e.g., visibility, speed, randomness of its movements), properties of the searchers (e.g., speed, type of knowledge provided as an input or during the search, restricted movements, radius of capture), types of graphs (e.g., simple, directed) or by considering different optimization criteria (e.g., number of searchers, search cost, search time).

One of the central concepts in graph searching theory is \emph{monotonicity}.
Informally speaking, if a search strategy has the property that once a searcher traversed an edge (and by this action it has been verified that in this very moment the fugitive is not present on this edge) it is guaranteed (by the future actions of the searchers) that the edge remains inaccessible to the fugitive, then we say that the search strategy is \emph{monotone}.
In most graph searching models, including the edge search recalled above, it is not beneficial to consider search strategies that are not monotone. Such a property is crucial for two main reasons:
firstly, knowing that monotone strategies include optimal ones reduces the algorithmic search space when finding good strategies and secondly, monotonicity places the problem in the class NP.

To the best of our knowledge, all searching problems studied to date are considering the searchers to have the same characteristics.
More precisely, the searchers may have different `identities' which allows them to differentiate their actions but their properties like speed, radius of capture or interactions with the fugitive are identical.
However, there exist pursuit-evasion games in which some additional device (like a sensor or a trap) is used by the searchers \cite{ClarkeC06,ClarkeN00,ClarkeN01,SundaramKC16}.
In this work we introduce a searching problem in which searchers are different: each searcher has access only to some part of the graph.
More precisely, there are several \emph{types} of searchers, and for each edge $e$ in the graph, only one type of searchers can slide along $e$.
We motivate this type of search twofold.
First, referring to some applications of graph searching problems in the field of robotics, one can imagine scenarios in which the robots that should physically move around the environment to execute a search strategy may not be all the same. Thus some robots, for various reasons, may not have access to the entire search space.
Our second motivation is an attempt to understand the concept of monotonicity in graph searching.
In general, the graph searching theory lacks of tools for analyzing search strategies that are not monotone, where a famous example is the question whether the connected search problem belongs to NP \cite{BFFFNST12}. (In a connected search we require that at each point of the strategy the subgraph that is guaranteed not to contain the fugitive is connected; for a formal definition see Section~\ref{sec:formulation}.) In the latter, the simplest examples that show that recontamination may be beneficial for some graphs are quite complicated~\cite{YDA09}.
The variant of searching that we introduce has an interesting property: it is possible to construct relatively simple examples of graphs in which multiple recontaminations are required to search the graph with the minimum number of searchers.
Moreover, it is interesting that this property holds even for trees.

\subsection{Related work}
In this work we adopt two models of graph searching to our purposes.
Those models are the classical \emph{edge search} \cite{Parsons76,Petrov82}, which is historically the first model studied, and its connected variant introduced in~\cite{BFFS02}.
As an optimization criterion we consider minimization of the number of searchers a strategy uses.

The edge search problem is known to be monotone \cite{monotonicity_in_graph_searching,LaPaugh93} but the connected search is not~\cite{YDA09}.
See also~\cite{FominT03} for a more unified approach for proving monotonicity for particular graph searching problems.
Knowing that the connected search is not monotone, a natural question is what is the `price of monotonicity', i.e., what is the ratio of the minimum number of searchers required in a monotone strategy and an arbitrary (possibly non-monotone) one?
It follows that this ratio is a constant that tends to $2$~\cite{Dereniowski12SIDMA}.
We remark that if the searchers do not know the graph in advance and need to learn its structure during execution of their search strategy then this ratio is $\Omega(n/\log n)$ even for trees~\cite{INS09}.
An example of recently introduced model of \emph{exclusive graph searching} shows that internal edge search with additional restriction that at most one searcher can occupy a vertex behaves very differently than edge search.
Namely, considerably more searchers are required for trees and exclusive graph searching is not monotone even in trees~\cite{BlinBN17,MarkouNP17}.
Few other searching problems are known not to be monotone and we only provide references for further readings~\cite{deren_ipl09,FN08,YDA09}.
Also see~\cite{FominHT04} for a searching problem for which determining whether monotonicity holds turns out to be a challenging open problem. 

Since we focus on trees in this work, we briefly survey a few known results for this class of graphs.
An edge search strategy that minimizes the number of searchers can be computed in linear time for trees~\cite{MegiddoHGJP88}.
Connected search is monotone and can be computed efficiently for trees~\cite{BFFFNST12} as well.
However, if one considers weighted trees (the weight of a vertex or edge indicate how many searchers are required to clean or prevent recontamination), then the problem turns out to be strongly NP-complete, both for edge search~\cite{MT09} and connected search~\cite{Dereniowski11}.
On the other hand, due to \cite{Dereniowski12,Dereniowski12SIDMA} both of these weighted problems have constant factor approximations.
The class of trees usually turns out to be a very natural subclass to study for many graph searching problems --- for some recent algorithmic and complexity examples see e.g. \cite{AminiCN15,DereniowskiDTY15,DyerYY08,GMNP10}.
See also~\cite{HollingerKS10} for an approximation algorithm for general graphs that performs by adopting optimal search strategies computed for spanning trees of the input graph.

We conclude by pointing to few works that use heterogeneous agents for solving different computational tasks, mostly in the area of mobile agent computing.
These include modeling traffic flow~\cite{SeanQian2017183}, meeting~\cite{LunaFSVY17} or rendezvous~\cite{DereniowskiKKK15,FarrugiaGKP15,FeinermanKKR14}.
We also note that heterogeneity can be introduced by providing weights to mobile agents, where the meaning of the weight is specific to a particular problem to be solved~\cite{BartschiC0D0HP17,CzyzowiczGKK11,KawamuraK15}, while in~\cite{Czyzowicz2014} authors consider patrolling by robots with distinct speeds and visibility ranges.

\subsection{Our work --- a short outline}

We focus on studying monotonicity and computational complexity of our heterogeneous graph searching problem that we formally define in Section~\ref{sec:formulation}.
We start by proving that the problem is not monotone in the class of trees (Section~\ref{sec:monotonicity}).
Then in Section~\ref{sec:hard} we show  that, also in trees, monotone search with heterogeneous searchers is NP-complete. In Section~\ref{sec:nmhard} we prove that the general, non-monotone, searching problem is NP-hard for trees.

Our investigations suggest that the essence of the problem difficulty is hidden in the properties of the availability areas of the searchers.
For example, the problem becomes hard for trees if such areas are allowed to be disconnected. 
To formally argue that this is the case we give, in Section~\ref{sec:easy}, a polynomial-time algorithm that finds an optimal search strategy for heterogeneous searchers in case when each color class induces a connected subtree.
This result holds also for the connected version of the heterogeneous graph search problem.

Section~\ref{sec:preliminaries} is concluded with Table~\ref{tab:summary} that points out the complexity and monotonicity differences between the classical and connected edge search with respect to our problem.

\section{Preliminaries} \label{sec:preliminaries}

In this work we consider simple edge-labeled graphs $G=(V(G),E(G),\colE)$, i.e., without loops or multiple edges, where $\colE\colon E(G)\to \{1,\ldots,\cnum\}$ is a function that assigns labels, called \emph{colors}, to the edges of $G$.
Then, if $\colE(\{u,v\})=i$, $\{u,v\}\in E(G)$, then we also say that vertices $u$ and $v$ \emph{have} color $i$.
Note that vertices may have multiple colors, so by  $\colV(v) := \{ \colE(\{u,v\}) \, : \, \{u,v\} \in E(G)\}$ we will refer to the set of colors of a vertex $v\in V(G)$.

\subsection{Problem formulation} \label{sec:formulation}

We will start by recalling the classical \emph{edge search} problem~\cite{Parsons76} and then we will formally introduce our adaptation of this problem to the case of heterogeneous searchers.

An (edge) \emph{search strategy} $\cS$ for a simple graph $G=(V(G),E(G))$ is a sequence of moves $\cS=(m_1,\ldots,m_{\ell})$.
Each move $m_i$ is one of the following actions:
\begin{enumerate} [label={\normalfont{(M\arabic*)}},leftmargin=*]
\item\label{move:placing} placing a searcher on a vertex,
\item\label{move:removing} removing a searcher from a vertex,
\item\label{move:sliding} sliding a searcher present on a vertex $u$ along an edge $\{u,v\}$ of $G$, which results in a searcher ending up on $v$.
\end{enumerate}
We often write for brevity `move $i$' in place of `move $m_i$'.

Furthermore, we recursively define for each $i\in\{0,\ldots,\ell\}$ a set $\cC_i$ such that $\cC_i$, $i>0$, is the set of edges that are \emph{clean} after the move $m_i$ and $\cC_0$ is the set of edges that are clean prior to the first move of $\cS$.
Initially, we set $\cC_0=\emptyset$.
For $i>0$ we compute $\cC_i$ in two steps.
In the first step, let $\cC_i'=\cC_{i-1}$ for moves \ref{move:placing} and \ref{move:removing}, and let $\cC_{i}'=\cC_{i-1}\cup\{\{u,v\}\}$ for a move \ref{move:sliding}.
In the second step compute $\cR_i$ to consists of all edges $e$ in $\cC_i'$ such that there \emph{exists} a path $P$ in $G$ such that no vertex of $P$ is occupied by a searcher at the end of move $m_i$, one endpoint of $P$ belongs to $e$ and the other endpoint of $P$ belong to an edge not in $\cC_{i-1}$.\footnote{We point out that another way of computing the set $\cC_i$ is possible. Namely, start again with the same set $\cC_i'$. Then, check if the following condition holds: there exists an edge $e$ in $\cC_i'$ that is adjacent to an edge not in $\cC_i'$ and their common vertex is not occupied by a searcher. In such case, remove $e$ from $\cC_i'$. Keep repeating such an edge removal from $\cC_i'$ until there is no such edge $e$. Then, set $\cC_i=\cC_i'$ and $\cR_i=E(G)\setminus\cC_i'$.}
We stress out that it is enough that only one such path exists, and in particular, if a contaminated edge is adjacent to a clean edge $e$, then $e$ becomes contaminated when their common vertex $v$ is not occupied by a searcher. In such case, $P$ consists of the vertex $v$ only.
Then, set $\cC_i=\cC_{i}'\setminus\cR_i$.
If $\cR_i\neq\emptyset$, then we say that the edges in $\cR_i$ become \emph{recontaminated} (or that \emph{recontamination occurs in $\cS$} if it is not important which edges are involved). If $l_e$ is the number of times the edge $e$ becomes recontaminated during a search strategy, then the value $\sum_{e\in E(G)}l_e$ is referred to as the number of \emph{unit recontaminations}.
Finally, we define $\cD_i=E(G)\setminus\cC_i$ to be the set of edges that are \emph{contaminated} at the end of move $m_i$, $i>0$, where again $\cD_0$ refers to the state prior to the first move.
Note that $\cD_0=E(G)$.
We require from a search strategy that $\cC_{\ell}=E(G)$.

Denote by $V(m_i)$ the vertices occupied by searchers at the end of move $m_i$.
We write $\numofs{\cS}$ to denote the number of searchers used by $\cS$ understood as the minimum number $\anum$ such that at most $\anum$ searchers are present on the graph in each move.
Then, the \emph{search number of} $G$ is
\[\sn{G}=\min\left\{\numofs{\cS}\st \textup{$\cS$ is a search strategy for $G$}\right\}.\]

If the graph induced by edges in $\cC_i$ is connected for each $i\in\{1,\ldots,\ell\}$, then we say that $\cS$ is \emph{connected}.
We then recall the \emph{connected search number of} $G$:
\[\csn{G}=\min\left\{\numofs{\cS}\st \textup{$\cS$ is a connected search strategy for $G$}\right\}.\]

\medskip
We now adopt the above classical graph searching definitions to the searching problem we study in this work.
For an edge-labeled graph $G=(V(G),E(G),\colE)$, a search strategy assigns to each of the $\anum$ searchers used by a search strategy a color: the color of searcher $j$ is denoted by $\colS(j)$. This is done prior to any move, and the assignment remains fixed for the rest of the strategy.
Then again, a search strategy $\cS$ is a sequence of moves with the following constraints: in move~\ref{move:placing} that places a searcher $j$ on a vertex $v$ it holds $\colS(j)\in\colV(v)$; move~\ref{move:removing} has no additional constraints; in move~\ref{move:sliding} that uses a searcher $j$ for sliding along an edge $\{u,v\}$ it holds $\colS(j)=\colE(\{u,v\})$.
Note that, in other words, the above constraints enforce the strategy to obey the requirement that at any given time a searcher may be present on a vertex of the same color and a searcher may only slide along an edge of the same color.
To stress out that a search strategy uses searchers with color assignment $\colS$, we refer to as a \emph{search $\colS$-strategy}.
We write $\agentsInS{\colS}{j}{\cS}$ to refer to the number of searchers with color $j$ in a search strategy $\cS$.

Then we introduce the corresponding graph parameters $\hsn{G}$ and $\hcsn{G}$ called the \emph{heterogeneous search number} and \emph{heterogeneous connected search number} of $G$, where $\hsn{G}$ (respectively $\hcsn{G}$) is the minimum integer $\anum$ such that there exists a (connected) search  $\colS$-strategy for $G$ that uses $\anum$ searchers.

Whenever we write $\sn{G}$ or $\csn{G}$ for an edge-labeled graph $G=(V,E,\colE)$ we refer to $\sn{G'}$ and $\csn{G'}$, respectively, where $G'=(V,E)$ is isomorphic to $G$.

\medskip
We say that a search strategy $\cS$ is \emph{monotone} if no recontamination occurs in $\cS$.
Analogously, for the search numbers given above, we define \emph{monotone}, \emph{connected monotone}, \emph{heterogeneous monotone} and \emph{connected heterogeneous monotone} search numbers denoted by $\msn{G}$, $\mcsn{G}$, $\mhsn{G}$ and $\mhcsn{G}$, respectively, to be the minimum number of searchers required by an appropriate monotone search strategy.

\medskip
The decision versions of the combinatorial problems we study in this work are as follows:
\begin{description}
\item[\emph{Heterogeneous Graph Searching} Problem] (\problemHGS) \\
Given an edge-labeled graph $G=(V(G),E(G),\colE)$ and an integer $\anum$, does it hold $\hsn{G}\leq \anum$?
\item[\emph{Heterogeneous Connected Graph Searching} Problem] ($\problemHCGS$) \\
Given an edge-labeled graph $G=(V(G),E(G),\colE)$ and an integer $\anum$, does it hold $\hcsn{G}\leq \anum$?
\end{description}
In the optimization versions of both problems an edge-labeled graph $G$ is given as an input and the goal is to find the minimum integer $\anum$, a labeling $\colS$ of $\anum$ searchers and a (connected) search $\colS$-strategy for $G$.

\begin{table}[ht!]
\label{tab:summary}
\centering
\begin{tabular}{ >{\centering\arraybackslash}m{1.5cm}||>{\centering\arraybackslash}m{3cm}|>{\centering\arraybackslash}m{3.5cm}|>{\centering\arraybackslash}m{3.5cm}|  }

  & Monotone  &Non-monotone & Complexity\\
 \hline
  \hline
  edge search   & arbitrary graphs  \cite{Parsons76,Petrov82,MT09,MegiddoHGJP88} & & P for trees \cite{MegiddoHGJP88}, NPC for weighted trees \cite{MT09}, NPC for arbitrary graphs \cite{MegiddoHGJP88}\\
  \hline
 connected edge search & trees  \cite{BFFFNST12, BFFS02} & arbitrary graphs \cite{YDA09} & P for trees \cite{BFFFNST12}, NPC for weighted trees \cite{Dereniowski11}, NPH for arbitrary graphs \cite{BFFFNST12}  \\
  \hline
 \problemHGS    & & trees [Theorem \ref{non_mon_theo}] & NPH for trees [Theorem \ref{nph_t_theorem}] \\

\hline
\end{tabular}
\caption{Monotonicity and complexity summary of our problems in comparison with the classical and connected edge search problems for trees and arbitrary graphs.}
 \end{table}

\subsection{Additional notation and remarks}

For some nodes $v$ in $V(G)$ we have $\card{\colV(v)}>1$, such connecting nodes we will call \emph{junctions}.
Thus a node $v$ is a junction if there exist two edges with different colors incident to $v$.

We define an \emph{area} in $G$ to be a maximal subgraph $H$ of $G$ such that for every two edges $e, f$ of $H$, there exists a path $P$ in $H$ connecting an endpoint of $e$ with and endpoint of $f$ such that $P$ contains no junctions. We further extend our notation to denote by $c(H)$ the color of all edges in area $H$. Note that two areas of the same color may share a junction. Let $\areas{G}$ denote all areas of $G$.
Two areas are said to be \emph{adjacent} if they include the same junction.

\begin{fact}\label{fact:leaves}
If $T$ is a tree and $v$ is a junction that belongs to some area $H$ in $T$, then $v$ is a~leaf (its degree is one) in $H$.
\qed
\end{fact}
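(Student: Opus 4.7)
The plan is to argue by contradiction: suppose $v$ is a junction lying in area $H$ but has at least two incident edges $e_1 = \{v, u_1\}$ and $e_2 = \{v, u_2\}$ inside $H$. Both have color $c(H)$. First I would apply the defining property of an area to $e_1$ and $e_2$: there exists a path $P$ inside $H$ whose endpoints lie in $e_1$ and $e_2$ respectively, and such that $P$ contains no junctions.

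Next I would observe that because $v$ is a junction and $P$ contains no junctions, $v$ cannot appear on $P$ at all, neither as an endpoint nor as an interior vertex. Consequently the endpoint of $P$ in $e_1$ must be $u_1$ and the endpoint in $e_2$ must be $u_2$. Since $G$ is simple, $e_1 \neq e_2$ forces $u_1 \neq u_2$ (otherwise $e_1,e_2$ would be parallel edges between $v$ and $u_1=u_2$).

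Then I would finish by noting that $e_1$, the path $P$ from $u_1$ to $u_2$, and $e_2$ together form a closed walk $v \to u_1 \to \cdots \to u_2 \to v$ in $T$. Because $v \notin V(P)$ and $P$ is itself a simple path in $H \subseteq T$, this closed walk is in fact a simple cycle in $T$, contradicting the assumption that $T$ is a tree. Hence the degree of $v$ inside $H$ is at most $1$; since $v \in V(H)$ means $v$ is incident to at least one edge of $H$, the degree is exactly $1$, i.e., $v$ is a leaf of $H$.

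I do not expect any real obstacle here: the whole argument is a one-line application of the acyclicity of $T$ once one parses the definition of area carefully. The only subtle point worth spelling out is why $v$ cannot be an endpoint of $P$ (because $P$ is required to contain no junctions, and the endpoints of $P$ are vertices of $P$), which is the step that rules out the apparently permissible configuration in which $v$ itself is an endpoint of the connecting path.
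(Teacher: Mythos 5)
Your argument is correct and is exactly the routine verification the paper leaves implicit (the fact is stated with its proof omitted as trivial): two edges of $H$ at $v$ plus the junction-free connecting path guaranteed by the definition of an area would close a cycle in $T$. The one subtlety you flag --- that $v$, being a junction, cannot lie on (or be an endpoint of) the path $P$ --- is indeed the only point that needs spelling out, and you handle it correctly.
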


\begin{fact}
If $T$ is a tree, then any two different areas in $T$ have at most one common node which is a junction.
\qed
\end{fact}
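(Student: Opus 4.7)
The plan is to proceed by contradiction. Assume $H_1 \ne H_2$ are two distinct areas of $T$ that either share a non-junction vertex, or share two distinct vertices. In each case I will show that $H_1 \cup H_2$ already satisfies the defining property of an area, so that the maximality built into the definition forces $H_1 = H_2$.

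Suppose first that the shared vertex $v$ is not a junction. Then every edge of $T$ incident to $v$ carries a single common color, so in particular $c(H_1) = c(H_2)$. For any $f_1 \in H_1$ and $f_2 \in H_2$, the area property yields an internal-junction-free path in $H_1$ from an endpoint of $f_1$ to an endpoint of some edge of $H_1$ incident to $v$, and symmetrically in $H_2$; possibly extending each by one edge so that both terminate at $v$ itself and then concatenating, we obtain a path in $H_1 \cup H_2$ whose only new internal vertex is the non-junction $v$. Hence $H_1 \cup H_2$ satisfies the area condition, contradicting the maximality of $H_1$ and $H_2$.

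Now suppose $H_1, H_2$ share two distinct vertices $v, w$, both necessarily junctions by the previous case. Each area is connected (from its defining path condition), and since $T$ is a tree each $H_i$ must contain the unique $v$-to-$w$ tree path $P$. Let $e = \{v, v'\}$ be the first edge of $P$; by Fact~\ref{fact:leaves}, $v$ is a leaf in each $H_i$, so $e$ is the only edge of $H_i$ incident to $v$. The key observation is that $v'$ is not a junction: if $P$ has length at least two, then $v'$ carries both $\{v, v'\}$ and $\{v', v''\}$ in $H_1$, hence has degree at least two in $H_1$ and so cannot be a junction by Fact~\ref{fact:leaves}; and if $P$ has length one, then $v' = w$ is a junction-leaf of each $H_i$, which together with the leafness of $v$ forces $H_i = \{e\}$ for both $i$ and gives $H_1 = H_2$ immediately. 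In the remaining subcase I rerun the gluing argument through the non-junction $v'$: any internal-junction-free $H_i$-path reaching an endpoint of $e$ can be truncated to terminate at $v'$, since any such path arriving at $v$ must enter via $e$; concatenating the truncated $H_1$- and $H_2$-paths at $v'$ yields a valid path in $H_1 \cup H_2$, again contradicting maximality.

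The main delicacy is this truncation step: the area definition only supplies a path ending at \emph{some} (possibly junction) endpoint of $e$, and we must redirect it to end at $v'$. Fact~\ref{fact:leaves} makes this automatic, because $e$ is the unique edge of $H_i$ touching the junction $v$, so any junction-free $H_i$-path terminating at $v$ must pass through $v'$ one step earlier and can be shortened to stop there.
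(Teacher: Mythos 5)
Your argument is correct. Note, however, that the paper offers no proof of this fact at all --- it is asserted with an immediate \qed{} as a direct consequence of the definitions --- so there is no ``paper approach'' to compare against; your maximality-of-the-union argument is the natural way to make the assertion rigorous. The two genuinely delicate points are exactly the ones you isolate: (i) redirecting the junction-free path guaranteed by the area definition so that it terminates at the non-junction endpoint $v'$ of the shared edge, which works because Fact~\ref{fact:leaves} makes $e$ the unique edge of $H_i$ at the junction $v$ (and, under the reading of ``$P$ contains no junctions'' that is forced by the rest of the paper, namely that the endpoints of $P$ count as well, such a path could not end at $v$ in the first place); and (ii) the degenerate length-one case where both endpoints of $e$ are junctions, which you correctly collapse to $H_1=H_2$ via Fact~\ref{fact:leaves}. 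The only cosmetic gap is that ``concatenating'' two junction-free paths at $v$ or $v'$ may produce a walk rather than a simple path; in a tree one should pass to the unique path between the far endpoints, which lies inside the union and inherits junction-freeness, so nothing breaks.
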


\begin{lemma}
\label{lem:clean1}
Given a~tree $T = (V(T), E(T), \colE)$ and any area $H$ in $T$, any search $\colS$\nobreakdash-strategy for $T$ uses at least $\sn{H}$ searchers of color $c(H)$.
\end{lemma}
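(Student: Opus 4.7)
My plan is to extract from any search $\colS$-strategy $\cS$ for $T$ a search strategy $\cS'$ on $H$, viewed as a standalone tree, that uses at most $\agentsInS{\colS}{c(H)}{\cS}$ searchers. Since then $\sn{H}\leq\numofs{\cS'}$, the lemma follows immediately.

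I will define $\cS'$ by mirroring only the moves of $\cS$ that act on the subset of $c(H)$-searchers currently standing on $V(H)$. Placements, removals, and slides entirely inside $V(H)$ are copied verbatim; a slide that carries a $c(H)$-searcher out of $V(H)$ (say into another area of color $c(H)$) is realized as a removal in $\cS'$, and symmetrically an incoming slide is realized as a placement; every remaining move of $\cS$ is ignored. The number of searchers present in $\cS'$ at any instant is then exactly the number of $c(H)$-searchers of $\cS$ that sit on $V(H)$ at that instant, which is trivially at most $\agentsInS{\colS}{c(H)}{\cS}$.

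The substantive part is to prove by induction on the moves of $\cS$ the invariant $\cC^H_i \subseteq \tilde{\cC}_{\pi(i)}$, where $\cC^H_i := \cC_i \cap E(H)$ is the set of edges of $H$ that are clean in $\cS$ after move $i$, $\tilde{\cC}_j$ is the analogous clean set in $\cS'$ after its $j$-th move, and $\pi(i)$ is the number of $\cS'$-moves produced from the first $i$ moves of $\cS$. Taking $i$ to be the last index gives $\tilde{\cC}_{\pi(\ell)}\supseteq\cC^H_\ell=E(H)$, proving $\cS'$ valid. Steps in which $\cC^H$ cannot gain edges (i.e., moves that do not slide along an edge of $H$) are routine, since then $\cC^H_i\subseteq\cC^H_{i-1}$ and the $\cS'$-side changes only by a removal or not at all. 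The only delicate case is a recontamination in $\cS'$ of some edge $e\in\cC^H_i$, witnessed by an unoccupied path $P$ in $H$ leading from a vertex of $e$ to a vertex of a contaminated edge $f'\in E(H)$.

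The main obstacle is that a vertex of $P$ might be a junction leaf of $H$ occupied in $\cS$ by a non-$c(H)$-searcher but unoccupied in $\cS'$, so that $P$ is blocked in $\cS$ yet unblocked in $\cS'$. I will rule this out using Fact~\ref{fact:leaves}: a junction of $T$ lying in $H$ has degree one in $H$, hence appears on $P$ only as an endpoint. Trimming the (at most two) junction-leaf endpoints of $P$ yields a sub-path $P^*$ whose vertices are all interior vertices of $H$, and such interior vertices can host only $c(H)$-searchers; by construction the positions of $c(H)$-searchers on $V(H)$ coincide in $\cS$ and $\cS'$, so $P^*$ is unoccupied in $\cS$ as well. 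Moreover, since each junction leaf shares its unique $H$-edge with its only $H$-neighbor, the endpoints of $P^*$ still belong respectively to $e$ and to $f'$. Hence $P^*$ witnesses recontamination of $e$ in $\cS$ at the same step, contradicting $e\in\cC^H_i$ and closing the induction.
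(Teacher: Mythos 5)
Your proof is correct and rests on the same idea as the paper's: by Fact~\ref{fact:leaves}, searchers of colors other than $c(H)$ can only occupy junction leaves of $H$, so only the $c(H)$-searchers can effectively clean $H$. The paper states this in a single sentence, whereas you carry out the full projection-and-simulation argument (including the path-trimming step showing that foreign guards on the leaves cannot block any recontamination internal to $H$), which is a rigorous elaboration of the paper's sketch rather than a genuinely different route.
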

\begin{proof}
If there are less than $\sn{H}$ searchers of color $c(H)$, then the area $H$ can not be cleaned, as searchers of other colors can only be placed on leafs of $H$.
\end{proof}

We now use the above lemma to obtain a lower bound for the heterogeneous search number of a graph $G=(V(G), E(G), c:E(G)\rightarrow\{1,\ldots, \cnum\})$.
Define
\[\lb{G}=\sum_{i=1}^{\cnum}{ \max \left\{\sn{H} \, \st \, H\in \areas{G}, c(H)=i\right\}}.\]

Using Lemma~\ref{lem:clean1} for each area we obtain the following: 
\begin{lemma} \label{lem:lower}
For each tree $T$ it holds $\hsn{T} \geq \lb{T}$.
\qed
\end{lemma}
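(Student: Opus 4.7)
The plan is to obtain the bound by applying Lemma~\ref{lem:clean1} on a per-color basis and then summing. The key observation driving the proof is that a color assignment $\colS$ partitions the set of searchers into disjoint color classes, so the total number of searchers used by any search $\colS$-strategy $\cS$ is exactly $\sum_{i=1}^{\cnum} \agentsInS{\colS}{i}{\cS}$.

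First, I would fix an arbitrary search $\colS$-strategy $\cS$ for $T$ and a color $i\in\{1,\ldots,\cnum\}$. For every area $H\in\areas{T}$ with $c(H)=i$, Lemma~\ref{lem:clean1} gives $\agentsInS{\colS}{i}{\cS}\geq \sn{H}$, because only searchers of color $i$ can slide along the edges of $H$ (and by Fact~\ref{fact:leaves} searchers of other colors can only enter $H$ at its junction-leaves, contributing nothing to cleaning interior edges). Taking the maximum over all such $H$ yields
\[\agentsInS{\colS}{i}{\cS}\geq \max\left\{\sn{H}\st H\in\areas{T},\ c(H)=i\right\}.\]

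Next, I would sum this inequality over $i=1,\ldots,\cnum$. Since searchers of distinct colors are distinct entities under the assignment $\colS$, summing counts each searcher exactly once, so
\[\numofs{\cS}=\sum_{i=1}^{\cnum}\agentsInS{\colS}{i}{\cS}\geq \sum_{i=1}^{\cnum}\max\left\{\sn{H}\st H\in\areas{T},\ c(H)=i\right\}=\lb{T}.\]
Taking the minimum over all search $\colS$-strategies (and over $\colS$) gives $\hsn{T}\geq\lb{T}$, as required.

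There is no serious obstacle here: the argument is a direct bookkeeping consequence of Lemma~\ref{lem:clean1}, relying only on the fact that the color partition of searchers is fixed at the start of the strategy and that the max-over-areas lower bound for each color can be attained independently in each area. The only point that deserves a line of justification is that one cannot do better than the maximum for a single color by ``sharing'' searchers across areas — but this is immediate since the bound $\sn{H}$ must hold simultaneously, and any searcher counted toward $H$ is already of color $i$ and thus contributes to $\agentsInS{\colS}{i}{\cS}$ no matter which area of color $i$ we consider.
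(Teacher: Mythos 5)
Your argument is correct and is exactly the paper's route: the paper proves this lemma simply by invoking Lemma~\ref{lem:clean1} for each area and summing over colors, which is precisely the bookkeeping you spell out. Nothing further is needed.
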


\section{Lack of monotonicity} \label{sec:monotonicity}

Restricting available strategies to monotone ones can lead to increase of  heterogeneous search number, even in case of trees.
We express this statement in form of the following main theorem of this section:

\begin {theorem}
\label{non_mon_theo}
There exists a tree T such that $\mhsn{T}>\hsn{T}$.
\end{theorem}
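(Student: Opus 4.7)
The plan is to exhibit an explicit edge-labeled tree $T$ together with a non-monotone $\colS$-strategy that uses some number $k$ of searchers, and then to prove that every monotone $\colS$-strategy for $T$ needs at least $k+1$ searchers. Overall, the separation I aim for is $\mhsn{T}\geq\lb{T}+1>\lb{T}=\hsn{T}$; the equality $\lb{T}=\hsn{T}$ will follow from Lemma~\ref{lem:lower} together with the non-monotone strategy described below.

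\emph{Construction.} I would build $T$ from two (or three) color classes glued at carefully placed junctions. A natural candidate is a tree whose color-$1$ part consists of two areas $H_1,H_2$ of positive search number, and whose color-$2$ part is a single ``bridge'' area $B$ of positive search number joining $H_1$ and $H_2$ through two junctions $u$ (between $H_1$ and $B$) and $v$ (between $B$ and $H_2$). Extra pendant branches of carefully chosen colors can be grafted onto $B$, $H_1$, and $H_2$ to tune the individual area search numbers so that, in the monotone regime, re-cleaning a side is strictly cheaper than simultaneously guarding it.

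\emph{Upper bound.} I describe a non-monotone $\colS$-strategy with exactly $\lb{T}$ searchers, allotted one group per color of size equal to the maximum individual area search number of that color. The strategy first cleans $H_1$; it then drops the guard at $u$, letting $H_1$ recontaminate; the color-$1$ team now slides to $H_2$ through the already-cleaned $B$ (which is held throughout by the color-$2$ team) and cleans $H_2$; finally the color-$1$ team is sent back through $B$ to re-clean $H_1$, which is possible because after $H_2$ is fully clean the junction $u$ is again adjacent only to clean color-$2$ edges. The instantaneous searcher count is $\lb{T}$ at every move.

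\emph{Lower bound and main obstacle.} The hard part is $\mhsn{T}\geq\lb{T}+1$, because this inequality has to hold uniformly over every coloring $\colS$ of the searchers. I plan a local argument at the critical junction $u$: in any monotone strategy, consider the first move after which the entire $H_1$-side of $u$ is clean while $B$ is not; at that moment $u$ must carry a searcher whose color lies in $\colV(u)=\{1,2\}$, and by monotonicity this searcher remains pinned at $u$ until $B$ is fully cleaned. By Lemma~\ref{lem:clean1} applied simultaneously to $B$ (for color $2$) and to $H_2$ (for color $1$), the remaining $\lb{T}-1$ searchers are already fully committed elsewhere at that instant, so the pinned guard at $u$ forces a strict $+1$ over $\lb{T}$. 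The main subtlety is that the adversary gets to pick $\colS$, so I would finish with a short case analysis over which of $H_1,B,H_2$ is cleaned first and over whether the pinned guard at $u$ has color $1$ or color $2$, showing that in every case Lemma~\ref{lem:clean1} produces the needed extra unit.
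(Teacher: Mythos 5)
There is a genuine gap, and it lies at the heart of your construction. In a tree, two areas meet in a \emph{single} junction vertex (Fact~2.2), so guarding a cleaned area costs exactly one searcher placed on that junction, and that searcher may take \emph{either} of the colors in $\colV(u)$. Your lower-bound argument invokes Lemma~\ref{lem:clean1} ``simultaneously'' for $B$ and $H_2$ to conclude that the remaining $\lb{T}-1$ searchers are ``fully committed elsewhere'' at the critical instant, but Lemma~\ref{lem:clean1} is a global statement about how many searchers of each color the strategy must \emph{own}, not about where they sit at a given move. While the color-$2$ team cleans $B$, every color-$1$ searcher is idle and one of them can serve as the pinned guard at $u$ (since $1\in\colV(u)$); while the color-$1$ team cleans $H_2$, every color-$2$ searcher is idle and one of them can guard $v$. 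Hence the obvious monotone strategy --- clean $H_1$, guard $u$, clean $B$, guard $v$, clean $H_2$ --- already succeeds with $\lb{T}=\max(\sn{H_1},\sn{H_2})+\sn{B}$ searchers, and your tree exhibits no separation. Symmetrically, your non-monotone upper-bound strategy does not actually save anything: while $H_2$ is being cleaned with $H_1$ recontaminated, the clean bridge $B$ must still be guarded at $u$ against the dirty $H_1$, so the guard you hoped to free is still pinned. Grafting pendant branches to ``tune the area search numbers'' cannot repair this, because the obstruction you need is that \emph{all} searchers of \emph{both} colors of the junction are simultaneously busy, which two alternating blobs never force.

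The paper's construction is built precisely to create that stronger obstruction, and it needs three colors. It uses a long path $T''_l$ whose edges cycle through colors $1,2,3$ with period three, so that at each step of the sweep two specific colors are engaged on the path and the unique searcher of the third color must stand guard on a short connector path $P$; as the sweep advances, the color of the required guard rotates, but the vertices of $P$ have fixed colors that do not allow the rotating guard to advance without stepping off, so a monotone strategy with $\lb{T_l}=3$ searchers gets stuck (Lemma~\ref{cant_monotone}), while a non-monotone one repeatedly recontaminates $P$ and succeeds (Lemma~\ref{monotonicity:non-monotone}). If you want to salvage your approach, you must replace the single-junction guarding picture with a mechanism of this kind in which the identity, not just the number, of the available guard colors changes over time.
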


In order to prove this theorem we provide an example of a tree $T_{l}=(V,E,c)$, where $l\geq 3$ is an integer, which cannot be cleaned with $\lb{T_{l}}$ searchers using a monotone search strategy, but there exists a non-monotone strategy, provided below, which achieves this goal.
Our construction is shown in Figure~\ref{fig:monotonicity}.

We first define three building blocks needed to obtain $T_{l}$, namely subtrees $T'_{1}$, $T'_{2}$ and $T''_{l}$.
We use three colors, i.e., $\anum\geq 3$.
The construction of the tree $T'_{i}, i\in\{1,2\}$, starts with a root vertex $q_{i}$, which has 3 further children connected by edges of color $1$. Each child of $q_{i}$ has 3 children connected by edges of color $2$.

For the tree $T_{l}'', l\geq 3$, take vertices $v_{0},\ldots,v_{l+1}$ that form a path with edges $e_{x}=\{v_{x},v_{x+1}\}$, $x\in \{0,\ldots,l\}$.
We set $\colE(e_{x})=x\mod 3+1$.
We attach one pendant edge with color $x\mod 3 + 1$ and one with color $(x-1)\mod 3+1$ to each vertex $v_{x}, x\in\{1,\ldots,l\}$.
Next, we take a path $P$ with four edges in which two internal edges are of color 2 
and two remaining edges are of color 3. 
To finish the construction of $T_{l}''$, identify the middle vertex of $P$, incident to the two 
edges of color 2, with the vertex $v_{0}$ of the previously constructed subgraph.

We link two copies of $T'_{i}$, $i\in\{1,2\}$, by identifying two endpoints of the path $P$ with the roots $q_{1}$ and $q_2$ of $T_1'$ and $T_2'$, respectively, obtaining the final tree $T_{l}$ shown in Fig.~\ref{fig:monotonicity}.

\begin{figure}[!ht]
\begin{center}
\includegraphics[scale=0.7]{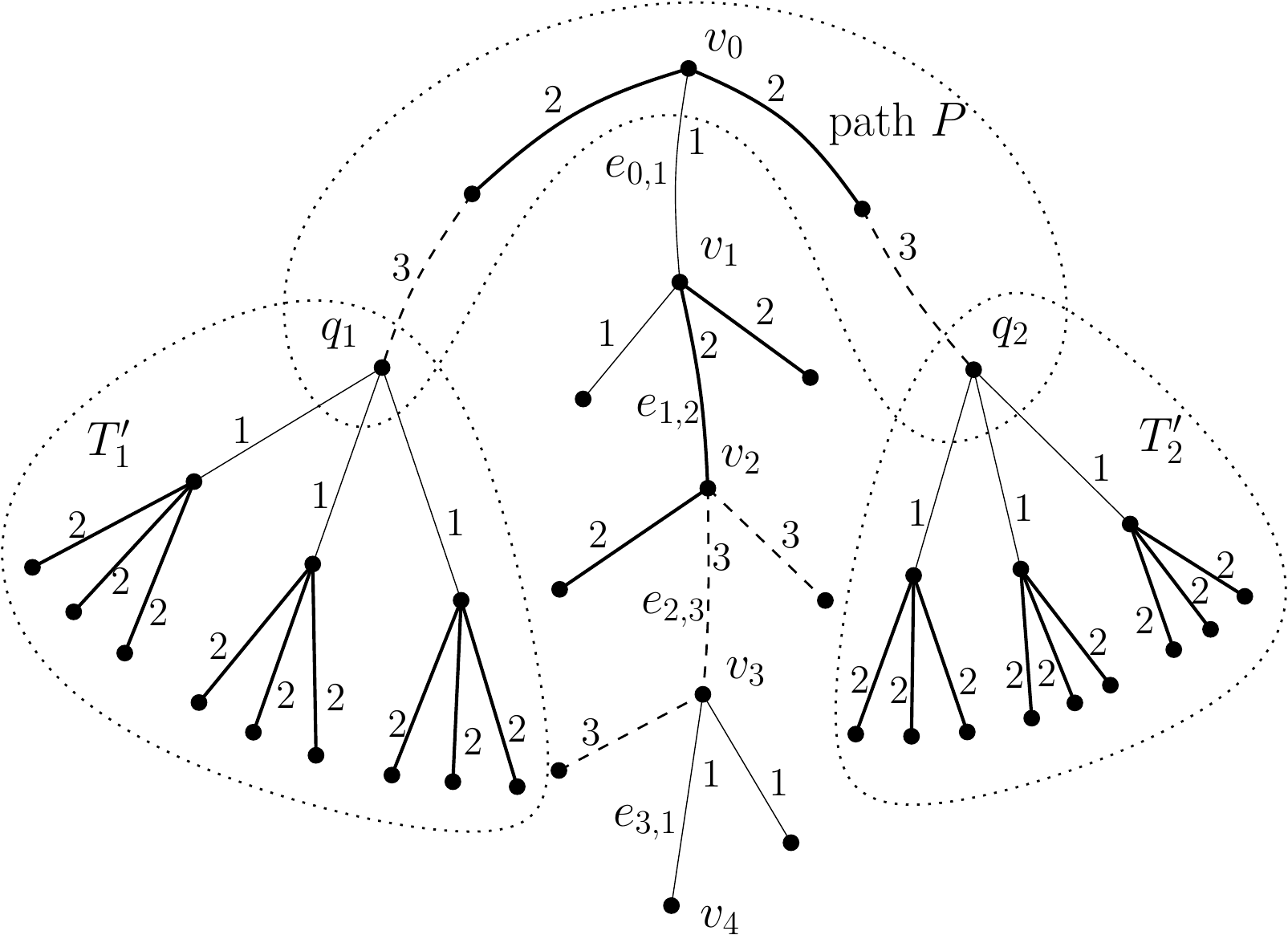}
\caption{The construction of $T_{3}$ ($l=3$) from the trees $T_1'$, $T_2'$ and $T_{3}''$. Regular, heavy  and dashed edges have labels $1,2$ and $3$, respectively.
}
\label{fig:monotonicity}
\end{center}
\end{figure}

Now, we are going to analyze a potential monotone search $\colS$-strategy $\cS$ using $\lb{T_l}=3$ searchers.
Thus, by Lemma~\ref{lem:clean1}, $\cS$ uses one searcher of each color.
We define a notion of a \emph{step} for $\cS=(m_1,\ldots,m_l)$ to refer to some particular moves of this strategy.
We distinguish the following steps that will be used in the lemmas below:
\begin{enumerate}
\item step $t_{i}, i\in\{1,2\}$, equals the minimum index $j$ such that at the end of move $m_j$ all searchers are placed on the vertices of $T_i'$ (informally, this is the first move in which all searchers are present in $T'_{i}$); 
\item step $t'_{i}, i\in\{1,2\}$, is the maximum index $j$ such that at the end of move $m_j$ all searchers are placed on the vertices of $T_i'$ (informally, this is the last move in which all searchers are present in $T'_{i}$); 
\item steps $t_{3},t'_{3}$ are, respectively, the minimum and maximum indices $j$ such that at the end of move $m_j$ all searchers are placed on the vertices in $V(P)\cup V(T''_{l}).$
\end{enumerate}

We skip a simple proof that all above steps are well defined, i.e., for any search strategy using $3$ searchers for $T$ each of the steps $t_i,t_i'$, $i\in\{1,2,3\}$, must occur (for trees $T_1'$ and $T_2'$ this immediately follows from $\sn{T'_{i}}=3$ for $i \in\{1,2\}$).

\begin{lemma}
\label{lemma:sequence_mon}
For each monotone $\tilde{c}$-search strategy $\cS$ for $T_3$ it holds: $t_{1}\leq t_{1}'<t_{3}\leq t'_{3}<t_{2}\leq t'_{2}$ or  $t_2 \leq t_{2}'<t_{3}\leq t'_{3}<t_{1}\leq t'_{1} $.
\end{lemma}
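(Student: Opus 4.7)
My plan is to combine the trivial definitional inequalities $t_i \leq t_i'$ with a disjointness argument on $T_1'$ and $T_2'$, and then to use a color-constraint observation at the junctions $q_1$ and $q_2$.

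The inequalities $t_i \leq t_i'$ are immediate from the definitions. Since $V(T_1') \cap V(T_2') = \emptyset$ in $T_3$, no step places all three searchers both in $V(T_1')$ and in $V(T_2')$, so $[t_1,t_1']$ and $[t_2,t_2']$ are disjoint. This yields either $t_1' < t_2$ or $t_2' < t_1$; these two cases are symmetric and correspond precisely to the two disjuncts in the statement, so I treat $t_1' < t_2$.

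The core step is to show $t_1' < t_3 \leq t_3' < t_2$. The key observation is that $\colV(q_1) = \{1,3\}$: the junction $q_1 = p_1$ is incident only to the three color-$1$ edges of $T_1'$ and to the color-$3$ edge $p_1 p_2$, so no color-$2$ searcher can ever occupy $q_1$. A symmetric statement holds at $q_2$. Consequently, whenever all three searchers are in $V(T_1')$, the color-$2$ searcher must lie in $V(T_1') \setminus \{q_1\}$, which is disjoint from $V(P) \cup V(T''_l)$. Hence step $t_1'$ cannot lie in $[t_3,t_3']$, so either $t_1' < t_3$ or $t_3' < t_1'$.

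The second alternative is ruled out by monotonicity: if all three searchers had left the middle region by step $t_3'$ while some edge of $T_1'$ were still contaminated, then cleaning that edge would eventually force the color-$1$ searcher to slide across a color-$1$ edge at $q_1$, exposing the already cleaned color-$3$ edge $p_1 p_2$ to a contaminated color-$1$ edge at the unguarded junction $q_1$ and recontaminating $p_1 p_2$. Thus $t_1' < t_3$, and a symmetric argument at $q_2$ using the color-$3$ edge incident to $q_2$ yields $t_3' < t_2$. The main obstacle will be making this monotonicity argument fully rigorous: one must trace the guard duties of the color-$1$ and color-$3$ searchers at $q_1$ and use that the color-$3$ searcher cannot remain permanently at $q_1$, since it is also required to traverse the other color-$3$ edges in the middle region (in particular $v_2 v_3$ and the color-$3$ pendants at $v_2$ and $v_3$). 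A careful enumeration of possible guard schedules at $q_1$ and $q_2$ is the technical heart of the proof.
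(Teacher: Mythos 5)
Your overall plan --- reduce by symmetry to the case $t_1'<t_2$ and then pin down where $[t_3,t_3']$ sits --- matches the paper's, and your observation that $\colV(q_1)=\{1,3\}$ (so no color-$2$ searcher can stand on $q_1$) is a correct structural fact. But there is a genuine gap at the pivotal step. From ``no single move can have all three searchers simultaneously in $V(T_1')$ and in $V(P)\cup V(T_l'')$'' you conclude ``hence $t_1'$ cannot lie in $[t_3,t_3']$.'' That does not follow: the interval $[t_3,t_3']$ contains many moves in which the searchers are \emph{not} all in the middle, so nothing you have said forbids the interleaving $t_3<t_1'<t_3'$ (searchers enter the middle at $t_3$, all migrate to $T_1'$, then return by $t_3'$). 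The same non sequitur occurs earlier when you derive disjointness of $[t_1,t_1']$ and $[t_2,t_2']$ from $V(T_1')\cap V(T_2')=\emptyset$ alone; a priori $t_1<t_2<t_1'<t_2'$ is possible. Excluding interleaving is exactly where monotonicity must enter, and it is the actual content of the paper's proof: once $t_i$ has occurred, $T_i'$ is partially clean yet (since $\sn{T_i'}=3$) not fully clean until around $t_i'$, so its clean part must be guarded at $q_i$, and a move placing all three searchers in a vertex-disjoint region leaves no guard and forces a recontamination. Concretely, for the case $t_1<t_3<t_1'$ the paper argues that at move $t_1'$ the clean edges of $T_l''$ (which exist because $t_3$ has occurred) must be guarded from the still-contaminated $T_2'$ by a searcher outside $T_1'$, contradicting the definition of $t_1'$.

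Your second paragraph does gesture at the right kind of monotonicity argument, but it treats only one branch ($t_3'<t_1'$) of a dichotomy that was never established, and you yourself defer its rigorous version (the ``careful enumeration of possible guard schedules'') as future work. As written, the proposal asserts the non-interleaving conclusion rather than proving it; the guarding-plus-counting argument is the heart of the lemma, not a loose end to be filled in later.
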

\begin{proof}
Intuitively, we prove the lemma using the following argument: in the process of cleaning $T_i'$, $i\in\{1,2\}$, all three searchers are required for some steps, and therefore a monotone strategy could not have partially cleaned $T_{3-i}'$ or $T_3''$ prior to this point.

The arguments used to prove this lemma do not use colors, so atomic statements about search strategies for subgraphs can be analyzed using simple and well known results for edge search model. 
Furthermore, due to the symmetry of $T$, it is enough to analyze only the case when $t_{1}<t_{2}$. Note that $t_{i}\leq t'_{i}, i\in\{1,2,3\}$, follows directly from the definition.  
The vertices $q_{i}$, $i\in\{1,2\}$, have to be guarded at some point between move $t_{i}$ and move $t_{i}'$ because $\sn{T'_{i}}=3$.
Because each step $t_{j}$, $j\in\{1,2,3\}$, uses all searchers, it cannot be performed if a searcher preventing recontamination is required to stay outside of subtree related to the respective step.
The subtrees $T'_{1}$ and $T'_{2}$ contain no common vertices, so $t_{1}<t_{2}$ implies $t_{2}>t_{1}'$, as stated in the lemma.

Suppose for a contradiction that $t_{3}<t_{i}$ for each $i\in\{1,2\}$.
In move $t_3$, since neither of moves $t_{i}'$ has occurred, both subtrees $T'_{1},T_{2}'$ contain contaminated edges.
Moreover, some of the contaminated edges are incident to vertices $q_{i}$.
Thus, any edge of $T_l''$ that is clean becomes recontaminated in the step $\min\{t_1,t_{2}\}$.
Therefore, $t_{1}<t_{3}$ as required.

Now we prove that $t_1'<t_3$.
Suppose for a contradiction that  $t_{1}<t_{3}<t'_{1}$. Consider the move of index $t'_{1}$.
By $t_{3}<t'_{1}$, $T_{l}''$ contains clean edges. By $t_{1}'<t_{2}$, $q_2$ is incident to contaminated edges in $T_{2}'$. Thus, there is a searcher outside of $T_{1}'$ which prevents recontamination of clean edges in $T_{l}''$.
Contradiction with the definition of $t'_{1}$.

In move $t_{2}$ there are no spare searchers left to  guard any contaminated area outside $T'_{2}$ which bypasses  $q_{2}$ and could threaten recontamination of $T_{1}'$, so all edges, including the ones in $T_{l}''$, between those two trees should have been clean already. Therefore step $t'_{3}$ has to have already occurred, which allows us to conclude $t'_{3}<t_{2}$.
\end{proof}

Due to the symmetry of $T_l$, we consider further only the case $t_{1}\leq t_{1}'<t_{3}\leq t'_{3}<t_{2}\leq t_{2}'$.
\begin{lemma}
\label{lemma:on_path}
During each move of index $t\in[t_{1}', t_{2}]$ there is a searcher on a vertex of $P$.
\end{lemma}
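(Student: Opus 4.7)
The plan is to argue by contradiction. Suppose there is a move $t\in[t_1', t_2]$ at which no searcher occupies a vertex of $V(P)$.

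First I would pin down the state at move $t$. Because $\sn{T_1'} = \sn{T_2'} = 3$ and $\cS$ uses exactly one searcher per color (since $\lb{T_l}=3$ and by Lemma~\ref{lem:clean1}), cleaning either of $T_1'$ or $T_2'$ requires all three searchers to be inside it. By the definitions of $t_1'$ and $t_2$ and by monotonicity, $T_1'$ is completely clean throughout $[t_1', t_2]$, while $T_2'$ still contains at least one contaminated edge at move $t<t_2$.

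Next I would propagate cleanness along $V(P)$ using the length-$0$-path version of the recontamination rule (the reformulation given in the footnote in Section~\ref{sec:preliminaries}). Starting at $q_1$: all $T_1'$-edges at $q_1$ are clean, so if the $P$-edge at $q_1$ were contaminated, the length-$0$ path $(q_1)$ would force a searcher at $q_1\in V(P)$, contradicting the assumption. Hence that $P$-edge is clean; iterating the same argument at each subsequent vertex of $V(P)$ shows that every edge of $P$ is clean at $t$ and, finally, that every $T_2'$-edge incident to $q_2$ is clean. Consequently all contaminated edges of $T_2'$ at $t$ are color-$2$ edges joining a child of $q_2$ to one of its grandchildren, and each child of $q_2$ with a contaminated descendant must itself be guarded by a searcher.

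Finally I would count searchers and argue that the strategy cannot complete. The color-$3$ searcher can only sit on a color-$3$ vertex, and neither $T_1'$ nor $T_2'$ contains a color-$3$ edge, so it must lie in $V(T_l'')\setminus V(P)$; in particular it cannot guard any child of $q_2$, whose color set is $\{1,2\}$. The two remaining searchers therefore supply at most two guards for at most two of the three children of $q_2$, so the third child's subtree is already fully clean. The main obstacle I expect is upgrading this snapshot observation into a global contradiction with the completeness of $\cS$: cleaning any further color-$2$ edge beneath a guarded child $u$ of $q_2$ requires sliding the color-$2$ searcher off $u$, whose only possible replacement guard at $u$ is color-$1$ (color-$3$ cannot sit at $u$); but relocating color-$1$ from its own guarded child $u'$ of $q_2$ exposes $\{q_2, u'\}$ to recontamination via the length-$0$ path at $u'$, and color-$3$ cannot substitute. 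Hence $\cS$ is deadlocked at $t$, contradicting the assumption that it is a complete monotone strategy.
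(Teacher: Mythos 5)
Your route differs from the paper's: the paper argues in one step that at any $t\in[t_1',t_2]$ there is a clean edge of $T_1'$ incident to $q_1$ and a contaminated edge of $T_2'$ incident to $q_2$, so some vertex of the $q_1$--$q_2$ path (which is exactly $V(P)$) must be occupied to separate them. You instead assume $V(P)$ is searcher-free, push cleanness all the way through $q_2$, and then try to refute the resulting configuration. The propagation itself is sound, though your premise that $T_1'$ is \emph{completely} clean throughout $[t_1',t_2]$ does not follow from the definition of $t_1'$: the last edge $\{q_1,u\}$ of $T_1'$ can be cleaned after $t_1'$ with only two searchers on $V(T_1')$ (color $3$ guarding $q_1$, color $1$ sliding), so $t_1'$ may precede the full cleaning of $T_1'$. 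What you actually use, and what does hold (but itself deserves justification), is only that $q_1$ is incident to some clean edge at every $t\ge t_1'$.

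The genuine gap is in the endgame. Your deadlock argument presupposes that the color-$2$ searcher is pinned on a guarded child $u$ of $q_2$ \emph{and} the color-$1$ searcher is pinned on a second guarded child $u'$. Nothing in the configuration you derived forces two guarded children: it is entirely consistent with your snapshot that exactly one child of $q_2$ has contaminated pendant edges, guarded by, say, the color-$1$ searcher, with the color-$2$ searcher free. In that case there is no deadlock: color $2$ enters, sweeps the pendants while color $1$ holds $u$, and $T_2'$ is finished. The contradiction there is of a different nature --- the prefix of the search induced on $T_2'$ up to move $t$ uses at most two searchers (since $t<t_2$) and reaches a state from which two searchers suffice to finish, which would give $\sn{T_2'}\le 2$; equivalently, the state ``all three edges of $T_2'$ at $q_2$ clean while $t<t_2$'' is unreachable, which is exactly the fact the paper asserts as ``$q_2$ is incident to a contaminated edge for every $t\le t_2$.'' You flag this step yourself as the main obstacle, and as written it does not close. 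To repair the proof, either establish directly that the three edges of $T_2'$ at $q_2$ cannot all be clean before $t_2$ (which stops your propagation at $q_2$ and recovers the paper's one-step separator argument), or supply the $\sn{T_2'}=3$ unreachability contradiction for the one-guarded-child case alongside your deadlock for the two-guarded-children case.
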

\begin{proof}
By $t\geq t_{1}'$, $q_1$ is incident to some clean edges of $T_1'$. By $t\leq t_{2}$, $q_2$ is incident to some contaminated edges from $T_2'$. Hence there has to be a searcher on $q_1$, $q_2$ or a vertex of the path $P$ between them to prevent recontamination. 
\end{proof}

Let $f_i, i\in \{1,\ldots,l-1\}$, be the index of a move such that one of the edges incident to $v_{i}$ is clean, one of the edges incident to $v_{i}$ 
is being cleaned and and all other edges incident to $v_{i}$ are contaminated. 

Notice that $\sn{T_{l}''}=2$, and therefore an arbitrary search strategy $\cS'$ using two searchers to clean a subtree without colors that is isomorphic to $T_{l}''$ follows one of these patterns: 
either the first searcher is placed, in some move of $\cS'$, on $v_1$ and throughout the search strategy it moves  from $v_1$ to $v_{l-1}$ or the first searcher starts at $v_{l-1}$ and moves from $v_{l-1}$ to $v_1$ while $\cS'$ proceeds.
If for each $i\in\{1,\ldots,l-1\}$ the edge $\{v_{i-1},v_{i}\}$ becomes clean prior to the edge $\{v_{i},v_{i+1}\}$ --- we say that such $\cS'$ \emph{cleans $T_{l}''$ from $v_1$  to $v_{l-1}$} and if the edge $\{v_{i-1},v_i\}$ becomes clean after $\{v_i,v_{i+1}\}$ --- we say that such $\cS$ \emph{cleans $T_{l}''$ from $v_{l-1}$ to $v_1$}. 

\begin{lemma}
\label{lemma:third}
Each move of index $f_i,  i\in \{1,\ldots,l-1\}$,  is well defined.
Either $f_{1}<f_{2}<\ldots<f_{l-2}<f_{l-1}$ or  $f_{l-1}<f_{l-2}<\ldots<f_{2}<f_{1}$.
\end{lemma}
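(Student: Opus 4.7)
My plan is to first establish that each $f_i$ is well-defined, and then to derive the monotonic ordering by arguing that the behaviour of $\cS$ on $T_{l}''$ reduces to a classical $2$-searcher edge search on $T_{l}''$. To see well-definedness, fix $i \in \{1, \ldots, l-1\}$. The vertex $v_i$ has exactly four incident edges in $T_{l}$: the two spine edges $e_{i-1}, e_i$ and the two pendants attached at $v_i$ in the construction of $T_{l}''$. Since $\cS$ is monotone, every edge of $T_{l}$ is cleaned in exactly one sliding move, and the number of clean edges incident to $v_i$ is a non-decreasing step function of the move index that assumes each value in $\{0,1,2,3,4\}$. Setting $f_i$ to be the second sliding move along an edge incident to $v_i$, immediately before $f_i$ exactly one edge at $v_i$ is clean, the move itself cleans a second edge, and the remaining two stay contaminated; this matches the stated definition, and uniqueness follows from the monotonicity of the count.

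For the ordering, I would invoke Lemma~\ref{lemma:sequence_mon} and assume without loss of generality that $t_{1} \le t_{1}' < t_{3} \le t_{3}' < t_{2} \le t_{2}'$. All cleaning of $T_{l}''$ is then carried out in the window $[t_{3}, t_{3}']$, and the sub-schedule of $\cS$ restricted to moves acting on vertices of $T_{l}''$ is, after forgetting colors, a valid classical edge search on $T_{l}''$. The key structural restriction is that each spine vertex $v_i$ with $1 \le i \le l-1$ has incident edges of only two distinct colors, namely $(i-1) \bmod 3 + 1$ and $i \bmod 3 + 1$; consequently only searchers of those two colors can ever be placed at or slide along edges incident to $v_i$, so at most two of the three searchers in $\cS$ can ever cooperate at any single spine junction. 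Combined with the fact that the two pendants at each spine junction must be handled while the junction itself is guarded, this bound forces the spine not to host two simultaneous cleaning fronts advancing from its two ends, so the sub-schedule on $T_{l}''$ effectively mimics a classical $2$-searcher strategy for $T_{l}''$. By the observation preceding the lemma, any such strategy cleans the spine edges $e_0, e_1, \ldots, e_{l}$ either in increasing or in decreasing order of index.

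Suppose now the spine is cleaned from $v_1$ to $v_{l-1}$; the other case is symmetric and yields the reversed order. Then at each $v_i$ the first of its four incident edges to be cleaned is $e_{i-1}$ and the last is $e_i$, so $f_i$ is one of the two pendant cleanings at $v_i$ and lies strictly between the moves that clean $e_{i-1}$ and $e_i$. Since $e_i$ is also the first of the four cleanings at $v_{i+1}$, it precedes $f_{i+1}$; therefore $f_i$ is before the move that cleans $e_i$, which is before $f_{i+1}$, giving $f_1 < f_2 < \cdots < f_{l-1}$. The main obstacle in turning this plan into a proof is the reduction to a classical $2$-searcher pattern on $T_{l}''$: a priori one might hope to advance cleaning fronts from both ends of the spine simultaneously using all three searchers, and excluding this while also accounting for the third-color searcher's possible role as a pendant cleaner is where the bulk of the combinatorial work resides.
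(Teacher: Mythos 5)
Your well-definedness argument is fine and is in fact more self-contained than the paper's: under monotonicity each of the four edges at $v_i$ is cleaned by exactly one sliding move, one at a time, so the move cleaning the second such edge witnesses the definition of $f_i$. (The paper instead obtains well-definedness as a by-product of the sweep pattern.)

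The ordering part, however, has a genuine gap, and it is exactly the one you flag at the end. Your key structural claim is that at most two searchers can ever cooperate at a single spine junction because $v_i$ carries only two colors; but a two-front schedule does not need three searchers at one junction --- it needs searchers at \emph{two different} junctions simultaneously, and three searchers operating freely inside $T_l''$ are not a priori excluded from doing that (indeed $T_l''$ with three unrestricted searchers admits two-front strategies). Color counting at a single vertex therefore does not yield the reduction to a classical $2$-searcher sweep. The paper closes this gap not with colors but with Lemma~\ref{lemma:on_path}: throughout $[t_1',t_2]$ the vertex $q_1$ is incident to clean edges and $q_2$ to contaminated ones, so one searcher is permanently pinned on the path $P$ to separate them, leaving only two searchers available in $T_l''$; since $\sn{T_l''}=2$, the folklore observation stated just before the lemma forces a single sweep from one end of the spine to the other, and the monotone ordering of the $f_i$ follows as in your last paragraph. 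Your proof would be repaired by replacing the junction-color argument with an appeal to Lemma~\ref{lemma:on_path}; as written, the central reduction is asserted rather than proved.
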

\begin{proof}
Consider a move of index $f$ which belongs to $[t_{3}, t_{3}']$ in a search strategy $\cS$.
By Lemma~\ref{lemma:sequence_mon} and Lemma~\ref{lemma:on_path}, a searcher is present on a vertex of $P$ in the move of index $f$. 
Hence, only two searchers can be in $T_{l}''$ in the move
$f$, so $\cS$ cleans $T_{l}''$ from $v_1$ to $v_{l-1}$ or cleans $T_{l}''$ from $v_{l-1}$ to $v_1$.
Note that during an execution of such a strategy there occur moves which satisfy the definition of $f_i$, and therefore there exists well defined $f_i$. 
When $\cS$ cleans $T_{l}''$ from $v_0$ to $v_l$, then $f_{1}<f_{2}<\ldots<f_{l-2}<f_{l-1}$ is satisfied and when $\cS$ cleans $T_{l}''$ from $v_l$ to $v_0$, then $f_{l-1}<f_{l-2}<\ldots<f_{2}<f_{1}$ is satisfied.
\end{proof}

\begin{lemma}
\label{cant_monotone}
There exists no monotone search $\colS$-strategy that uses $3$ searchers to clean $T_l$ when $l\geq7$.
\end{lemma}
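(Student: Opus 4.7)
The plan is to suppose, for contradiction, that a monotone $\colS$-strategy $\cS$ using $3$ searchers cleans $T_l$, and to extract enough structure from the preceding lemmas to force a recontamination. By Lemma~\ref{lem:clean1}, $\cS$ must employ exactly one searcher of each color $1,2,3$. By Lemmas~\ref{lemma:sequence_mon} and~\ref{lemma:third}, we may assume, up to symmetry, that $T_1'$ is cleaned first, then $T_l''\cup P$, and finally $T_2'$, with $f_1<f_2<\cdots<f_{l-1}$ and the cleaning of $T_l''$ progressing from $v_1$ toward $v_{l-1}$.

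First, I would pin down the state at each $f_i$. The move $f_i$ is a slide; one edge of $v_i$ is already clean and two are contaminated, so there must be both a slider and a guard at $v_i$ (the guard is needed in order that the pre-existing clean edge survive the two still-contaminated ones). Both searchers must have a color in $\colV(v_i)=\{a_i,b_i\}$ with $a_i=(i-1)\bmod 3+1$ and $b_i=i\bmod 3+1$, and since only one searcher of each color is available, the two searchers at $v_i$ are forced to be precisely those of colors $a_i$ and $b_i$. Consequently, by Lemma~\ref{lemma:on_path}, the remaining searcher---of color $C_i:=\{1,2,3\}\setminus\{a_i,b_i\}$---sits on a vertex of $P$. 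A direct computation yields $C_1=3$, $C_2=1$, $C_3=2$.

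The technical heart of the argument is to show that by the move $f_2$ the edge $\{u_3,u_4\}$ has already been cleaned. Since the color-$3$ searcher is on $P$ at $f_1$ but is required at $v_2$ at $f_2$, it must leave $P$ during the transition between these moves. Because $\colV(v_0)=\{1,2\}$, $\colV(u_3)=\{2,3\}$, and $\colV(u_4)=\{1,3\}$, only very specific colors are admissible at each vertex of $P$; in particular the color-$1$ searcher is the only candidate other than color $3$ to guard $u_4$, and the color-$2$ searcher the only candidate other than color $3$ to guard $u_3$. A short case analysis then shows that the color-$2$ searcher cannot be relocated away from $v_1$ (and later $v_2$) without either leaving already-cleaned edges of $T_l''$ unprotected or triggering cascading recontaminations at $v_0$ or $u_3$ (via the contaminated edge $\{u_3,u_4\}$ and the already-clean edge $\{v_0,u_3\}$ forced by~$f_1$). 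Hence the only monotone way of evacuating the color-$3$ searcher from $P$ is to first slide it along $\{u_3,u_4\}$, and after this slide to install color~$1$ at $u_4$ so that the $T_2'$-contamination remains blocked; either way, $\{u_3,u_4\}$ lies in the clean set by move $f_2$.

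Finally, monotonicity forces $\{u_3,u_4\}$ to remain clean at $f_3$. But at $f_3$ the guard on $P$ has color $C_3=2$, which cannot occupy $u_4$ since $\colV(u_4)=\{1,3\}$, and the other two searchers are at $v_3$; hence $u_4$ is unoccupied at $f_3$. Because $T_2'$ is still fully contaminated (as $t_3'<t_2$), the clean edge $\{u_3,u_4\}$ is then recontaminated through the unguarded $u_4$, contradicting the monotonicity of $\cS$ and proving the lemma for every $l\geq 7$ (in fact the argument only needs $l\geq 4$ to guarantee that $f_3$ exists). The main obstacle in carrying out this plan is the case analysis of the previous paragraph: one must patiently verify that no creative redistribution of the three searchers can both free the color-$3$ guard from $P$ and keep $\{u_3,u_4\}$ contaminated without violating monotonicity somewhere along the way.
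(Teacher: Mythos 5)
Your overall plan is the paper's plan: use Lemma~\ref{lem:clean1} to force one searcher per color, Lemmas~\ref{lemma:sequence_mon}--\ref{lemma:third} to order the steps and to identify, at each move $f_i$, which two colors are trapped at $v_i$ and which single color must serve as the guard on $P$, and then argue that the guard cannot be consistently relocated on $P$ as its required color cycles through $3,1,2,3,\ldots$. However, the execution contains a genuine gap at what you yourself call the ``technical heart.''

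The claim that $\{u_3,u_4\}$ must be clean by move $f_2$ is false. Your justification is that ``the only monotone way of evacuating the color-$3$ searcher from $P$ is to first slide it along $\{u_3,u_4\}$,'' but this ignores move~\ref{move:removing}: the color-$3$ searcher can simply be \emph{removed} from $P$ and \emph{placed} at $v_2$. Concretely, the following configuration is monotone-consistent at $f_2$: the color-$1$ searcher is placed at $v_0$ (legal, since $v_0$ has color $1$ via $e_0$), the color-$3$ searcher is then removed from $P$, and both edges $\{v_0,u_3\}$ and $\{u_3,u_4\}$ remain \emph{contaminated}. No recontamination occurs, because every clean edge incident to $v_0$ (namely $e_0$ and possibly $\{u_1,v_0\}$ on the $q_1$ side) is protected by the searcher standing on $v_0$, and the vacated vertex of $P$ is incident only to contaminated edges. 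Your auxiliary assertion that $\{v_0,u_3\}$ is ``already clean, forced by $f_1$'' fails for the same reason. Consequently no recontamination is forced at $f_3$ either: the color-$2$ guard can take over at $v_0$ with the two edges to its right still contaminated. One can in fact continue this consistent monotone prefix through $f_4$ (guard slides to $u_3$, cleaning $\{v_0,u_3\}$, then hands off to color $3$) and $f_5$ (the frontier is pushed to $q_2$); the genuine obstruction appears only at $f_6$, where the color-$2$ guard must protect the now-clean $\{u_3,u_4\}$ from the still-contaminated $T_2'$ at $q_2$, a vertex it cannot occupy. This is exactly why the paper analyzes three consecutive moves $f_6,f_5,f_4$ and why the hypothesis is $l\geq 7$; your remark that $l\geq 4$ would suffice is a symptom of the error rather than a strengthening. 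A secondary issue: reducing ``up to symmetry'' to the single direction $f_1<f_2<\cdots$ is not justified by the automorphisms of $T_l$ alone (there is no automorphism reversing the path $v_0,\ldots,v_{l+1}$); you either need to treat the direction $f_{l-1}<\cdots<f_1$ separately, as the paper does, or invoke the reversal-of-strategies argument.
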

\begin{proof}
We use the following intuition in the proof: whenever a search strategy tries to clean the path composed of the vertices $v_0,\ldots,v_{l+1}$, together with the corresponding incident edges, then it periodically needs searchers of all three colors on this path.
While doing this, different vertices of the path $P$ need to be guarded.
More precisely, when the search moves along the former path, it needs to move along $P$ as well.
Due to the fact that $l$ is large enough, the path $P$ is not long enough to avoid recontamination.

The vertex $v_i, i \in \{1,\ldots, l-1\}$, is incident to edges of colors $i\mod 3+1$ and $(i-1)\mod 3+1$, and therefore each move $f_{i}$ uses both searchers of colors $i\mod 3+1$ and $(i-1)\mod 3+1$. By Lemma~\ref{lemma:on_path}, the third searcher, which is of color $(i-2)\mod 3+1$, stays on $P$.

Consider a sequence $f_{6}<f_{5}<\ldots<f_{2}<f_{1}$. Note that it implies that $T_{3}''$ is cleaned from $v_{l-1}$ to $v_1$. Let us show that it is impossible to place a searcher on the vertices of $P$ such that no recontamination occurs in each $f_{i}, i\in \{1,\ldots, 6\}$.   

Consider the move of index $f_{6}$, where searchers of colors 1 and 3 are in $T''_{l}$ and 2 is on $P$. 
Before  move $f_{6}$ an edge incident to $v_{6}$ is clean (by definition of $f_{6}$). No edge incident to  $v_1$ is clean and, by Lemma~\ref{lemma:sequence_mon}, $T_{j}'$ has a clean edge, $j\in \{1,2\}$. In order to prevent recontamination of $T_{j}'$, the searcher is present on $P$, particularly on a vertex of the path from $q_{j}$ to $v_{0}$. It cannot be the vertex $q_{j}$, because $2\notin c(q_{j})$, so the edge of color 3 incident to $q_{j}$ is clean, and the searcher is on one of the remaining two vertices.
Consider the move of index $f_{5}$, in which the searcher of color $1$ is on a vertex $v$ of $P$. The vertex between $q_{j}$ and $v_{0}$ cannot be occupied, due to its colors, and occupying $q_{j}$ would cause recontamination --- only the vertex $v_{0}$ is available, $v=v_0$. 
Consider the move of index $f_{4}$. The vertex $v_{0}$ cannot be occupied, due to its colors. The edge $e_{0}$ cannot be clean before $e_{4}$ is clean, because $T_{l}''$ is cleaned from $v_{l-1}$ to $v_1$. 
Therefore, the searcher on $v_{0}$ cannot be moved towards $q_2$. Monotone strategy fails.

The argument is analogical for a sequence $f_{1}<f_{2}<\ldots<f_{5}<f_{6}$. By Lemma~\ref{lemma:third}, $T_{l}''$ is cleaned either from $v_1$ to $v_{l-1}$ or the other way, which implies that considering the two above cases completes the proof.
\end{proof}

\begin {lemma}\label{monotonicity:non-monotone}
There exists a non-monotone $\colS$-strategy \(\cS\) that cleans $T_l$ using three searchers for each $l\geq 3$.
\end{lemma}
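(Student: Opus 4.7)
The plan is to exhibit an explicit non-monotone search $\colS$-strategy $\cS$ for $T_l$ using three searchers $s_1, s_2, s_3$ of colors $1, 2, 3$, respectively. The strategy proceeds in four phases and relies on temporarily recontaminating $T_1'$ while $s_3$ is used inside $T_l''$, then re-cleaning $T_1'$ at the end.

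In Phase~1 we clean $T_1'$ monotonically, exploiting that $q_1$ is a junction of colors $1$ and $3$ (it is an endpoint of $P$). We place $s_3$ on $q_1$ as a persistent guard. For each child $c_i$ of $q_1$ we then slide $s_1$ from $q_1$ to $c_i$ along the color-$1$ edge, sweep the three color-$2$ pendants of $c_i$ using $s_2$ (with $s_1$ on $c_i$ as guard), and finally slide $s_1$ back from $c_i$ to $q_1$. After processing all three children, $T_1'$ is clean and $s_3$ remains on $q_1$. Phase~$2$ extends the clean region into $P$: slide $s_3$ from $q_1$ to the junction $u_1$ on $P$ (cleaning the color-$3$ edge), then place $s_2$ on $u_1$ and slide it to $v_0$ (cleaning the color-$2$ edge). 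The end state has $s_1$ on $q_1$, $s_3$ on $u_1$, $s_2$ on $v_0$, with $T_1'$ and both edges of the $q_1$-side of $P$ clean.

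Phase~$3$ is the non-monotone core. We sweep $T_l''$ from $v_0$ toward $v_{l+1}$ by rotating the three searchers through the 3-periodic color pattern $1, 2, 3, 1, 2, 3, \ldots$ of the path edges $e_x$. At each $v_x$ we clean the two pendants by placing the searcher of the appropriate color on the pendant leaf and sliding to $v_x$, and then the searcher of color $c(e_x)$ slides from $v_x$ to $v_{x+1}$. The recontamination is unavoidable here: as soon as $s_2$ leaves $v_0$ to clean $e_1$, the edges $e_0$ and $\{u_1, v_0\}$ recontaminate through the still-dirty $\{v_0, u_3\}$; and when $s_3$ later leaves $u_1$ to serve color-$3$ edges of $T_l''$, the recontamination propagates through $\{q_1, u_1\}$ and into $T_1'$. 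We accept these recontaminations and continue until every edge of $T_l''$ is clean, with a searcher staged inside $T_l''$ to keep it guarded during the remaining phases.

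Phase~$4$ mirrors Phases~$2$ and~$1$ from the $q_2$ end: slide searchers through the $q_2$-side of $P$ (cleaning its two edges), clean $T_2'$ using $s_3$ as a guard on $q_2$, and finally walk the searchers back along the now-clean $P$ to $q_1$ and re-execute the Phase~$1$ sub-strategy on the recontaminated $T_1'$. Since $T_l''$ and the relevant portion of $P$ remain clean and guarded during Phase~$4$, the return sweep does not disturb prior progress and the final state has all edges of $T_l$ clean. The main obstacle is verifying that Phase~$3$ proceeds without ever needing an unavailable searcher: at each $v_x$, one of the three searchers must match $c(e_x)$ and be free to advance while the other two guard $v_x$ or sweep the pendants. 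The key observation enabling this is that the three available colors match exactly the 3-periodic coloring of $T_l''$'s path edges, so the roles of the searchers rotate cyclically with each step of the sweep.
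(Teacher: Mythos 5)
There is a genuine gap here: your treatment of recontamination is the opposite of what the construction allows, and Phases~3 and~4 cannot be executed with three searchers as described. In Phase~3 you let the contamination flow back through $v_0$ --- recontaminating $e_0$, the $q_1$-side of $P$, and eventually all of $T_1'$ --- and propose to ``accept'' this. But the same contamination also wipes out the cleaned prefix of $T_l''$: once $e_0=\{v_0,v_1\}$ is dirty and the vertices $v_1,\ldots,v_{x-1}$ behind the sweeping front are unoccupied (which they must be, since all three searchers are busy at the front), recontamination propagates along $e_1,\ldots,e_{x-1}$ and all cleaned pendants up to the front, so the sweep makes no net progress. The paper's strategy avoids this by keeping the momentarily unneeded searcher \emph{permanently on $P$}, alternating between $v_0$ and its neighbour $v$ on the $q_2$-side according to that searcher's color (the color-$1$ searcher can stand only on $v_0$, the color-$3$ searcher only on $v$, since $\colV(v_0)=\{1,2\}$ and $\colV(v)=\{2,3\}$). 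This seals the clean region --- $T_1'$, the $q_1$-side of $P$, and the cleaned prefix of $T_l''$ --- from the contaminated $q_2$-side at all times, and confines the recontamination to the single edge $\{v_0,v\}$, which is re-cleaned at each rotation of the guard's color. Nothing in $T_1'$ or in the cleaned part of $T_l''$ is ever lost.

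Phase~4 is also infeasible, and not merely under-specified. Cleaning $T_2'$ requires all three searchers simultaneously: colors $1$ and $2$ for its edges plus a guard on $q_2$, which must be the color-$3$ searcher because $\colV(q_2)=\{1,3\}$ and the single color-$1$ searcher must leave $q_2$ to sweep the edges below. Hence, while $T_2'$ is being cleaned, no searcher remains to guard the clean part of the graph from your still-contaminated $T_1'$, whose contamination would pour through $q_1$ and $P$ back into $T_l''$; and symmetrically, re-cleaning $T_1'$ first would leave the clean region unguarded against the still-dirty $T_2'$. This is exactly the counting obstruction behind Lemma~\ref{lemma:sequence_mon}: each of $T_1'$ and $T_2'$ must be cleaned at a moment when nothing outside it is partially clean. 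Any plan that allows $T_1'$ to become fully recontaminated and defers its repair to the end is therefore unsalvageable with three searchers; the non-monotonicity must be confined to $P$, as in the paper's proof.
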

\begin{proof}
The strategy we describe will use one searcher for each of the three colors.
The strategy first cleans the subtree $T'_{1}$ (we skip an easy description how this can be done) and finishes by cleaning the path connecting $q_0$ with $v_0$. Denote the vertex on the path from $v_0$ to $q_2$ as $v$.

Now we describe how the strategy cleans $T_{l}''$ from $v_{1}$ to $v_{l-1}$.
For each $i\in\{1,\ldots,l\}$, the vertex $v_i$ is incident to edges of colors $i\mod 3+1$ and $(i-1)\mod 3+1$ therefore each move $f_{i}$ uses both searchers of colors $i\mod 3+1$ and $(i-1)\mod 3+1$. By Lemma~\ref{lemma:on_path}, the third searcher which is of color $(i-2)\mod 3+1$, stays on $P$.
Informally, while progressing along $T_{l}''$, the strategy makes recontaminations within the path $P$.

We will define $\progress{j}$, $j\in\{1, \ldots, l-1\}$, as a sequence of consecutive moves which clean edges of colors in $\colV(v_{j})$ in $T_{l}''$ and contains the move of index $f_{j}$. Similarly, we introduce $\reconf{i}(u)$, $i\in\{1,2,3\}$, as a minimal sequence of consecutive moves, such that there is a searcher on some vertex $u$ of $P$ in the first move of $\reconf{i}(u)$ and the searcher of color $i$ is present on $P$ in the last move of $\reconf{i}(u)$. Let $u_b$ be the occupied vertex of $P$ after the last move of $b$-th $\reconf{i}(u)$ in $\cS$. Additionally let  $u_0=v_0$. Clean $T_l''$
by iterating for each $j\in \{1,\ldots, l-1\}$ (in this order) the following: $\reconf{a}(u_{j-1})$, followed by $\progress{j}$, where $a=(j-2)\mod 3+1$.

Because determining moves in $\progress{j}$ is straightforward, as they correspond to those in monotone $\colS$-strategy when $f_{1}<f_{2}<\ldots<f_{l-2}<f_{l-1}$, we focus on describing $\reconf{i}(u)$ for each $i\in\{1,2,3\}$. 
$\reconf{1}(u_0)$  consists of a sliding move from $v_0$ to $v$ and a move which places the searcher of color $3$ on $u_1=v$.
$\reconf{2}(u_1)$  consists of a sliding move from $v$ to $v_0$, which causes recontamination, and a move which places the searcher of color $1$ on $u_2=v_0$.
$\reconf{3}(u_2)$  does not contain any sliding moves and places the searcher of color $2$ on $u_3=v_0$.
Because $a=(j-2)\mod 3+1$ and $u_{j-1}=u_{j+2}$
$\reconf{a}(u_{j})$ is identical to $\reconf{a+3}(u_{j+3})$, thus we can describe a strategy which cleans $T_{l}''$ for any given $l$.

When $T_{l}''$ is clean, the vertex $v_0$ is connected to a clean edge and the remaining edges of path $P$ can be searched without further recontaminations. The strategy cleans subtree $T'_{2}$ in the same way as a monotone one.
 
Note that the proposed strategy requires new recontamination whenever a sequence of $f_{i}$  of length 3 repeats itself.
Thus, this $\colS$-strategy cleaning $T_l$ has $\Omega(l)$ unit recontaminations.
Note that the size of the tree $T_l$ is $\Theta(l)$.
\end{proof}

Lemma~\ref{monotonicity:non-monotone} provides a non-monotone search $\colS$-strategy which succeeds with fewer searcher than it is possible for a monotone one, as shown in lemma~\ref{cant_monotone}, which proves Theorem~\ref{non_mon_theo}.

\begin{theorem}
There exist trees such that each search $\colS$-strategy that uses the minimum number of searchers has $\Omega (n^{2})$ unit recontaminations.
\end{theorem}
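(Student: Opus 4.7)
The plan is to amplify the construction of Lemma~\ref{monotonicity:non-monotone}: the tree $T_l$ there already forces $\Omega(l) = \Omega(n)$ unit recontaminations, and the goal is to turn this linear bound into a quadratic one by making each individual recontamination event affect $\Omega(l)$ edges, rather than only $O(1)$. I would define a modified tree $\hat{T}_l$ of size $n = \Theta(l)$ by replacing the short central path $P$ of $T_l$ with a much longer colored path $\hat{P}$ of length $\Theta(l)$, joined in the middle to $T''_l$ at $v_0$. The coloring of $\hat{P}$ is chosen so that (i) the lower bound $\lb{\hat{T}_l}$ remains equal to $3$, and (ii) for each of the three colors, the vertices of $\hat{P}$ capable of hosting a searcher of that color are concentrated in a region at distance $\Omega(l)$ from $v_0$. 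In this way, whenever the color of the searcher guarding $v_0$ has to change, the guard must reposition itself $\Omega(l)$ edges away on $\hat{P}$.

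To see $\hsn{\hat{T}_l} = 3$, I would describe a non-monotone three-searcher $\tilde{c}$-strategy mimicking the one of Lemma~\ref{monotonicity:non-monotone}: clean $T'_1$, then sweep through $T''_l$ from $v_1$ to $v_{l-1}$ rotating the guard colors on $\hat{P}$ between progress moves, and finally clean $T'_2$. For the lower bound, the first step is to re-derive the analogues of Lemmas~\ref{lemma:sequence_mon}--\ref{lemma:third} for $\hat{T}_l$; these show that any three-searcher strategy is forced to progress through the $\Omega(l)$ moves $f_1<\cdots<f_{l-1}$ (or the reverse) and that the color of the searcher remaining on $\hat{P}$ must cycle through $\{1,2,3\}$, yielding $\Omega(l)$ enforced swap events. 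The new ingredient is that each swap now recontaminates $\Omega(l)$ edges: during the moves separating $f_j$ and $f_{j+1}$ the two non-guarding searchers are pinned near $v_j$ cleaning its incident edges, so the guard must leave its old position on $\hat{P}$ unreplaced; the still-contaminated opposite blocker subtree then floods through the freshly unguarded vertex and sweeps the entire $\Omega(l)$-long clean stretch of $\hat{P}$ before the guard can reach its new color-compatible position. Summing $\Omega(l)$ swaps of $\Omega(l)$ recontaminations each gives $\Omega(l^2) = \Omega(n^2)$.

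The main obstacle is proving that contamination truly covers an $\Omega(l)$-long portion of $\hat{P}$ during each swap, rather than only a constant-size neighborhood of $v_0$. This requires carefully designing the coloring of $\hat{P}$ so that no single searcher can simultaneously play the role of both the old and the new guard, and then tracking, during the moves of the swap, the simultaneous positions of the three searchers to show that no temporary barrier can halt the contamination short of $\Omega(l)$ edges. I expect this to be handled by a slight extension of the case analysis already used for Lemma~\ref{cant_monotone}, rescaled to the enlarged $\hat{P}$, together with the observation that the remaining two searchers are provably unavailable for guarding $\hat{P}$ during each swap because each $f_j$ saturates them at $v_j$.
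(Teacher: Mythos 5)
Your proposal is correct and takes essentially the same route as the paper: both amplify the $T_l$ construction by stretching the central path so that each of the $\Omega(l)$ forced guard-color swaps (already established by Lemmas~\ref{lemma:sequence_mon}--\ref{cant_monotone} and \ref{monotonicity:non-monotone}) recontaminates $\Omega(l)$ edges instead of $O(1)$, yielding $\Omega(n^2)$ unit recontaminations in total. The only difference is in how the stretching is realized: the paper simply subdivides each edge of $P$ into a monochromatic path $P_m$ with $m=\Theta(n)$ vertices (so $\hsn{H_l}=\hsn{T_l}$ and the earlier lemmas carry over essentially verbatim), whereas your redesigned coloring of $\hat{P}$ with color-compatible positions pushed $\Omega(l)$ away from $v_0$ would force you to redo the case analysis of those lemmas from scratch.
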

\begin{proof}
As a proof we use a tree $H_l$ obtained through a modification of the tree $T_l$. In order to construct  $H_l$, we replace each edge on the path $P$ with a path $P_{m}$ containing $m$ vertices, where each edge between them is in the same color as the replaced edge in $T_l$. Clearly  $\hsn{H_l}=\hsn{T_l}$. Note that we can adjust the number of vertices in  $T''_{l}$ and $P_{m}$ of $H_l$ independently of each other.  While the total number of vertices is $n=\Theta(m+l)$, we take $m=\Theta(n)$, $l=\Theta(n)$ in $H_l$.

In order to clean $H_l$, we employ the strategy provided in theorem \ref{monotonicity:non-monotone} adjusted in such a way, that any sliding moves performed on edges of $P$ are replaced by $O(m)$ sliding moves on the corresponding paths of $P_{m}$. As shown previously, the number of times an edge of $P$ in $T_l$, or path $P_{m}$ in $H_l$, which contains $\Theta(m)$ elements, has to be recontaminated  depends linearly on  size of $T''_{l}$.  In the later case the $\colS$-strategy cleaning $H_l$ has $\Omega (ml)= \Omega (n^{2})$ unit recontaminations .
\end{proof}

\section{NP-hardness for trees} \label{sec:hard}

We show that the decision problem $\problemHGS$ is NP-complete for trees if we restrict available strategies to monotone ones.
Formally, we prove that the following problem is NP-complete:
\begin{description}
\item[\emph{Monotone Heterogeneous Graph Searching} Problem] ($\problemMHGS$) \\
Given an edge-labeled graph $G=(V(G),E(G),\colE)$ and an integer $\anum$, does it hold $\mhsn{G}\leq \anum$?
\end{description}
Thus, the rest of this section is devoted to a proof of the following theorem.
\begin{theorem}
The problem $\problemMHGS$ is NP-complete in the class of trees.
\end{theorem}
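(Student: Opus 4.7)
The plan is to show both that $\problemMHGS$ is in NP and that it is NP-hard for trees. Membership in NP is the easier direction: in a monotone strategy each edge becomes clean exactly once and is never recontaminated, so the number of sliding moves is exactly $|E(G)|$; placements and removals that do not change the set of occupied vertices between two sliding moves can be pruned, yielding an overall strategy of length polynomial in $|E(G)|$ and $\anum$. A certificate consisting of the color assignment $\colS$ together with the sequence of moves is then verified in polynomial time by simulating the strategy, checking the color constraints of the moves \ref{move:placing}--\ref{move:sliding}, and confirming that no recontamination occurs and that $\cC_\ell=E(G)$.

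For NP-hardness I would reduce from $\problemSAT$. Given a 3-SAT formula $\phi$ with variables $x_1,\ldots,x_n$ and clauses $C_1,\ldots,C_m$, I would build an edge-labeled tree $\TSAT$ together with an integer $\anum$ so that $\mhsn{\TSAT}\le\anum$ exactly when $\phi$ is satisfiable. The backbone of $\TSAT$ is a path of a dedicated color $\RED$ along which the search necessarily progresses. For each variable $x_i$ one attaches a variable gadget $\NPCvariable{i}$ consisting of a ``true branch'' $\NPCtrue{i}$ and a ``false branch'' $\NPCfalse{i}$ meeting at a junction; each branch is a subtree whose classical search number saturates the budget of searchers of the colors internal to $\NPCvariable{i}$, so by Lemma~\ref{lem:clean1} at most one branch may be in the process of being cleaned at any given moment. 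For each clause $C_j$ one attaches a clause gadget $\NPCclause{j}$ whose edges use the three literal colors occurring in $C_j$ and whose cleaning requires that at least one literal-color searcher has already become free in the corresponding variable gadget. Valve subtrees $\NPCvalve{i}{j}$ glue each clause to the corresponding variable gadget and synchronise the order of cleaning via color constraints. Finally, set $\anum=\lb{\TSAT}$, so that by Lemma~\ref{lem:lower} the budget is the tightest possible.

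Correctness is argued in two directions. Given a satisfying assignment, one constructs a monotone strategy by processing variable gadgets in order along the backbone: for each $x_i$ one first cleans the branch corresponding to its truth value, which frees the appropriate literal-color searcher to be reused in the clause gadgets in which $x_i$ appears with the matching sign; the opposite branch is cleaned afterwards, and monotonicity is preserved because the backbone and the valves block recontamination. Conversely, from any monotone $\anum$-searcher strategy one extracts an assignment by declaring $x_i$ true iff $\NPCtrue{i}$ is cleaned before $\NPCfalse{i}$; tightness of the color budget forces this commitment, and the requirement that each $\NPCclause{j}$ be cleaned without recontamination forces at least one of its literals to have been committed to true. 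The main obstacle is the precise design of $\NPCvariable{i}$, $\NPCclause{j}$ and $\NPCvalve{i}{j}$ so that the color budget is simultaneously tight for every choice of commitments and no cross-gadget rearrangement circumvents the intended correspondence with truth assignments; this must exploit monotonicity as the pivotal constraint (rather than merely the searcher count), since without monotonicity the heterogeneous strategies studied in Section~\ref{sec:monotonicity} would give unintended shortcuts by recontaminating valve edges.
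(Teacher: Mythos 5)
Your overall plan coincides with the paper's: membership in NP via the polynomial length of monotone strategies (the paper leaves this implicit, and your argument for it is fine), and NP-hardness via a reduction from $\problemSAT$ built around a backbone path of color $\RED$, variable gadgets with a true branch and a false branch, clause gadgets keyed to literal colors, and a searcher budget set to the tight lower bound $\lb{\TSAT}$ from Lemma~\ref{lem:lower}. This is exactly the architecture of the paper's construction of $\TSAT$ and of Lemmas~\ref{lemma:colors}, \ref{lemma:strategy} and~\ref{lemma:3SAT}.

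However, there is a genuine gap: the gadgets are never actually constructed, and you acknowledge this yourself by calling ``the precise design of $\NPCvariable{i}$, $\NPCclause{j}$ and $\NPCvalve{i}{j}$'' the main obstacle. That design, together with the chain of structural lemmas showing that \emph{every} monotone strategy within the budget is forced to sweep the backbone in one direction, to clean the areas $A_i$ on one side before touching $A_0$, and to leave exactly one of the searchers of color $\NPCtrue{p}$ or $\NPCfalse{p}$ pinned as a guard while the clause components are processed (the paper's Lemmas~\ref{lemma:sequentialy}--\ref{lemma:guard}), is the entire substance of the theorem; without it the reduction cannot be checked. Two specific concerns beyond the missing construction: first, your extraction of the assignment as ``$x_i$ is true iff $\NPCtrue{i}$ is cleaned before $\NPCfalse{i}$'' is not obviously sound, since what matters is which literal-color searcher is still free at the moment the clean part of $A_0$ reaches the clause components; the paper instead defines $x_p$ by which searcher is \emph{not} guarding the variable component in the move $\first{b}$, and needs Lemma~\ref{lemma:guard} to show that exactly one of them is so committed throughout $[t_0,\first{b}]$. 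Second, the valve colors $\NPCvalve{i}{j}$ are an artifact of the non-monotone construction $\TSATP$ of Section~\ref{sec:nmhard}; in the monotone setting the paper does without them, and introducing them here without a concrete role risks perturbing the tightness of $\lb{\TSAT}$ on which the whole counting argument rests.
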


In order to prove the theorem, we conduct a polynomial-time reduction from Boolean Satisfiability Problem where each clause is limited to at most three literals ($\problemSAT$).
The input to $\problemSAT$ consists of $n$ \emph{variables} $x_{1},\ldots, x_{n}$ and a Boolean formula $C=C_{1}\land C_{2}\land\cdots\land C_{m}$, with each \emph{clause} of the form $C_{i}=(\literal{i}{1} \lor \literal{i}{2} \lor \literal{i}{3})$, where the \emph{literal} $\literal{i}{j}$ is a variable $x_{p}$ or its negation, $\overline{x_{p}}$, $p \in \{1,\ldots,n\}$.
The answer to decision problem is $\YES$ if and only if there exist an assignment of Boolean values to the variables $x_1,\ldots, x_n$ such that the formula $C$ is satisfied.

Given an input to $\problemSAT$, we construct a tree $\TSAT$ which can be searched monotonously by the specified number of searchers if and only if the answer to $\problemSAT$ is $\YES$.
We start by introducing the colors and, informally speaking, we associate them with respective parts of the input:
\begin{itemize}
\item color $\NPCvariable{p}$, $ p\in \{1,\ldots,n\}$, represents the variable $x_{p}$,
\item color $\NPCfalse{p}$ (respectively $\NPCtrue{p}$), $ p\in \{1,\ldots,n\}$, is used to express the fact that to $x_{p}$ may be assigned the Boolean value \emph{false} (\emph{true}, respectively),
\item color $\NPCclause{d}$, $d\in \{1,\ldots,m\}$, is associated with the clause $C_{d}$.
\end{itemize}
We will also use an additional color to which we refer as $\RED$.
 
We denote the set of all above colors by $\cQ$. Note that $\left\vert{\cQ}\right\vert =3n+m+1$.
In our reduction we set $\anum=\left |\cQ  \right |+1+m$ to be the number of searchers.

The construction of the tree starts with a path $P$ of color $\RED$ consisting of $l=4n+3m+4+1$ vertices $v_{i}, i\in\{1,2, \ldots,l \}$. We add 2 pendant edges of color $\RED$ to both $v_1$ and $v_l$.
Define a subgraph $H_{z}$ (see Figure~\ref{fig:npc-ab}(a)) for each color  $z \in \cQ \setminus \{\RED\}$: take a star of color $\RED$ with three edges, attach an edge $e$ of color $z$ to a leaf of the star and then attach an edge $e'$ of color $\RED$ to $e$, so that the degree of each endpoint of $e$ is two.
\begin{figure}[!ht]
\centering
\includegraphics[scale=0.75]{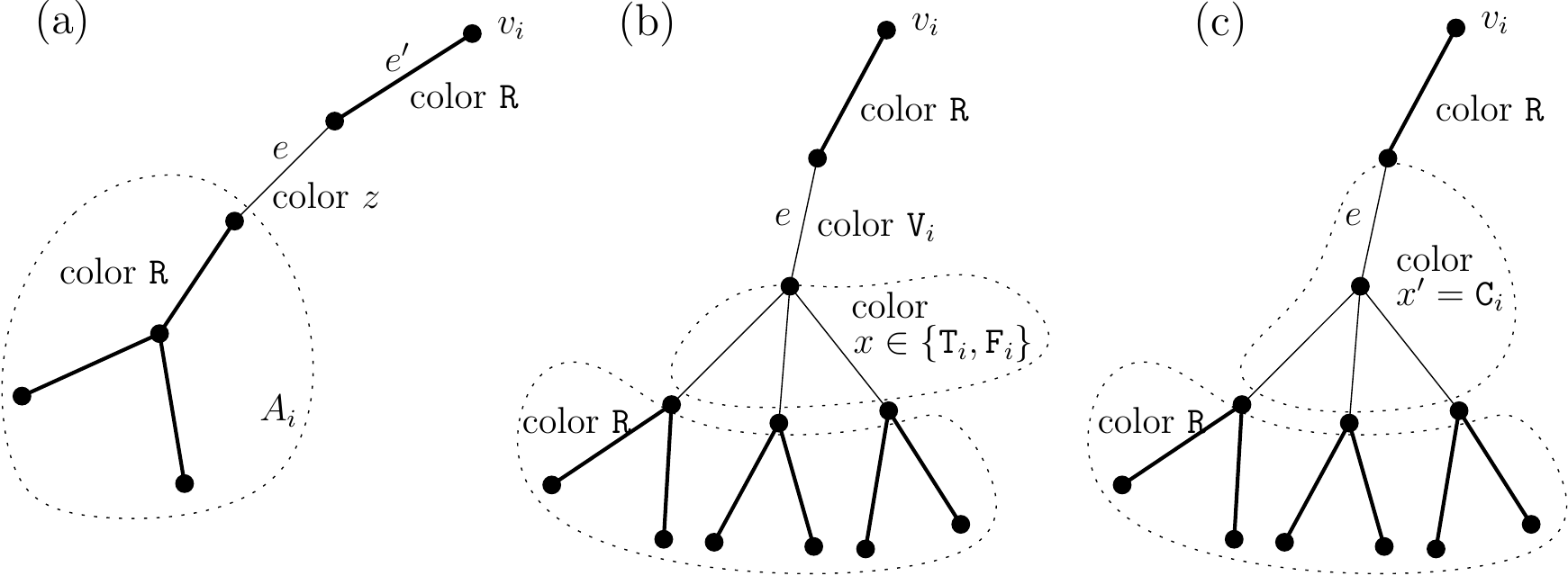}
\caption{Construction of $T$: (a) the subgraph $H_z$; (b) the subgraph $L_{x}$; (c) the subgraph $L'_{x'}$}
\label{fig:npc-ab}
\end{figure}
For each $z\in \cQ\setminus(\{\RED\}\cup\{\NPCclause{1},\ldots, \NPCclause{m}\}) $  take a subgraph $H_{z}$ and join it with $P$ in such a way that the endpoint of $e'$ of degree one in $H_{z}$ is identified with a different vertex in $\{v_{2},\ldots, v_{a}\}$, $a=3n+2m+1$.
For each $z\in \{\NPCclause{1},\ldots, \NPCclause{m}\}$   take two copies of $H_{z}$ and identify each  endpoint of $e'$ of degree one in $H_{z}$ with a different vertex in $\{v_{2},\ldots, v_{a}\}$, which has no endpoint of $e'$ attached to it yet.
The above attachments of the subgraphs $H_z$ are performed in such a way that the degree of $v_{i}$ is three for each $i \in\{2,\ldots,a\}$ (see Figure~\ref{fig:npc-d}). We note that, except for the requirement that no two subgraphs $H_z$ are attached to the same $v_i$, there is no restriction as to which $H_z$ is attached to which $v_i$. The star of color $\RED$ in the subgraph $H_{z}$ attached to the vertex $v_{i}$ is denoted by $A_{i}$. 

For each color $x$ in $X=\{\NPCtrue{i}, \NPCfalse{i}\st i\in\{1,\ldots,n\}\}$ we define a subtree $L_{x}$ (see Figure~\ref{fig:npc-ab}(b)). We start with a root having a single child and an edge of color $\RED$ between them. Then we add an edge $e$ of color $\NPCvariable{i}$ to this child, where $i$ is selected so that it matches $x$ which is either $\NPCtrue{i}$ or $\NPCfalse{i}$. The leaf of $e$ has three further children attached by edges of color $x$. We finish by attaching 2 edges of color $\RED$ to each of the three previous children. 
For each color $x'\in X'=\{\NPCclause{1}, \ldots, \NPCclause{m}\} $  construct a subtree $L'_{x'}$ (see Figure~\ref{fig:npc-ab}(c)) in the same shape but colored in a different way. The edges of color different than $\RED$ in the construction of $L_{x}$ are replaced by edges of color $x'$. We draw attention to the fact that $L'_{x'}$ contains an area of color $x'$ that is a star with four edges. 
We attach to the path $P$ five copies of subtree $L_{x}$ for each  $x\in X$ and five copies of $L'_{x'}$ for each $x'\in X'$ by unifying their roots with the vertex $v_{a+1}$ of $P$ (see Figure~\ref{fig:npc-d}).

We attach  five further  copies
 of $L'_{x'}$ for each $x'\in X'$ by unifying their roots with the vertex $v_{l-1}$ of $P$.
\begin{figure}[!ht]
\centering
\includegraphics[scale=0.7]{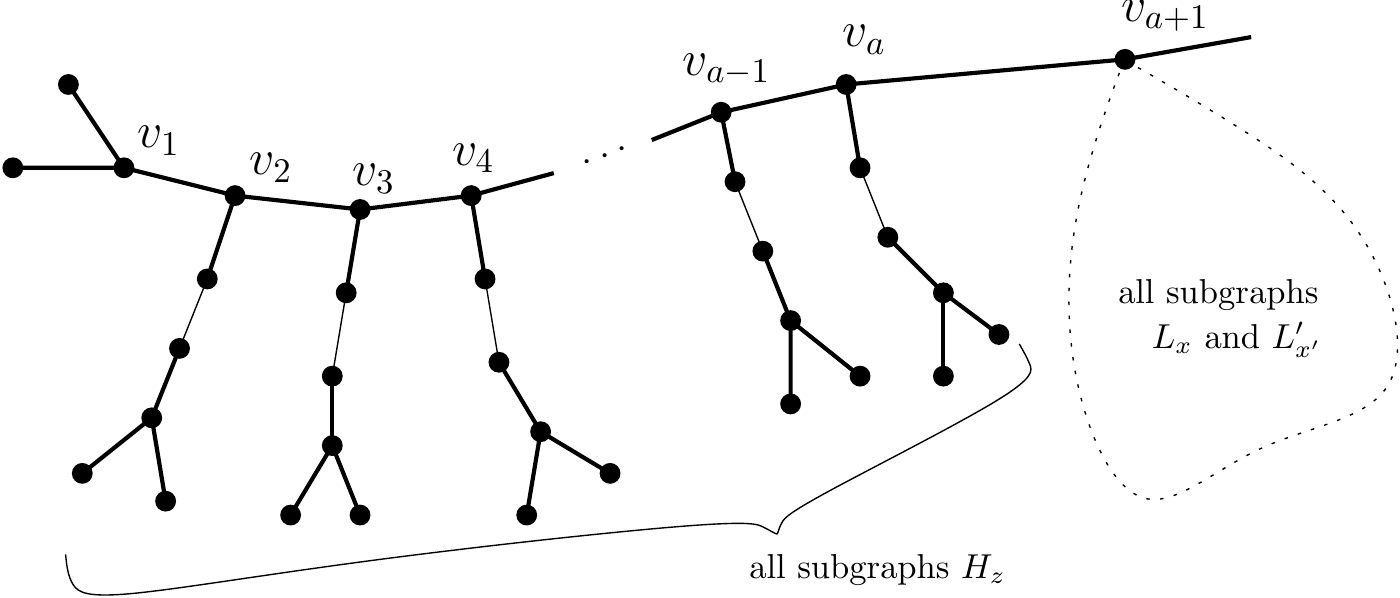}
\caption{Construction of $T$: attachment of subgraphs $H_z$ and the subgraphs $L_x$ and $L'_{x'}$ to the path $P$}
\label{fig:npc-d}
\end{figure}

For each variable $x_{p}$ we construct two subtrees, $S_{p}$ and $S_{-p}$, in the following fashion (see Figure~\ref{fig:npc-cd}(a)): take a star of color $\RED$ with three edges and  attach an endpoint of a path with four edges to a leaf in this star of color $\RED$; the consecutive colors of the path, starting from the endpoint at the star of color $\RED$ are: $\NPCvariable{p}$, $\RED$, $\NPCfalse{p}$, $\RED$ in $S_{-p}$ and $\NPCvariable{p}$, $\RED$, $\NPCtrue{p}$, $\RED$ in $S_{p}$.
For each subtree $S_{p}$ and $S_{-p}$, $p\in\{1,\ldots, n\}$ attach the endpoint of its path of degree one to $v_{a+1+p}$ .
\begin{figure}[!ht]
\centering
\includegraphics[scale=0.75]{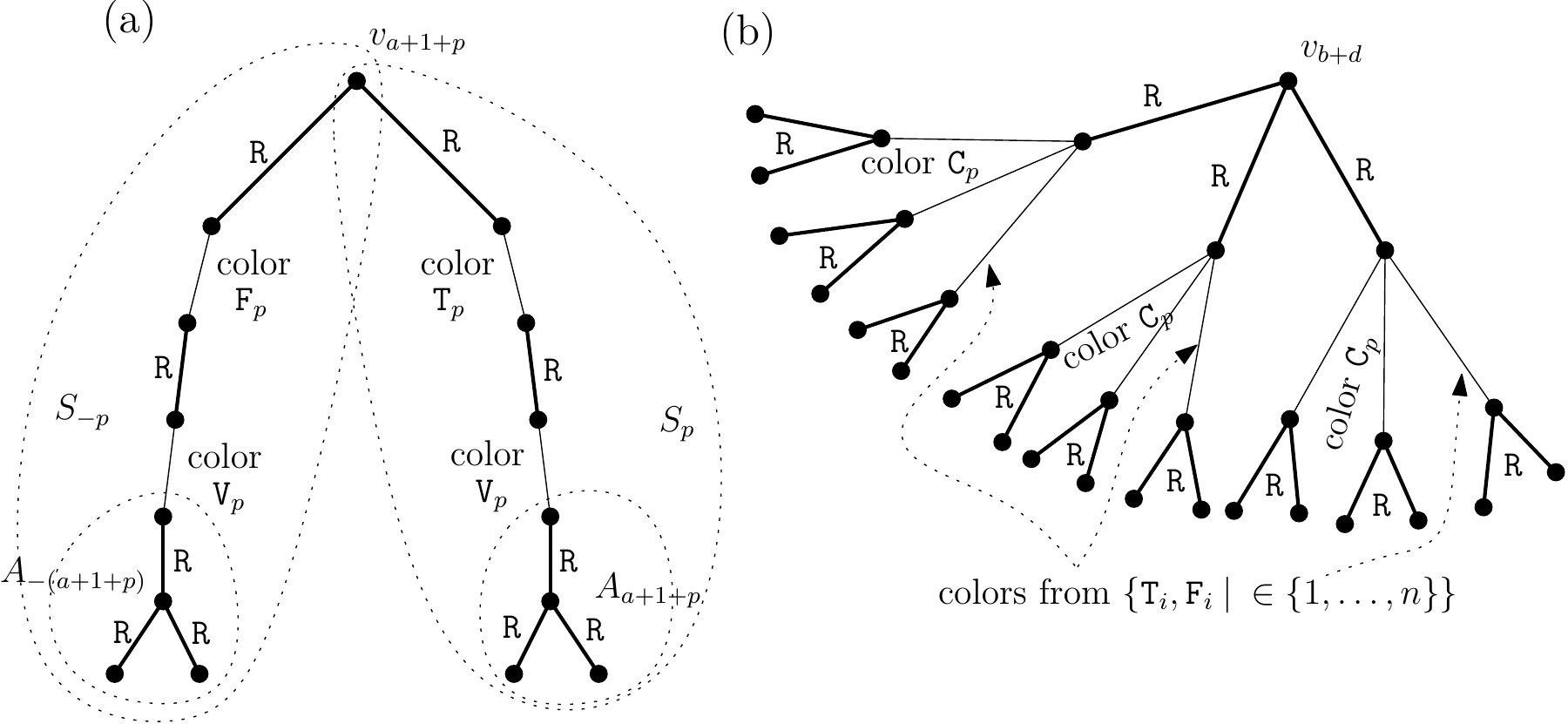}
\caption{Construction of $T$: (a) the variable component constructed from $S_{-p}$ and $S_p$; (b) the clause component that corresponds to $C_d$.}
\label{fig:npc-cd}
\end{figure}
The star of color $\RED$ in  $S_{p}$ attached to $v_{i}$ is denoted by $A_{i}$ and the one in $S_{-p}$ by $A_{-i}$. 

For each clause  $C_{d}, d\in\{1,\ldots, m\}$ we attach three subtrees $L_{d, j}, j\in \{1,2,3\}$, to the vertex $v_{b+d}$, where $b=4n+2m+3 $, one for each literal $\literal{d}{j}$ (see Figure~\ref{fig:npc-cd}(b)). Note that the maximal value of $b+d$ is $l-2$.  We construct $L_{d, j}$ by taking an edge $e$ of color $\RED$ and adding three edges to its endpoint: two of color $\NPCclause{i}$, and one either of color $\NPCtrue{p}$ if $\literal{d}{j}=x_{p}$ or of color $\NPCfalse{p}$ if $\literal{d}{j}=\overline{x_{p}}$. Add two children by the edge of color $\RED$ to each of these three edges. Then attach the endpoint of degree one of the edge  $e$ in $L_{d,j}$ to $v_{b+d}$. We attach a single edge of color $\RED$ to $v_b$. The tree obtained through this construction will be denoted by $\TSAT$. 

The area of color $\RED$ which contains the path $P$ is denoted by $A_{0}$. 
Notice that all areas $A_{i}$ of color $\RED$ have search number two, $\sn{A_{i}}=2$.
For a search strategy for $\TSAT$, we denote the index of the first move in which all searchers  of color $\RED$ are in the area $A_{i}$ as \emph{step} $t_{i}$ and the index of the last such move as \emph{step} $t'_{i}$, $i \in I= \{ 2,\ldots,a\}  \cup \{a+2,\ldots,b-1 \} \cup \{-(b-1),\ldots,-(a+2) \}\cup \{0\}$. Let $R=\{ a+2,\ldots,b-1 \} \cup \{-(b-1),\ldots,-(a+2) \}$ and $L=I\setminus\{R \cup \{0\}\}=\{ 2,\ldots,a\}$ be the two sets which cover all indices of areas $A_{a}: a\in I\setminus \{0\}$. Note that by definition $a+1\not\in L$ and $a+1\not\in R$, and the path from $v_{1}$ to $v_{a+1}$ contains no vertex $v_{j}, j\in R$. Similarly, the path from $v_{a+1}$ to  $v_{l}$ contains no vertex $v_{i}, i \in L$. Informally, we divide the indices in $I\setminus\{0\}$ into two sets: $L$ to the left of  $v_{a+1}$  and $R$ to the right. 

\begin{lemma}[Color assignment]\label{lemma:colors}
A search $\colS$-strategy using $\anum=3n+2m+2$ searchers has to color them in the following fashion: one searcher for each color in $\{\NPCtrue{p},\NPCfalse{p},\NPCvariable{p}\st p\in\{1,\ldots,n\}\}$ and two searchers for each color in $\{\RED\}\cup\{\NPCclause{1}, \ldots, \NPCclause{m}\}$.
\end{lemma}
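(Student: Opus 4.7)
The plan is to prove the color distribution by showing that the lower bound $\lb{\TSAT}$ from Lemma~\ref{lem:lower} coincides with $\anum=3n+2m+2$ exactly; once this is established, the per-color bounds provided by Lemma~\ref{lem:clean1} must all be attained with equality, which is precisely the statement of the lemma. The main work therefore reduces to computing, for each color $z\in\cQ$, the quantity $\max\{\sn{H}\st H\in\areas{\TSAT},\ c(H)=z\}$.

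First, I would enumerate the areas using Fact~\ref{fact:leaves}: any junction is a leaf of every area containing it, and consequently two same-colored edges sharing a junction endpoint must lie in distinct areas. For color $\RED$ the candidates of interest are the area $A_0$ (the path $P$ together with all the pendant $\RED$-edges hanging from its spine), which is a caterpillar with vertices of degree~$3$, and the stars $A_i$ of three $\RED$-edges appearing inside each $H_z$, $S_p$ and $S_{-p}$; standard edge-search facts give $\sn{A_0}=\sn{K_{1,3}}=2$, so the maximum is $2$. For each $\NPCvariable{p}$, every such edge has both endpoints as junctions (one of type $\RED/\NPCvariable{p}$ and the other of type $\NPCvariable{p}/\NPCtrue{p}$ or $\NPCvariable{p}/\NPCfalse{p}$), so each area is a single edge and the maximum is $1$.

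The decisive point is the contrast between the two superficially similar gadgets $L_x$ and $L'_{x'}$. In $L_x$ (for $x\in\{\NPCtrue{p},\NPCfalse{p}\}$), the three $x$-edges meet at the vertex $c_2$, but $c_2$ also carries the color $\NPCvariable{p}$ coming from the edge $e$, so $c_2$ is a junction; by Fact~\ref{fact:leaves} the would-be star splits into three separate single-edge areas, and the remaining $x$-edges inside $S_p$, $S_{-p}$ and $L_{d,j}$ are already single edges, so the maximum is $1$. In $L'_{x'}$ the analogue of the edge $e$ is recoloured to $x'$, so the corresponding central vertex has \emph{only} the color $x'$ and is not a junction; the four $x'$-edges therefore form a single $K_{1,4}$-area of search number $2$, matching the explicit remark made during the construction. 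Summing the contributions gives
\[ \lb{\TSAT}=2+n+n+n+2m=3n+2m+2=\anum. \]

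Finally, Lemma~\ref{lem:clean1} yields $\agentsInS{\colS}{\RED}{\cS}\geq 2$, $\agentsInS{\colS}{\NPCclause{d}}{\cS}\geq 2$ for every $d\in\{1,\ldots,m\}$, and $\agentsInS{\colS}{z}{\cS}\geq 1$ for every $z\in\{\NPCtrue{p},\NPCfalse{p},\NPCvariable{p}\st p\in\{1,\ldots,n\}\}$; since these per-color lower bounds already sum to $\anum$, any $\colS$-strategy with exactly $\anum$ searchers must achieve each of them with equality, giving the claimed color assignment. I expect the main obstacle to be the area bookkeeping: one must carefully verify that no larger $\RED$-area than the caterpillar $A_0$ is hiding in the construction, and one must take advantage of the subtle fact that the presence of the $\NPCvariable{p}$-labelled edge in $L_x$ is \emph{precisely} what turns $c_2$ into a junction and demotes the would-be $K_{1,3}$ into three isolated single-edge areas, whereas relabelling that same edge to $x'$ in $L'_{x'}$ preserves $c_2$ as a non-junction and the $K_{1,4}$ as a genuine area of search number $2$.
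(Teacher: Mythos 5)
Your proposal is correct and follows essentially the same route as the paper: compute $\lb{\TSAT}=3n+2m+2$ via the per-area bounds of Lemma~\ref{lem:clean1}, observe that this matches $\anum$, and conclude that every per-color lower bound is attained with equality. Your area bookkeeping is in fact more detailed than the paper's (which simply asserts the per-color counts), and your identification of the $x'$-colored area in $L'_{x'}$ as a star with four edges agrees with the construction text; either way its search number is $2$, so the count is unaffected.
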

\begin{proof}
We first compute the lower bound $\lb{\TSAT}$.
By Lemma~\ref{lem:clean1}, at least $3n$ searchers take colors 
 $\NPCfalse{p}$, $\NPCtrue{p}$ and $\NPCvariable{p}, p\in\{1,\ldots,n\}$.
Recall that $L'_{x'}$ contains as a subgraph an area $T'$ of color $x'\in\{\NPCclause{1}, \ldots, \NPCclause{m}\}$ that is a star with three edges and hence $\sn{T'}=2$.
Since there are $m$ such subtrees $L'_{x'}$, $2m$ searchers receive colors $\NPCclause{1}, \ldots, \NPCclause{m}$.
The last two searchers have to be of color $\RED$ in order to clean areas $A_{i}, i\in I$.
Thus, we have shown that $\lb{\TSAT}\geq 3n+2m+2$ and this lower bound is met by the assignment of colors to searchers, as indicated in the lemma.
Using Lemma~\ref{lem:lower} we complete the proof.
\end{proof}

\begin{lemma}\label{lemma:strategy}
Let $x_{1},\ldots, x_{n}$ and a Boolean formula $C=C_{1}\land C_{2}\ldots\land C_{m}$ be an input to $\problemSAT$.
If the answer to $\problemSAT$ is $\YES$, then there exists a search $\tilde{c}$-strategy using $2+3n+2m$ searchers for $\TSAT$.
\end{lemma}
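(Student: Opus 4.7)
The plan is to exhibit a monotone search $\colS$-strategy for $\TSAT$, guided by a satisfying assignment $\sigma$ for the input \problemSAT\ formula, that uses exactly the $3n+2m+2$ searchers described in Lemma~\ref{lemma:colors}: one searcher for each color in $\{\NPCtrue{p},\NPCfalse{p},\NPCvariable{p}\st p\in\{1,\ldots,n\}\}$, two searchers of color $\RED$, and two searchers of each color $\NPCclause{d}$, $d\in\{1,\ldots,m\}$. The overall idea is that the two $\RED$ searchers sweep the path $P$ from $v_1$ toward $v_l$, and every subtree hanging off $P$ is cleaned in isolation at the moment its attachment vertex is visited, using the appropriate non-$\RED$ searchers together with one of the $\RED$ searchers that temporarily leaves $P$.

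After cleaning the two $\RED$ pendants at $v_1$, I would advance from $v_2$ to $v_a$ in order and, at each $v_i$, clean the attached $H_z$: the $\RED$ star $A_i$ (with $\sn{A_i}=2$) is swept by the two $\RED$ searchers and the lone $z$-edge by a searcher of color $z$, after which both $\RED$ searchers return to $P$. At $v_{a+1}$ I would handle in sequence all copies of $L_x$, $x\in X$, and of $L'_{x'}$, $x'\in X'$: cleaning one $L_x$ uses one $\RED$ together with the $\NPCvariable{p}$ searcher and the single $\NPCtrue{p}$ or $\NPCfalse{p}$ searcher, while cleaning one $L'_{x'}$ uses one $\RED$ together with both $\NPCclause{d}$ searchers, which is necessary because $L'_{x'}$ contains an area of color $x'$ that is a star of search number $2$. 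At each variable vertex $v_{a+1+p}$ I would clean $S_p$ and $S_{-p}$ successively, borrowing the $\NPCvariable{p}$ and then the colored searchers $\NPCtrue{p}$ and $\NPCfalse{p}$ in turn; the two subtrees are structurally identical and disjoint, so the order within this step is immaterial.

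The satisfying assignment $\sigma$ is invoked at the clause vertices $v_{b+d}$. Since $\sigma$ satisfies $C_d$, at least one literal $\literal{d}{j}$ is true; I would begin by cleaning the subtree $L_{d,j}$ corresponding to that literal and then clean the remaining two literal-subtrees in arbitrary order. Each $L_{d,j}$ is cleaned using one $\RED$ searcher, both $\NPCclause{d}$ searchers (needed at the internal junction with two $\NPCclause{d}$-edges), and the unique $\NPCtrue{p}$ or $\NPCfalse{p}$ searcher matching its literal; the remaining non-$\RED$ searchers of those colors are idle at this point because the variable and non-clause components have already been fully processed.

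The step I expect to be most delicate is maintaining monotonicity during the sweep of $P$: whenever one $\RED$ searcher leaves $P$ to enter a hanging subtree, the other must guard the boundary vertex between the cleaned and the still-dirty portions of $P$, so that no clean edge of $P$ can be recontaminated. This is feasible because every hanging subtree joins $P$ through a single $\RED$-edge, so a single stationary $\RED$ searcher suffices as a separator, which is what allows the overall strategy to remain monotone even though only two $\RED$ searchers are available.
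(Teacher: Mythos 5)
There is a genuine gap, and it is located exactly where the satisfying assignment should be doing work. You claim that when the sweep of $P$ reaches a variable vertex $v_{a+1+p}$ you can ``fully process'' $S_p$ and $S_{-p}$ on the spot, so that by the time the clause vertices are reached every searcher of color $\NPCtrue{p}$ or $\NPCfalse{p}$ is idle. This is impossible with the searcher budget of Lemma~\ref{lemma:colors}. Each of $S_p$ and $S_{-p}$ terminates in a star $A_{\pm(a+1+p)}$ of color $\RED$ with $\sn{A_{\pm(a+1+p)}}=2$, so by Lemma~\ref{lem:clean1} cleaning it requires both searchers of color $\RED$ to be inside that star simultaneously. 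But at that moment a nonempty clean portion of $A_0$ must be separated from the contaminated remainder, and every vertex of the relevant cut ($v_{a+1+p}$ and its neighbours on $P$, all of whose incident edges are $\RED$) admits only a $\RED$ searcher. Hence one $\RED$ searcher is pinned to $P$ and only one is free --- the far-end stars of $S_p$ and $S_{-p}$ cannot be cleaned mid-sweep. The same objection applies to your treatment of the subtrees $H_z$ at $v_2,\ldots,v_a$, each of which also ends in a $\RED$ star of search number two; this is why the paper postpones those stars until $A_0$ is entirely clean and both $\RED$ searchers are released.

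The paper escapes this trap by cleaning the stars of the variable components \emph{before} any edge of $A_0$ is cleaned (so both $\RED$ searchers are free), and then leaving, for each $p$, the searcher $\NPCvariable{p}$ together with exactly one of $\NPCtrue{p}$, $\NPCfalse{p}$ --- the one chosen according to the truth value of $x_p$ --- as permanent guards inside $S_p\cup S_{-p}$ for the whole sweep. Consequently, when the sweep (run from $v_l$ toward $v_1$) reaches a clause vertex, only the searchers of colors corresponding to \emph{true} literals are available, one literal subtree per clause is cleaned, and the other literal subtrees are merely guarded by the two $\NPCclause{d}$ searchers until the variable components are finished. Your version never actually uses satisfiability: if your accounting were correct, the same strategy would clean $\TSAT$ for an unsatisfiable formula, contradicting Lemma~\ref{lemma:3SAT} and collapsing the reduction. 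That contradiction is the clearest sign that the ``all colored searchers are idle at clause time'' step cannot be repaired within your left-to-right, clean-everything-as-you-pass ordering.
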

\begin{proof}
We first note the main point as to how a Boolean assignment provides the corresponding search strategy.
Whether a variable is true or false, this dictates which searcher, either of color $\NPCfalse{p}$ or $\NPCtrue{p}$, is placed in the corresponding variable component.
The vertices occupied by these searchers form a separator that disconnects $A_0$ from areas $A_i$, $i\in R$.
The strategy cleans first the latter areas that are protected from recontamination.
As a result, all $A_i$, $i\in R$, become clean.
Then, $A_0$ is cleaned with the clause components along the way: here the fact that the initial Boolean assignment was satisfied ensures that searchers of appropriate colors are available to clean the subsequent clause components.

Suppose that a Boolean assignment to the variables satisfies $C$. The strategy is described as a sequence of instructions.
\begin{enumerate}
\item\label{s:step1} We start by placing a searcher of color dictated by the Boolean assignment in each variable component.
For each $S_{p}$ (respectively $S_{-p}$), place a searcher of color $\NPCtrue{p}$ (respectively $\NPCfalse{p}$) on the vertex that is incident to the edge of color $\NPCtrue{p}$ and does not belong to $A_{0}$ if $x_{p}$ is \emph{false} (respectively \emph{true}).
\item \label{s:step2}  Then, clean $A_{a+1+p}$ (respectively $A_{-(a+1+p)}$) and then the edges in $S_{p}$ (respectively $S_{-p}$) that connect this star of color $\RED$ to the vertex guarded by the searcher of color $\NPCtrue{p}$ (respectively $\NPCfalse{p}$).
Note that this cleaning uses two searchers  of color $\RED$ and the searcher of color $\NPCvariable{p}$.
Then, place the searcher of color $\NPCvariable{p}$ on the vertex that belongs to the edge of color $\NPCvariable{p}$ in $S_{-p}$ (respectively $S_{p}$) and area $A_{-(a+1+p)}$ (respectively $A_{a+1+p}$).
Clean $A_{-(a+1+p)}$ (respectively $A_{a+1+p}$).

By repeating the above for each index $p$, we in particular obtain that all areas $A_i$, $i\in R
$ are clean. Note that if $x_p=false$ (respectively $x_p=true$), then the searcher of color $\NPCtrue{p}$ (respectively $\NPCfalse{p}$) stays in $S_{p}$ (respectively $S_{-p}$) and the searcher of color $\NPCfalse{p}$  (respectively $\NPCtrue{p}$) is available. Let $X$ denote the colors of available searchers among those in colors $\NPCfalse{p}$ and $\NPCtrue{p}$. 

\item\label{s:step3} Then we start cleaning  $A_{0}$ from the vertex $v_{l}$ and move towards $v_{b}$. Clean copies of $L''_{\NPCclause{d}}, d\in\{1,\ldots, m\}$ attached to  $v_{l-1}$.
Consider each approached vertex $v_{h}, h \in \{b+1,b+2,\ldots, l-2\}$, and its three subtrees $L_{d, i}, i\in\{1,2,3\},d \in \{1,\ldots,m\}$ separately. We denote the color different than $\RED$ and $\NPCclause{d}$ in $L_{d, i}$ as $x_{d, i}$. Because $C$ is satisfied, at least one of the literals in each clause is true and for each  $d \in \{1,\ldots,m\}$  there always exists $L_{d, i}$ such that $x_{d, i}$ matches the color of the searcher not assigned to neither $S_{p}$ nor $S_{-p}$, i.e., $x_{d, i}\in X$. Clean each such $L_{d, i}$ by using searchers of color $\NPCclause{d}$, $\RED$ and $x_{d,i}$.
Hence, at this point each $L_{d,i}$ for which the literal $\literal{d}{i}$ is satisfied in $C$ is clean.
Place the two searchers of color $\NPCclause{d}$ in each remaining contaminated $L_{d, i}$ on vertex belonging to $A_{0}$. Because at least one  subtree $L_{d, i}$ is clean for each $d\in\{1,\ldots,m\}$, two searchers of color $\NPCclause{d}$ are sufficient. Then, continue cleaning $A_{0}$ towards the next $v_{h}$. 

\item\label{s:step4} We now describe the moves of the search strategy performed once a vertex $v_j$, $j \in R$, is reached while cleaning $A_0$.

Clean all remaining contaminated edges of $S_{p}$ and $S_{-p}$ rooted in $v_j$. Remove the searchers of colors $\NPCtrue{p}, \NPCfalse{p}, \NPCvariable{p}$ 
when they are no longer necessary.

\item Once $v_{a+1}$ has been reached, clean all remaining contaminated subtrees $L_{d, i}$ (it follows directly from previous steps that searchers of appropriate colors are available) and perform moves as in colorless strategy with the addition of necessary switching of searchers on vertices with multiple colors. Repeat this strategy for each $L_{x}, x \in\{ \NPCtrue{i}, \NPCfalse{i} \st i\in \{1,\ldots, n\}\}$, and $L'_{x'}, x' \in \{\NPCclause{1},\ldots,\NPCclause{m}\}$.

\item For each color $z\in\cQ\setminus\{\RED\}$ set a searcher of color $z$ on the common vertex of the subtree $H_{z}$ and $A_{0}$. Then clean all edges incident to each vertex $v_{i}, i\in L$, which finishes cleaning $A_{0}$. The finishing touches of our strategy are simple.
Once $A_{0}$ is clean, the remaining contaminated parts of subtrees $H_{z}$, containing $A_{i}$, $i\in L$, can be searched in an arbitrary sequence.
\end{enumerate}
\end{proof}

Now we will give a series of lemmas that allow us to prove the other implication, namely that a successful monotone strategy implies a valid solution to $\problemSAT$.
The next lemma says that between the first and last moves when all searchers of color $\RED$ are in an area $A_i$, no move having all searchers of color $\RED$ in a different area $A_j$ is possible.
The proof is due to a counting argument.
\begin{lemma} \label{lemma:sequentialy}
No step $t_{j}$ can occur between any two steps $t_{i}, t'_{i}$:
\[\left[ t_{i}, t'_{i} \right] \cap \left[t_{j}, t'_{j} \right]=\emptyset, \quad i\neq j.\]
\end{lemma}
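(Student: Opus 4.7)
My approach is by contradiction. Suppose $[t_i, t'_i]$ and $[t_j, t'_j]$ overlap for some $i \neq j$, so WLOG $t_i \leq t_j \leq t'_i$. Since distinct $\RED$-coloured areas of $\TSAT$ are vertex-disjoint (any two are separated in the construction by a non-$\RED$ edge), at step $t_j$ both $\RED$ searchers lie in $A_j$ while $A_i$ contains no $\RED$ searcher.

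The central tool is a structural property of monotone strategies: whenever a $\RED$-coloured area $A_k$ contains no $\RED$ searcher at some move, $A_k$ is either entirely clean or entirely contaminated at that move. Every non-junction vertex of $A_k$ -- the centre of the three-edge star when $k \ne 0$, or any interior vertex $v_r$ of the path $P$ when $k = 0$ -- is incident only to edges of colour $\RED$, so no searcher of another colour can ever occupy and guard it. If $A_k$ were partially clean while no $\RED$ searcher were present, some such vertex would be incident to both a clean and a contaminated edge, forcing recontamination and violating monotonicity. Applying this to $A_i$ at step $t_j$, either $A_i$ is fully clean or $A_i$ is fully contaminated at $t_j$.

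I then finish by a case analysis combined with counting. In the fully-contaminated case, monotonicity (clean edges never disappear) forces $A_i$ to have been fully contaminated already at $t_i$, so the visit by both $\RED$ searchers at step $t_i$ produced no cleaning in $A_i$; the actual monotone cleaning of $A_i$, which requires both $\RED$ searchers together inside $A_i$ for more than one move because $\sn{A_i}=2$, must therefore be squeezed entirely into $[t_j, t'_i]$ alongside the cleaning of $A_j$, and a count of the available $\RED$-moves in that sub-interval rules this out. In the fully-clean case, the non-$\RED$ guardian of the colour of the edge leaving $A_i$ must remain stationed at the junction of $A_i$ as long as that edge is contaminated, and an analogous commitment holds at the junction of $A_j$; cataloguing these commitments together with the two $\RED$ searchers in $A_j$ and the guardians already pinned at junctions of previously-cleaned $\RED$ areas exceeds the searcher budget $3n + 2m + 2$ guaranteed by Lemma~\ref{lemma:colors}. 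The main technical obstacle is precisely this last bookkeeping: one must carefully identify which non-$\RED$ searchers are simultaneously committed at the overlap instant.
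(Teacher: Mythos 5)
Your opening two paragraphs reproduce the paper's argument almost exactly: by Lemma~\ref{lemma:colors} there are only two searchers of color $\RED$, the $\RED$-areas of $\TSAT$ are pairwise vertex-disjoint, and every internal vertex of a $\RED$-area carries only the color $\RED$, so a partially clean $\RED$-area must host a $\RED$ searcher; since at move $t_j$ both $\RED$ searchers sit in $A_j$, a partially clean $A_i$ gives an immediate contradiction. The paper stops there, asserting outright that $A_i$ is partially clean at every move of $[t_i,t_i']$. You instead observe (correctly) that the dichotomy leaves two further possibilities, $A_i$ fully clean or fully contaminated at $t_j$, and it is in handling these that your proof breaks down.

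Neither of your two remaining cases is actually closed. In the fully-contaminated case, ``a count of the available $\RED$-moves in that sub-interval'' proves nothing: $[t_j,t_i']$ may contain arbitrarily many moves, and nothing you say prevents the strategy from finishing $A_j$ entirely during $[t_j,t_j']$, leaving it guarded at its single junction by a searcher of the adjacent non-$\RED$ color, and only afterwards bringing both $\RED$ searchers into $A_i$ to clean it --- a scenario with $[t_j,t_j']\subseteq[t_i,t_i']$ that your argument does not exclude. In the fully-clean case the claimed budget overflow is never computed, and for a single pair $(i,j)$ it does not occur: a fully clean $A_i$ is separated from the rest of $\TSAT$ by one junction, so a single searcher of the adjacent color suffices to guard it while both $\RED$ searchers work in $A_j$, comfortably within the $3n+2m+2$ searchers of Lemma~\ref{lemma:colors}. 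To close these cases you would need to show that the move $t_i$ already witnesses genuine partial cleaning of $A_i$ (which is precisely what the paper's one-line assertion takes for granted); as written, your proof establishes the conclusion only in the partially-clean sub-case.
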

\begin{proof}
The proof is by contradiction.
First note that if $t_{i}'=t_{j}$ for any $i\neq j$, that would imply that four searchers of color $\RED$ are present in a graph at once: two in $A_i$ and two in $A_j$. Hence, we suppose for a contradiction that there exists $j\neq i$ such that $t_i<t_j<t_i'$ or $t_i<t_j'<t_i'$. Consider a step $t\in \left[t_{i}, t'_{i} \right]$. At least one searcher  of color $\RED$ is in $A_{i}$ because it is partially clean. If $t\in \{t_{j}, t'_{j}\}$, then there are two red  searchers in $A_{j}$, which contradicts color composition imposed by Lemma~\ref{lemma:colors}.
\end{proof}

We say that a subtree $T'$ is \emph{guarded} by a searcher $q$ on a vertex $v$ if removal of $q$ leads to recontamination of an edge in $T'$. Note that $v$ does not have to belong to $T'$, or in other words, $T'$ is any subtree of the entire subgraph that becomes recontaminated once the searcher $q$ is removed. We extend our notation to say that $T'$ is \emph{guarded from} $T''$ if $T''$ is contaminated and removal of $q$ produces a path that is free of searchers and connects a node of $T'$ with a node of $T''$.

Informally, the next lemma states the following.
Prior and after the moves that have all searchers of color $\RED$ on $A_0$, there must be moves having all searchers of color $\RED$ on some $A_i$, $i\neq 0$.
The argument is due to the fact that we do not have sufficiently many searchers, in total, to guard $A_0$ from all other $A_i$'s (or, conversely, all other $A_i$'s from $A_0$).
\begin{lemma}\label{lemma:middle} The step $t_{0}$ cannot be the first one and $t'_{0}$ cannot be the last one in a sequence of steps containing each $t_i$, $i\in I$, i.e., $\min\{t_i\st i\in I\}<t_0\leq t_0'<\max\{t_i\st i\in I\}$.
\end{lemma}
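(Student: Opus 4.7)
I would prove both inequalities by contradiction via a counting argument pitting the number of RED areas against the number of non-RED searchers. Writing $N := |I \setminus \{0\}| = 5n+2m$ for the number of RED areas $A_j$ different from $A_0$ (namely the $3n+2m$ stars inside the subtrees $H_z$ attached to $v_2, \ldots, v_a$, together with the $2n$ stars inside the variable components $S_p$ and $S_{-p}$), each such $A_j$ sits in a distinct connected component $B_j$ of $\TSAT \setminus A_0$ that is attached to $A_0$ via a single junction $u_j$. By Lemma~\ref{lemma:colors}, only $3n+2m$ of the searchers are non-RED, so $N > 3n+2m$ whenever $n \geq 1$. This gap is the source of both contradictions.

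For $\min\{t_i : i \in I\} < t_0$, I would suppose $t_0$ is minimum. Lemma~\ref{lemma:sequentialy} then forces $t_j > t_0'$ for every $j \in I \setminus \{0\}$, so at step $t_0'$ every $A_j$ with $j \neq 0$ is still contaminated: indeed $\sn{A_j}=2$ demands that both RED searchers meet inside $A_j$ at some step, which has not happened by $t_0'$. I would then isolate the first move $m$ after $t_0'$ at which both RED searchers lie outside $A_0$; such a move exists, since the RED searchers must later gather in each $A_j$ to clean it. At $m$, $A_0$ must already be entirely clean, for otherwise completing the cleaning of $A_0$ would require both REDs to be inside $A_0$ once more, contradicting the maximality of $t_0'$. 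Monotonicity at $m$ then forces a distinct non-RED guard at each of the $N$ junctions $u_j$ where a still-contaminated subtree meets a clean edge of $A_0$, exceeding the $3n+2m$ non-RED searchers and yielding the contradiction.

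The inequality $t_0' < \max\{t_i : i \in I\}$ follows by a dual argument. Assume $t_0'$ is maximum; Lemma~\ref{lemma:sequentialy} places every $t_j$ with $j \neq 0$ strictly before $t_0$, so every $A_j$ with $j \neq 0$ is already clean at step $t_0$. Looking at the move immediately preceding $t_0$, the same structural observation applied in reverse tells us that $A_0$ is not yet clean --- any prior cleaning of $A_0$ would have required both REDs to have met inside $A_0$ before $t_0$, contradicting the minimality of $t_0$. Each already-clean subtree $B_j$ is therefore adjacent, across the junction $u_j$, to a contaminated RED edge of $A_0$, so each $u_j$ must be guarded by a distinct non-RED searcher. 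This again gives $N \leq 3n+2m$, a contradiction.

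The main obstacle is the structural claim invoked in both directions: the substantive cleaning of $A_0$ can take place only while both RED searchers are simultaneously in $A_0$. This rests on the high degree of certain interior vertices of $A_0$, most notably $v_{a+1}$, which is incident to $10n + 5m + 2$ RED edges contributed by the many $L_x$, $L'_{x'}$, $S_p$, and $S_{-p}$ components attached there (and similarly for $v_b$ with its clause-component attachments); past such a vertex a single RED searcher cannot make monotone progress without recontaminating an already-cleaned RED edge. Once this structural observation is pinned down, the two counting arguments above close the lemma.
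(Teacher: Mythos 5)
Your proof is correct and takes essentially the same route as the paper's: assume $t_0$ is first, invoke Lemma~\ref{lemma:sequentialy} to push every $t_j$, $j\neq 0$, past $t_0'$, locate a moment where $A_0$ is clean while every other $A_j$ still has contaminated edges, and derive a contradiction by counting the guards needed at the interfaces (the paper does this directly at $t_0'$ and phrases the count in terms of colors --- concluding two searchers of some color $\NPCvariable{p}$ would be needed --- whereas you introduce the auxiliary move $m$ and count raw cardinalities, $5n+2m$ areas against $3n+2m$ non-$\RED$ searchers). One small imprecision: at your move $m$ the two $\RED$ searchers, though outside $A_0$, can still sit at junction vertices inside up to two of the components $B_j$ and serve as guards there, so not all $N$ guards are forced to be non-$\RED$; the count survives with slack ($5n+2m-2>3n+2m$ for $n\geq 2$), which is the same level of care the paper's own one-line counting exhibits.
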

\begin{proof}
Suppose for a contradiction that $t_{0}$ is the first step, i.e., $t_{0}< t_{i}$, for each $i\in I\setminus\{0\}$. By Lemma~\ref{lemma:sequentialy}, $t_{i}> t_{0}'$ for each $i\in I\setminus\{0\}$. 
Thus, each area $A_i$ contains contaminated edges in step $t_0$. In step $t'_{0}$ all edges of the path $P$ are clean.
Hence, $P$ is guarded by at least one searcher from $A_{i}$ for each $i\in I\setminus\{0\}$.
A simple counting argument implies that two searchers of color $\NPCvariable{p}, p\in\{1,\ldots,n\}$, are used --- a contradiction. The second case, when $t_{0}'$ is the last step, can be argued analogously.
\end{proof}
We draw attention to the two ways of searching $A_{0}$ (to which we refer as a folklore).
Since $A_0$ is a caterpillar, one can assume without loss of generality that it is cleaned by $\cS$ by the two searchers of color $\RED$ in the following way.
Either, the first searcher of color $\RED$ is placed, in some move of $\cS$, on $v_1$ and throughout the search strategy it moves along $P$ from $v_1$ to $v_l$ --- we say that such $\cS$ \emph{cleans $P$ from $v_1$ to $v_l$}, or the first searcher starts at $v_l$ and moves along $P$ from $v_l$ to $v_1$ while $\cS$ proceeds --- we say that such $\cS$ \emph{cleans $P$ from $v_l$ to $v_1$}.
In both cases, the second searcher  of color $\RED$ is responsible for cleaning edges incident to $v_i$, $i\in\{1,\ldots,l\}$, when the first searcher is on $v_i$.

We say that an edge search strategy $\cS'$ is a \emph{reversal of} a search strategy $\cS$ that consists of $l$ moves if it is constructed as follows: if the move $i$ of $\cS$ places (respectively removes) a searcher on a node $v$, then the move $(l-i+1)$ of $\cS'$ removes (respectively places) the searcher on $v$, and if the move $i$ of $\cS$ slides a searcher from $u$ to $v$, then the move $(l-i+1)$ of $\cS'$ slides the searcher from $v$ to $u$.
It has been proved in \cite{WormanYang08} that if $\cS$ is a (monotone) edge search strategy, then $\cS'$ indeed is a (monotone) edge search strategy.
We skip a proof (it is analogous to the one in \cite{WormanYang08}) that if $\cS$ is a monotone search $\colS$-strategy, then its reversal is also a monotone search $\colS$-strategy.
This allows us to assume the following for the search strategy $\cS$ for $\TSAT$ we consider in this section:
\begin{enumerate} [label={\normalfont{(*)}},leftmargin=*]
\item\label{eq:reversability} $\cS$ cleans $P$ from $v_l$ to $v_1$.
\end{enumerate}

Let $\first{i}, i \in \{1,\ldots, l\}$, be the index of the first move such that two searchers  of color $\RED$ are in $v_{i}$ (either at the start or end of the move). 
Such moves are well defined because the degree of $v_{i}$ is greater than two.
Note that without loss of generality due to~\ref{eq:reversability}, $\first{l}=t_{0}$. 

Observe that the removal of edges $\{v_a,v_{a+1}\}$ and $\{v_{a+1},v_{a+2}\}$ from $\TSAT$ gives three connected components and let $T_{a+1}$ be the subtree of $\TSAT$ that equals the connected component that contains $v_{a+1}$.
For each subtree $T'$, let $\clean{T'}{t}$ ($\cont{T'}{t}$, respectively) denote the set of clean (contaminated, respectively) edges  in  $T'$ immediately prior to the move $t$.

Intuitively, Lemma~\ref{lemma:va} says that when we reach the vertices $v_a,v_{a+1},v_{a+2}$ while moving along $A_0$, then the tree $T_{a+1}$ is constructed in such a way that while cleaning it there exists a move in which no searcher is used to guard any $A_i$ with $i\in R$ or any clause component.
The configurations of the colors of searchers for the guarding of $A_i$'s and the clause components are those in the family $\cK$ below.
\begin{lemma}\label{lemma:va}
Let
\[\cK=\{\{\RED, \NPCclause{1}\},\ldots,\{\RED, \NPCclause{m}\}, \{\RED, \NPCvariable{1}, \NPCtrue{1}\}, \ldots, \{\RED, \NPCvariable{n}, \NPCtrue{n}\},\{\RED, \NPCvariable{1}, \NPCfalse{1}\},\ldots,\{\RED, \NPCvariable{n}, \NPCfalse{n}\}\}\]
For each $K\in \cK$, between moves $\first{a+2}$ and $\first{a}$, there exists a move $t_K$ which requires all searchers of colors from the set $K$ to be in $T_{a+1}$.
\end{lemma}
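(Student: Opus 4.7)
The plan is to handle each $K\in\cK$ separately, exhibiting within the interval $[\first{a+2},\first{a}]$ a specific move $t_K$ at which the strategy must hold every searcher whose colour lies in $K$ inside $T_{a+1}$. For both types of $K$, the argument hinges on a simultaneous analysis of a four- or three-edge monochromatic star inside one of the subtrees $L'_{\NPCclause{d}}$ or $L_{\NPCtrue{p}}$ attached to $v_{a+1}$ and the necessity of guarding $v_{a+1}$ (whose only colour is $\RED$) while the contaminated edge $\{v_a,v_{a+1}\}$ persists.

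For $K=\{\RED,\NPCclause{d}\}$ I would use that $T_{a+1}$ contains five copies of $L'_{\NPCclause{d}}$, each of which contains an area of color $\NPCclause{d}$ that is a four-edge star of search number $2$. By Lemma~\ref{lem:clean1} applied locally, the cleaning of this star forces both $\NPCclause{d}$-searchers to reside simultaneously inside the star, and thus inside $T_{a+1}$; I would take $t_K$ to be such a moment, and verify that it lies in $[\first{a+2},\first{a}]$ by observing that any attempt to clean such a star outside this interval would leave $v_{a+1}$ unguardable by a red searcher at a time when both reds are far from it. At $t_K$ the vertex $v_{a+1}$ carries a red searcher: $\{v_a,v_{a+1}\}$ is still contaminated (it is only cleaned at $\first{a}$), $\{v_{a+1},v_{a+2}\}$ has already been cleaned within the interval, and $\colV(v_{a+1})=\{\RED\}$. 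The remaining red searcher must also lie inside $T_{a+1}$: at $t_K$ some junction of the star being cleaned (the ``child'' or one of the $v_i$'s) carries a freshly-cleaned $\NPCclause{d}$-edge adjacent to a still-contaminated edge, so it requires a guard; since both $\NPCclause{d}$-searchers are already committed to the sliding moves at the star, this guard must be the second red searcher, placed inside $T_{a+1}$.

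For $K=\{\RED,\NPCvariable{p},\NPCtrue{p}\}$ (and symmetrically for $\NPCfalse{p}$) the argument is parallel, using the five copies of $L_{\NPCtrue{p}}$ in $T_{a+1}$. Each contains an area of color $\NPCtrue{p}$ that is a three-edge star of search number $2$; since only one $\NPCtrue{p}$-searcher is available, cleaning this star requires recruiting the $\NPCvariable{p}$-searcher as a guard at the star's centre ``leaf'' (whose available colours are $\{\NPCvariable{p},\NPCtrue{p}\}$) and a red searcher at some junction $v_i$ whose incident red edges are contaminated throughout the star cleaning. Combined with the forced red guard at $v_{a+1}$, this yields a move $t_K$ with all four relevant searchers inside $T_{a+1}$.

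The main obstacle I expect is the verification that at $t_K$ the second red searcher cannot be spared to stay outside $T_{a+1}$: this requires a careful monotonicity bookkeeping, tracing which junctions of the star currently host a clean/contaminated boundary and showing that those junctions exhaust the guarding capacity of the non-red searchers inside the copy.
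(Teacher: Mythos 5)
Your target configuration is the right one, but there is a genuine gap at the step that actually carries the proof: placing the forcing move \emph{inside} the interval $[\first{a+2},\first{a}]$. Your justification --- that cleaning such a subtree outside this interval would leave $v_{a+1}$ unguardable --- does not hold up. A single copy of $L'_{\NPCclause{d}}$ can be cleaned almost entirely \emph{before} $\first{a+2}$ and then kept guarded by one searcher of color $\NPCclause{d}$ standing on an internal junction of that copy (not on $v_{a+1}$), with only its root edge of color $\RED$ left for later; for such a copy, the moment at which both $\NPCclause{d}$-searchers sit in the monochromatic star occurs before $\first{a+2}$, outside your interval, and completing the copy afterwards needs only a red searcher. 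This is exactly why the construction attaches \emph{five} copies of each $L_x$ and $L'_{x'}$ to $v_{a+1}$ --- a multiplicity you mention but never exploit. The paper closes the gap by a counting argument over the copies: at $\first{a+2}$ each copy is fully contaminated or contains a guard (since $v_{a+1}$ is unoccupied and $\{v_a,v_{a+1}\}$ is contaminated by~\ref{eq:reversability}), at $\first{a}$ each copy is fully clean or contains a guard, and at most two of the five copies can be guarded at either of these moments, because only two searchers of the relevant non-$\RED$ colors exist and guarding three copies would force them onto the copies' only common vertex $v_{a+1}$, which is unoccupied. Hence some copy must pass from not clean to fully clean strictly within the interval, and only then does $\sn{L_x}=3$ (applied to the whole copy) produce the simultaneous configuration at a move $t_K$ that is guaranteed to lie where the lemma needs it. The residual issue you flag --- accounting for the second red searcher --- is real but secondary, and is handled by the red guard that must sit on $v_{a+1}$ while the copy is completed.

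There is also a structural error in your variable case: in $L_{\NPCtrue{p}}$ the three edges of color $\NPCtrue{p}$ do \emph{not} form a single area of search number $2$. Their common center is a junction (it is also incident to the edge of color $\NPCvariable{p}$), so by the paper's definition of area each of these edges is a separate single-edge area. If they formed a two-searcher area, Lemma~\ref{lem:clean1} would force two searchers of color $\NPCtrue{p}$, contradicting Lemma~\ref{lemma:colors}; your patch of recruiting the $\NPCvariable{p}$-searcher as a guard at the center only works \emph{because} that center is a junction, i.e., because the premise about the area is false. The forced presence of the searchers of colors $\RED$, $\NPCvariable{p}$ and $\NPCtrue{p}$ should instead be derived from $\sn{L_{\NPCtrue{p}}}=3$ together with the color composition of the whole copy, as in the paper.
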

\begin{proof}
An intuition explaining the proof is as follows.
Recall that $T_{a+1}$ consists of multiple copies of subtrees  $L_{x}$ and  $L'_{x'}$. Arguments are the same for both $L_{x}$ and  $L'_{x'}$.
We consider which edges of $T_{a+1}$ are clean in the move $\first{a+2}$: $L_{x}$ it is either contaminated or contains a guarding searcher.
Then we consider which edges of $T_{a+1}$ are clean in the move $\first{a}$: in this case $L_{x}$ it is either clean or contains a guarding searcher.
We count how many guarded $L_{x}$'s can exist in the move $\first{a}$.
A counting argument reveals that at least three $L_{x}$'s are clean in the move $\first{a}$. From all of the above, these 3 subtrees $L_x$ must have been cleaned after $\first{a+2}$. Among the moves of cleaning 3 subtrees $L_x$, a move $t_K$ exists.

Consider what can be deduced about $\clean{T_{a+1}}{\first{a+2}}$ and $\cont{T_{a+1}}{\first{a+2}}$ from~\ref{eq:reversability} and the  definition of $\first{a+2}$.
Recall that by~\ref{eq:reversability}, the strategy we consider cleans $P$ from $v_l$ to $v_1$.
Hence, the edge $\{v_a,v_{a+1}\}$
is contaminated before move $\first{a+2}$. Furthermore, the vertex $v_{a+1}$ cannot be occupied by a searcher during move $\first{a+2}$, because both searchers  of color $\RED$ are on the vertex $v_{a+2}$, by the definition of $\first{a+2}$. Therefore, no subtree $L_{x}, x \in X=\{\NPCtrue{i}, \NPCfalse{i}\st i\in\{1,\ldots,n\}\}$, or $L'_{x'}, x'\in X'=\{\NPCclause{1}, \ldots, \NPCclause{m}\} $, can be fully clean and unguarded in the move $\first{a+2}$, because these subtrees are incident to $v_{a+1}$. For each subtree $L_x$, there are two possibilities: either $\clean{L_{x}}{\first{a+2}}=\emptyset$ or $\clean{L_{x}}{\first{a+2}}\neq\emptyset$ in which case $L_{x}$ contains at least one guarded  vertex. The same holds for any $L'_{x'}$. 

Next consider what is known about $\clean{T_{a+1}}{\first{a}}$ and $\cont{T_{a+1}}{\first{a}}$. Because $v_{a+1}$  is not  guarded in the move $\first{a}$ as both searchers of color $\RED$ are in vertex $v_{a}$, the vertex $v_{a+1}$ is not incident to contaminated edges at this point. Otherwise all edges of $P$ connecting $v_{a+1}$ and $v_{l}$ would be contaminated which contradicts~\ref{eq:reversability}.  The spread of contamination at move $\first{a}$  from each $L_{x}$ and $L'_{x'}$ through $v_{a+1}$ can be prevented only in the following way: $\clean{P}{\first{a}}$  is guarded from the contaminated edges in subtrees $L_{x}$ and $L'_{x'}$ and their copies, and all remaining subtrees in $T_{a+1}$  are clean. Again there are two possibilities: either $\clean{L_{x}}{\first{a}}=E(L_{x})$ or if $\clean{L_{x}}{\first{a}}\neq E(L_{x})$, then $L_{x}$ contains at least one guarded  vertex. The same holds for any $L'_{x'}$.

By the above paragraphs, each subtree $L_{x}$ and $L'_{x'}$ at some point between $\first{a+2}$ and $\first{a}$ is either being guarded from or is fully searched.
Suppose for a contradiction, that three out of five copies of a subtree  $L_{x}$ or $L'_{x'}$ contain a guarding searcher in the move $\first{a}$.
A simple counting argument suffices to show that they have to be guarded on a common vertex: for $L_{x}$, there are two  searchers of color $\RED$, and they are placed on the vertex $v_a$, and one in each of the colors $x$ and $\NPCvariable{i}$ where  $i$ is selected so that $x\in\{\NPCtrue{i}, \NPCfalse{i}\}$. 
Similarly for any $L'_{x'}$ there are only two searchers of color $x'$ which can be placed in it. The only common vertex is $v_{a+1}$ --- contradiction with the definition of $\first{a}$.

Because we eliminated the option of guarding three subtrees $L_{x}$ and $L'_{x'}$ at the move $\first{a}$,  the only remaining possibility is that at least three of the subtrees $L_{x}$ and $L'_{x'}$ are clean before the move $\first{a}$. Consider a move after which the first subtree has been cleaned. This subtree is guarded on $v_{a+1}$ until all edges connected to $v_{a+1}$ are clean, so in subsequent moves at least one of the copies of each $L_{x}$ and $L'_{x'}$ is cleaned while   $v_{a+1}$ is guarded by a searcher of color $\RED$.
By construction, for each of the following sets:  $B\in\cB=\{\{\RED, \NPCvariable{1}, \NPCtrue{1}\}, \ldots, \{\RED, \NPCvariable{n}, \NPCtrue{n}\}\}$ and $F\in\cF=\{\{\RED, \NPCvariable{1}, \NPCfalse{1}\}, \ldots, \{\RED, \NPCvariable{n}, \NPCfalse{n}\}\}$ there exist a  subtree $L_{x}$ requiring searchers in these colors. Note that $\sn{L_{x}}=3$, so all of those searchers will be required simultaneously in at least one move, whose number is denoted by $t_{B}$ or $t_{F}$ respectively, when $L_{x}$ is being searched. Similarly $L'_{x'}$ will require all searchers of colors $G\in\cG=\{\{\RED, \NPCclause{1}\},\ldots,\{\RED, \NPCclause{m}\}\}$ to be present in $T'$ in a single move, whose number is denoted by $t_G$. $\cB\cup \cF\cup \cG=\cK$, so $t_K$ exists for each $K\in \cK$.
\end{proof}

\begin{lemma}\label{lemma:va2}
Let
\[\cK''=\{\{\RED, \NPCclause{1}\},\ldots,\{\RED, \NPCclause{m}\}\}\]
For each $K''\in \cK''$, between moves $\first{l}$ and $\first{l-2}$, there exists a move $t_K''$ which requires all searchers of colors from the set $K''$ to be in one of the subtrees $L''_{\NPCclause{d}}, d\in\{1,\ldots, m\}$.\qed
\end{lemma}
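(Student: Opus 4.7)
The plan is to mirror the proof of Lemma~\ref{lemma:va}, now considering the analogous subtree rooted at $v_{l-1}$ that contains the five copies of $L''_{\NPCclause{d}}$, $d\in\{1,\ldots,m\}$, attached there. Formally, let $T_{l-1}$ be the connected component of $\TSAT$ containing $v_{l-1}$ after deletion of the edges $\{v_{l-2},v_{l-1}\}$ and $\{v_{l-1},v_l\}$; the argument is then a two-point snapshot plus a pigeonhole, exactly as before.

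First I would analyse the state of $T_{l-1}$ at move $\first{l}$. By assumption~\ref{eq:reversability}, $\cS$ cleans $P$ from $v_l$ to $v_1$, so at the end of move $\first{l}=t_0$ both $\RED$ searchers are on $v_l$, the edge $\{v_{l-2},v_{l-1}\}$ is still contaminated, and $v_{l-1}$ is not occupied by a $\RED$ searcher. Therefore every copy of $L''_{\NPCclause{d}}$ that contains any clean edge must carry a guarding searcher inside it, and since only $\RED$ and $\NPCclause{d}$-colored searchers can ever be placed on or slide inside this copy while no $\RED$ searcher is available, that guard must be of color $\NPCclause{d}$. As only two such searchers exist, at most two of the five copies of $L''_{\NPCclause{d}}$ can be partially clean at $\first{l}$, so at least three are fully contaminated.

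A symmetric analysis at move $\first{l-2}$ would give the matching bound. At that move both $\RED$ searchers are on $v_{l-2}$ and, by monotonicity together with~\ref{eq:reversability}, all edges of $A_0$ between $v_{l-1}$ and $v_l$ are already clean; if $v_{l-1}$ were incident to any contaminated edge of $T_{l-1}$ those clean edges would be recontaminated, so $v_{l-1}$ is incident only to clean edges of $T_{l-1}$. Hence every copy of $L''_{\NPCclause{d}}$ that is not fully clean must hold a guarding searcher of color $\NPCclause{d}$, and again at most two of the five copies can be in that state, leaving at least three fully clean. A pigeonhole then yields, for every $d\in\{1,\ldots,m\}$, at least one copy of $L''_{\NPCclause{d}}$ that is fully contaminated at $\first{l}$ and fully clean at $\first{l-2}$, so it is cleaned entirely inside $[\first{l},\first{l-2}]$.

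To finish, the area of color $\NPCclause{d}$ in this copy is a star with four edges of search number two, so both $\NPCclause{d}$ searchers must be inside the copy simultaneously at some move of this cleaning phase; at the same time the surrounding $\RED$ edges of the copy can only be cleaned by the $\RED$ searcher being in the copy as well, so at that move all searchers of colors in $K''=\{\RED,\NPCclause{d}\}$ are inside this single $L''_{\NPCclause{d}}$, which is the desired $t_{K''}$. The main obstacle, as in Lemma~\ref{lemma:va}, is the monotonicity bookkeeping at $\first{l-2}$ ruling out intermediate unguarded states at $v_{l-1}$; once that is in place, the counting is routine.
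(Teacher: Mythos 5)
Your proposal is correct and follows exactly the route the paper intends: the paper omits this proof with the remark that it is analogous to Lemma~\ref{lemma:va}, and your argument is precisely that analogy carried out for the subtree at $v_{l-1}$ (two-snapshot analysis at $\first{l}$ and $\first{l-2}$, the guarding count bounded by the two $\NPCclause{d}$ searchers, pigeonhole over the five copies, and the search-number argument forcing all $\RED$ and $\NPCclause{d}$ searchers into one $L''_{\NPCclause{d}}$). No gaps beyond the level of detail the paper itself uses in Lemma~\ref{lemma:va}.
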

We skip the proof because it is analogous to the one of Lemma \ref{lemma:va}. 

\medskip
Let $\cT$ be the set of subtrees $H_{z}, z \in \cQ\setminus\{\RED\}$, and $S_{p}, S_{-p}, p\in\{1,\ldots, n\}$. For $G\in \cT$ let $\tau(G)$ be the index $i$ such that $G$ contains the vertex $v_{i}$.

We say in Lemma~\ref{lemma:sequence}, informally, that we need to entirely clean all areas $A_i$ with $i\in R$ prior to the part of the search strategy that uses all searchers of color $\RED$ on $A_0$.
The latter part is the one that cleans $A_0$ entirely.
\begin{lemma} \label{lemma:sequence}
The step $t_{0}$ is placed in the search sequence in the following way:
\[t_{j}\leq t'_{j}< t_{0}\leq t'_{0} \]
for each  $j\in R$.
\end{lemma}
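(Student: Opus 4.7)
Plan. By Lemma~\ref{lemma:sequentialy} the intervals $[t_j,t'_j]$ and $[t_0,t'_0]$ are disjoint, so the desired ordering $t'_j<t_0$ is equivalent to excluding $t'_0<t_j$. I would suppose for contradiction that $t'_0<t_j$ for some $j\in R$ and, up to symmetry, write $j=a+1+p_*$, so that $A_j$ sits inside $S_{p_*}$. Label the successive junctions on the path from $A_j$ out to $v_{a+1+p_*}$ as $l_1,u_1,u_2,u_3$, with incident edges of colors $\NPCvariable{p_*},\RED,\NPCtrue{p_*},\RED$, respectively.

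The first step is to argue that $A_0$ is fully clean at time $t_j$. The caterpillar $A_0$ has several ``multi-pendant hubs'' (the two red pendants at $v_l$ and at $v_1$, the two red pendants at every $v_{a+1+p}$ for $p\in\{1,\ldots,n\}$, and the large bundle of red pendants at $v_{a+1}$), and a monotone cleaning of each such hub requires both red searchers to be simultaneously present in $A_0$. All of these moments lie in $[t_0,t'_0]$, so by assumption~\ref{eq:reversability} the wave-cleaning of $A_0$ has to finish by $t'_0$; monotonicity then preserves $A_0$ clean throughout $(t'_0,t_j]$.

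The second step pins down the positions of specific non-red searchers at time $t_j$. Since both red searchers are trapped inside $A_j$ while $A_0$ is clean, every junction of $A_0$ adjacent to a still-contaminated non-red edge must be guarded by the unique non-red searcher of the matching color. In particular, the single searcher of color $\NPCvariable{p_*}$ is forced onto $l_1$ (to prevent the star $A_j$, which is being cleaned, from being recontaminated through the contaminated $\NPCvariable{p_*}$-edge $\{l_1,u_1\}$), and the single searcher of color $\NPCtrue{p_*}$ is forced onto $u_3$ (to protect the clean edge $\{u_3,v_{a+1+p_*}\}\subseteq A_0$ from the contaminated $\NPCtrue{p_*}$-edge $\{u_2,u_3\}$).

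The final step, which I expect to be the main obstacle, is to convert these forced placements into a quantitative contradiction in the spirit of the proof of Lemma~\ref{lemma:middle}. The key extra observation is that the lone $\NPCtrue{p_*}$-searcher is also demanded to guard the junction of the subtree $H_{\NPCtrue{p_*}}$ (attached to some $v_i$ with $i\in L$) with $A_0$ whenever the $\NPCtrue{p_*}$-edge of $H_{\NPCtrue{p_*}}$ is contaminated. By Lemma~\ref{lemma:sequentialy}, the interval $[t_{i^*},t'_{i^*}]$ of $A_{i^*}\subset H_{\NPCtrue{p_*}}$ is disjoint from $[t_j,t'_j]$; a case analysis on its position relative to $t_0$ and $t_j$ -- either $t_{i^*}>t_j$ (immediate conflict at $t_j$), or $t'_0<t_{i^*}<t_j$ (replay the forced-placement analysis at time $t_{i^*}$, at which $A_j$ is still contaminated and hence still imposes the $\NPCtrue{p_*}$-demand on $u_3$), or $t_{i^*}<t_0$ (transfer the analysis to time $t'_0$, at which $A_0$ has just become clean while $A_j$ remains contaminated) -- produces in every branch a simultaneous demand on the lone $\NPCtrue{p_*}$-searcher that it cannot satisfy, contradicting the supposition $t'_0<t_j$ and thereby establishing $t'_j<t_0$ for every $j\in R$.
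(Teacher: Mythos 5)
Your overall route is genuinely different from the paper's, and it does not close. The paper does not try to derive the contradiction at the move $t_j$ itself; it derives it at a move $t_K$ inside $[t_0,t_0']$ supplied by Lemma~\ref{lemma:va}: for $K=\{\RED,\NPCvariable{p},\NPCtrue{p}\}$ (and the analogous sets) there is a move between $\first{a+2}$ and $\first{a}$ at which \emph{all} searchers of the colors in $K$ are inside $T_{a+1}$, which is disjoint from $S_{p}$. At that move the clean suffix of $A_0$ already contains $v_{a+1+p}$ while $A_j$ is still contaminated (under the assumption $t_0'<t_j$), and every vertex on the path separating $A_j$ from $A_0$ carries only colors from $K$ --- so no guard of any admissible color is available and recontamination is forced. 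Your proposal never invokes Lemma~\ref{lemma:va} and has no substitute for this mechanism.

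The concrete gap is in your second step. You claim the searcher of color $\NPCtrue{p_*}$ is pinned to $u_3$ and the searcher of color $\NPCvariable{p_*}$ to $l_1$ at the move $t_j$, but neither placement is forced. The clean/contaminated status of the connecting path of $S_{p_*}$ at time $t_j$ is not determined: the strategy may have already cleaned the entire path from $v_{a+1+p_*}$ down to $l_1$ earlier, and since $l_1$ is a leaf of the star $A_j$ itself, at the move $t_j$ it can be guarded by one of the two $\RED$ searchers that are, by definition of $t_j$, present in $A_j$. In that configuration both the $\NPCtrue{p_*}$ and the $\NPCvariable{p_*}$ searchers are free, and the double demand on the lone $\NPCtrue{p_*}$ searcher that your third step relies on simply does not arise. (Your third step is in any case only a sketch: in the branch $t_{i^*}<t_0$ you ``transfer the analysis to time $t_0'$,'' but at $t_0'$ both $\RED$ searchers are in $A_0$ rather than in $A_j$, so the forced placements of step two --- which depend on the $\RED$ searchers being unavailable outside $A_j$ --- no longer apply.) The configuration you fail to exclude at $t_j$ is exactly the one that can only be ruled out earlier, at the move $t_K$, which is why the appeal to Lemma~\ref{lemma:va} is not optional here.
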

\begin{proof}
We first summarize the intuitions used in the proof. 
We start by using Lemma~\ref{lemma:sequentialy} and~\ref{lemma:middle}, which give us that $A_0$ is not the first nor the last area $A_i$ cleaned.
We define two sets of numbers ($U^-$ and $U^+$ below) corresponding to indices of $A_i$'s whose cleaning happens before and after cleaning $A_0$. During the moves $t_0,\ldots,t_0'$ (i.e., those that clean $A_0$) the clean subgraph of $A_0$ has to be guarded from the contaminated $A_i$'s, and the cleaned $A_i$'s have to be guarded from the contaminated part of $A_0$ Intuitively, once cleaning of $A_0$ extends past a vertex $v_i$ to which a subtree containing $A_i$ is attached, this $A_i$ needs to be guarded if it's `status' (being clean or contaminated) is different than that of the area $A_0$.
Consider a move, which we denote by $l_{a+1}$ below in the proof, in which the vertex $v_{a+1}$ divides the clean and contaminated parts of $P$.
We analyze which $A_j$'s, $j\in R$, could have been left contaminated and which ones are guarded in the move $l_{a+1}$.
We obtain that there exists a move $t_K, K\in\cK$, described in the Lemma~\ref{lemma:va}, which can be identified with $l_{a+1}$.
There is not enough searchers to perform $t_K$ while $A_j$ is guarded --- a contradiction.

Now we start the formal proof.
Define
\[U^-=\{i\in I\st t_i\leq t_i'<t_0\}\]
and
\[U^+=\{i\in I\st t_0'<t_i\leq t_i'\}.\]
By Lemmas \ref{lemma:sequentialy} and~\ref{lemma:middle}, $U^-\neq\emptyset$, $U^+\neq\emptyset$ and $U^-\cup U^+=I\setminus\{0\} $.
Note that 
$U^-\cap U^+= \emptyset$ because the strategy is monotone.
Given this notation we restate the lemma as $U^-$ contains all indices of  steps $t_{j}, j\in R$, i.e., $R\subseteq U^-$. 

 Let $u^{-}\in U^-$ and $u^{+}\in U^+$ be selected arbitrarily. Let $g_{\left | i \right |}, i\in I$ be the index of the first move in $\left [ t_{0}, t_{0}'\right ]$ such that $v_{\left | i \right |}$ is incident to a clean edge. There exists $G\in\cT$ such that $\tau(G)=  \left | u^{+} \right | $ and $G$ has a contaminated edge between moves of numbers $g_{\left | u^{+} \right |}$ and $t_{0}'$ because $t_0'<t_{u^{+}}$.
Thus, $\clean{A_0}{t}, t\in \left[g_{\left | u^{-} \right |}, t_{0}'\right] $, is guarded from  contaminated edges of $G$. 

Let $g_{\left | i \right |}', i\in I$, be the index of the last move in $\left [ t_{0}, t_{0}'\right ]$ such that $v_{\left | i \right |}$ is incident to a contaminated edge on $P$.
There exists $G'\in\cT$ such that $\tau(G')=  \left | u^{-} \right | $ and $G'$ has a clean edge between steps $t_{0}$  and $g_{\left |u^{-} \right |}$ because $t_{u^{-}}<t_0$.
Thus, $\clean{G'}{t}, t\in \left[t_{0}, g'_{\left | u^{-} \right |}\right]$, is guarded from  contaminated edges of $P$.

Let $l_{a+1}$ be an arbitrary move number such that the edge $\{v_{a}, v_{a+1}\}$ is contaminated and the edge $\{v_{a+1}, v_{a+2}\}$ is clean.
By $(*)$ such a move exists.

Suppose for a contradiction that $R\nsubseteq U^-$, which is equivalent to $R\cap U^+\neq\emptyset$.
 During the move $l_{a+1}$ the following subtrees are guarded: $$\clean{A_0}{l_{a+1}} \quad from \quad  A_{j}, j\in R\cap U^+
, \quad because \quad \forall_{ j\in R\cap U^+ } \quad g_{\left | j \right |}<l_{a+1}$$ and 
$$\clean{A_{i}}{l_{a+1}} \quad from \quad \cont{A_0}{l_{a+1}}, i\in L\cap U^-
\quad because  \quad \forall_{ i\in U^-\cap L } \quad l_{a+1}<g'_{\left | i \right |}$$

By Lemma \ref{lemma:va}, there exists a move $t_K, K\in\cK$,  which requires  all searchers of colors belonging to $K$ to be present in $T_{a+1}$. 
Every edge incident to $v_{a}$ cannot be clean before the last move which places two searchers of color $\RED$ in  $v_{a}$ has occurred, therefore $\first{a}\leq g_{a}'$. Two searchers of color $\RED$ cannot be placed in $v_{a+2}$ before at least one edge incident to it is clean, therefore $g_{a+2}\leq \first{a+2}$. This gives us $g_{a+2}\leq \first{a+2} <t_K< \first{a}\leq g'_{a}$ , and with the fact that in the moves $t_K$ and $l_{a+1}$ the vertex $v_{a+1}$ is occupied, allows us to conclude that for each $t_K$ there exists $l_{a+1}$ such that $t_K=l_{a+1}$.

Let $G\in\cT$ be such that $\tau(G)>a+1$. Recall that if $j\in R$ and $\tau(G)=\left | j \right |$, then $G$ is isomorphic to some $S_{p}$ or $S_{-p}$. Due to construction of $S_{p}$ and $S_{-p}$, searcher used to guard $\clean{A_{0}}{l_{a+1}}$ from $\cont{G}{l_{a+1}}$ is in one of the following colors: $\RED$, $\NPCvariable{p}$, $\NPCtrue{p}$ if $j>0$ or $\RED$, $\NPCvariable{p}$, $\NPCfalse{p}$ if $j<0$. 
Let $D_G$ denote a set of colors in $G$ for each $G\in\cT$.
Note that for each $D_G$ there exists $K\in \cK$ such that $D_G\subseteq K$,  and therefore there exists a move $t_{D_{G}}$  such that all searchers in colors $D_{G}$ are in $T'$.
Consider a move $t_{D_{G}}=l_{a+1}$, which requires a searcher of one of the colors in $D_{G}$ to be present in $G$, such that it contains an area  $A_w, w\in R\cap U^+$. Because $T'$ and $G\in\cT$ contain no common vertices, such a move cannot exist --- a contradiction with the Lemma \ref{lemma:va}. 
\end{proof}

The statement of the following lemma is more specific with respect to the previous one: in Lemma~\ref{lemma:guard} we examine those moves in which each area $A_j$, $j\in R$, is clean and guarded from $A_0$. Additionally we are concerned with the colors of searchers guarding these areas.
More precisely, between the time when cleaning of $A_0$ starts and reaches $v_b$, all  subgraphs $A_i, i\in R$, contain clean edges which are guarded from the contaminated edges of $A_0$. Only searchers of colors $\RED$ and either $\NPCtrue{p}$ or $\NPCfalse{p}$ can be used for the guarding.

\begin{lemma}\label{lemma:guard}
In an arbitrary move $t\in[t_{0},\first{b} ]$, each subgraph $\clean{A_j}{t}$ for each $j\in R$ is guarded from $\cont{A_0}{t}$  by searchers on at least one of these two vertices: a vertex with colors $\{\NPCtrue{p}, \RED\}$ in $S_{p}$ or $\{\NPCfalse{p},\RED\}$ in  $S_{-p}$
\end{lemma}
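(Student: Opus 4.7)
The plan is to combine three observations: (i) both RED searchers are tied up on $A_0$ throughout the interval, (ii) the star $A_j$ is clean while the portion of $A_0$ at $v_{|j|}$ is contaminated, so a non-RED guard inside $S_p$ or $S_{-p}$ is needed, and (iii) there is a single $\NPCvariable{p}$-searcher, which cannot guard both $S_p$ and $S_{-p}$ at once.

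I would fix $t\in[t_0,\first{b}]$ and $j\in R$, write $|j|=a+1+p$, and place $A_j$ in $S_p$ or $S_{-p}$ according to the sign of $j$. Lemma~\ref{lemma:sequence} together with monotonicity gives that every $A_i$ with $i\in R$ is clean at move $t$. Assumption~\ref{eq:reversability} and the definition of $\first{b}$ imply that no RED searcher has yet reached any $v_i$ with $i<b$; since $a+1+p<b$, the two edges of $P$ incident to $v_{a+1+p}$ belong to $\cont{A_0}{t}$. Thus the unique path from $A_j$ to $v_{a+1+p}$ inside $S_p$ (or $S_{-p}$) joins a clean edge to a contaminated one and must be blocked by a searcher.

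The key step is to exclude a RED guard on this path throughout the whole interval. Both RED searchers are committed to cleaning $A_0$ between $v_l$ and $v_b$, and every vertex of $P$ in this region whose RED-degree in $A_0$ is at least three---there are many, because of the RED attachments of $H_z$, $S_p$ and $S_{-p}$---forces both RED searchers to be co-located at that vertex in order to slide past without recontamination; hence neither of them can be parked inside $S_p$ or $S_{-p}$ at any move $t\in[t_0,\first{b}]$. From the construction, the only non-RED colors on the path inside $S_p$ are $\NPCvariable{p}$ (at the two vertices with color set $\{\NPCvariable{p},\RED\}$) and $\NPCtrue{p}$ (at the two vertices with color set $\{\NPCtrue{p},\RED\}$); the analogous statement holds for $S_{-p}$ with $\NPCfalse{p}$ in place of $\NPCtrue{p}$.

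To finish, I would count. By Lemma~\ref{lemma:colors} there is exactly one searcher of color $\NPCvariable{p}$, which can reside in at most one of $S_p$, $S_{-p}$ at move $t$; both $A_{a+1+p}$ and $A_{-(a+1+p)}$ need a guard inside their respective subtrees at $t$, so at least one of these guards must be a $\NPCtrue{p}$- or $\NPCfalse{p}$-searcher on a vertex with colors $\{\NPCtrue{p},\RED\}$ in $S_p$ or $\{\NPCfalse{p},\RED\}$ in $S_{-p}$, which is the stated conclusion. I expect the main difficulty to lie in rigorously formalising the ``no RED to spare'' claim throughout the interval, which I would handle by a careful move-by-move bookkeeping on the advance of the pair of RED searchers along $A_0$, using~\ref{eq:reversability} and the degree structure of $A_0$ to rule out any temporary retreat of a RED searcher into $S_p$ or $S_{-p}$.
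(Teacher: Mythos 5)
Your proposal assembles the right ingredients (the guard for each $A_j$, $j\in R$, must sit on the short path inside $S_p$ or $S_{-p}$; the colors available on that path are only $\RED$, $\NPCvariable{p}$ and $\NPCtrue{p}$ (resp.\ $\NPCfalse{p}$); and the single $\NPCvariable{p}$-searcher cannot serve both $S_p$ and $S_{-p}$ at once), and for the two boundary moves $t_0$ and $\first{b}$ your argument essentially coincides with the paper's. The genuine gap is exactly where you flag it: the claim that \emph{at every} move $t\in[t_{0},\first{b}]$ neither searcher of color $\RED$ can be parked inside $S_p$ or $S_{-p}$. Your justification --- that degree-three vertices of $P$ force the two $\RED$ searchers to be co-located in order to advance --- only constrains the moves at which the frontier actually advances; between two such moves one $\RED$ searcher is routinely detached anyway (for instance to clean the pendant $\RED$ leaves of the clause gadgets $L_{d,i}$ hanging off $v_{b+1},\ldots,v_{l-2}$, which are not even part of $A_0$), and nothing in your sketch excludes its temporary placement on the $\{\NPCtrue{p},\RED\}$-vertex of some $S_p$. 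Deferring this to ``careful move-by-move bookkeeping'' leaves the hardest step unproved.

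The paper does not prove that per-move claim at all; it sidesteps it. It pins down the configuration only at the two moves where the step definitions force it ($t_0$: all $\RED$ searchers are in $A_0$; $\first{b}$: both $\RED$ searchers stand on $v_b$), concludes that at those moves at least one of the two gadgets carries a guard on a $c$-colored vertex, $c\in\{\NPCtrue{p},\NPCfalse{p}\}$, and then bridges the intermediate moves by monotonicity: if at some $t$ in the interval both $c$-colored vertices of that gadget were unoccupied, then all edges incident to them would already be clean, hence still clean at $\first{b}$; but at $\first{b}$ the contaminated $P$-edges at $v_{|j|}$ must still be separated from $\clean{G}{\first{b}}$, both $\RED$ searchers are on $v_b$, and the $c$-searcher sits on a vertex all of whose incident edges are clean, so it separates nothing --- recontamination follows, contradicting monotonicity. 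If you want to salvage your route, you should either prove your ``no $\RED$ to spare at any move'' claim rigorously (which is harder than the lemma itself) or replace it by this endpoint-plus-monotonicity argument.
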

\begin{proof}
Informally, we analyze the state of the strategy when $A_0$ is being cleaned but $v_b$ has not been reached, i.e., in a move $t$ from the lemma.
In the moves $t_{0}$ and $\first{b}$ each of $S_p$ and $S_{-p}$ has some clean edges that need to be guarded.
In the proof, we consider several cases as to which vertices can be guarded to protect those clean edges in moves  $t_{0}$ and $\first{b}$.
Then, we observe that, due to monotonicity, the set of clean edges when the two searchers of color $\RED$ are on $v_b$ could not be smaller than in any move prior to it. Thus, 
if both a vertex with colors $\{\NPCtrue{p}, \RED\}$ and a vertex with colors $\{\NPCfalse{p},\RED\}$ need not be guarded between these moves, then a recontamination occurs in the move $\first{b}$ which leads to a contradiction.

From Lemma~\ref{lemma:sequence} we have $t_{j}\leq t'_{j}<t_{0}\leq t'_{0} $ for each $j \in R$. 
All stars of color $\RED$ in subtrees $S_{p}$ and $S_{-p}$ attached to vertices $v_{\left |j\right|}, j \in R$,  contain clean edges before the move $t_{0}$, so at the move $t_{0}$ each such star is guarded from contaminated edges in $A_{0}$.

Consider the moves number $t_{0}$ and $\first{b}$. In these moves both searchers of color $\RED$  are in $A_{0}$, and $v_{\left |j\right|}$ are not attached to any clean edges, so guarding searchers are still necessary at move $\first{b}$. Recall that by the definition of $R$, $b\geq \left| j \right |$.
Because searchers of color $\RED$ are in $A_{0}$, the vertices with the following colors are occupied by searchers: $\NPCfalse{p}$ or $\NPCvariable{p}$ in $S_{-p}$ and $\NPCvariable{p}$ or $ \NPCtrue{p}$ in $S_{p}, p\in\{1,\ldots, n\}$. 
Because only one searcher of color $\NPCvariable{p}$ is available, at least one other searcher is placed on vertex with either $\NPCfalse{p}$ or $\NPCtrue{p}$ color, denoted by $c$.
In each $S_{p}$ and $S_{-p}$ there are only two such vertices separated by an edge of color $c$.
It remains to be proven that at least one of them has to be guarded in an arbitrary move $t\in [t_{0},\first{b} ]$.

Assume for a contradiction that both vertices with color $\NPCfalse{p}$ and $\NPCtrue{p}$ are no longer guarded in a move $t\in[t_0, \first{b}]$, thus
all edges incident to vertices with color $c$ are clean. By monotonicity, all these edges are clean in the move $\first{b}$. Edges of color $\RED$ incident to the vertex $v_i$ such that $\tau(G)=i$ are contaminated.
Since the two searchers of color $\RED$ are in $v_b$ in the move $\first{b}$, they are not in $G$. Thus, the searcher of color $c$ is the only one that can be used for guarding $\clean{G}{\first{b}}$ from $\cont{A_0}{\first{b}}$. Since all edges incident to the vertex  occupied by this searcher are clean, some recontamination occurs.
\end{proof}

\begin{lemma}\label{lemma:3SAT}
Let $x_{1},\ldots, x_{n}$ and a Boolean formula $C=C_{1}\land C_{2}\ldots\land C_{m}$ be an input to $\problemSAT$.
If there exists a search strategy using $2+3n+2m$ searchers for $\TSAT$, then the answer to $\problemSAT$ is $\YES$.
\end{lemma}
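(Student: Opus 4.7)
The plan is to extract a satisfying Boolean assignment from an arbitrary monotone search $\colS$-strategy $\cS$ that uses $2 + 3n + 2m$ searchers for $\TSAT$. By Lemma~\ref{lemma:colors} the color distribution of the searchers is forced, and by the reversal assumption $(*)$ we may suppose that $\cS$ cleans $A_0$ from $v_l$ towards $v_1$. Lemma~\ref{lemma:sequence} then guarantees that every variable area $A_j$ with $j\in R$, in particular the pairs $S_p, S_{-p}$, is already clean by move $t_0$. Lemma~\ref{lemma:guard} further tells us that during the entire window $[t_0, \first{b}]$, for each variable $p$ at least one of the vertex of colors $\{\NPCtrue{p}, \RED\}$ in $S_p$ or the vertex of colors $\{\NPCfalse{p}, \RED\}$ in $S_{-p}$ is occupied; because the two $\RED$-searchers stay on $A_0$ throughout this window and no other color matches these vertices, the occupant is necessarily the $\NPCtrue{p}$ (respectively $\NPCfalse{p}$) searcher.

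The first technical step is to establish a well-definedness claim: for each $p$, exactly one of the two searchers $\NPCtrue{p}, \NPCfalse{p}$ remains pinned on its guarding vertex throughout $[t_0, \first{b}]$. I would argue this by tracing the only admissible movements of these searchers: each moves only along edges of its own color or by removal and re-placement, and any swap of the guarding role between them would force a transitional moment where one has been removed from its vertex while the other must simultaneously shield both sides, a configuration ruled out by Lemma~\ref{lemma:guard} together with the tight count of searchers from Lemma~\ref{lemma:colors} and monotonicity of the cleaning of $A_0$. Granted well-definedness, I set $x_p=\mathrm{false}$ when $\NPCtrue{p}$ is the pinned searcher, mirroring the convention of Lemma~\ref{lemma:strategy}, and $x_p=\mathrm{true}$ otherwise.

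The second step is to verify that this assignment satisfies every clause $C_d$. Consider the three subtrees $L_{d,1}, L_{d,2}, L_{d,3}$ attached at $v_{b+d}$ and the move at which both $\RED$-searchers leave $v_{b+d}$ towards $v_{b+d-1}$; this move lies inside $[t_0, \first{b}]$ by $(*)$. At that moment, monotonicity forces every $\RED$-edge of every $L_{d,j}$ to be clean, so each $L_{d,j}$ still hosting a contaminated interior edge must be guarded at its root by a searcher whose color lies in $\{\NPCclause{d}, c_{d,j}\}$, where $c_{d,j}\in\{\NPCtrue{p}, \NPCfalse{p}\}$ is the non-red, non-$\NPCclause{d}$ color appearing in $L_{d,j}$. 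Since there are only two $\NPCclause{d}$-searchers, a pigeonhole argument in the spirit of the proof of Lemma~\ref{lemma:va} yields an index $j^*$ for which either $L_{d,j^*}$ is fully cleaned or its root is occupied by the $c_{d,j^*}$-searcher; in either case this searcher is absent from the relevant $S_p/S_{-p}$ guarding vertex at some move in $[t_0, \first{b}]$, so by the well-definedness claim the pinned searcher for variable $p$ is the opposite one, which in turn makes $\literal{d}{j^*}$ evaluate to true under our assignment.

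The main obstacle I anticipate is the well-definedness claim. A priori, nothing obviously prevents a strategy from parking $\NPCtrue{p}$ on $S_p$, later moving it off to help clean some $L_{d,j}$ with literal $x_p$, bringing it back, and subsequently performing the symmetric dance with $\NPCfalse{p}$; ruling this out requires a careful accounting argument showing that any such swap would demand a spare searcher that the budget $2+3n+2m$ does not supply, either on $A_0$ itself (contradicting the ongoing monotone cleaning driven by the two $\RED$-searchers) or on the variable subtree whose colored searcher is being swapped, at some moment during the swap.
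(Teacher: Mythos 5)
Your overall architecture matches the paper's: force the color distribution via Lemma~\ref{lemma:colors}, normalize the sweep direction via the reversal assumption, read off a Boolean assignment from which of $\NPCtrue{p},\NPCfalse{p}$ is tied up guarding the variable gadget, and then use a pigeonhole on the two $\NPCclause{d}$-searchers over the three subtrees $L_{d,1},L_{d,2},L_{d,3}$ to show each clause is satisfied. However, there is a genuine gap at the exact point you yourself flag: your assignment is only well defined if, for each $p$, a \emph{single} searcher among $\NPCtrue{p},\NPCfalse{p}$ stays pinned throughout the whole window $[t_{0},\first{b}]$, and you do not prove this --- you only describe the kind of accounting argument that would be needed. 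The paper does not prove (and does not need) this pinning statement: it defines the assignment from the searcher configuration at the \emph{single} move $\first{b}$ ($x_p$ is true iff $\NPCtrue{p}$ does not guard $A_{a+1+p}$ at that move), and then uses Lemma~\ref{lemma:guard} together with monotonicity to argue that a searcher whose color lies in the guarding set $X$ at move $\first{b}$ cannot have been spent cleaning any $L_{d,i}$ during $[t_{0},\first{b}]$. Anchoring the definition to one snapshot is what lets the counting argument go through without the invariance claim you would have to establish.

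The second, independent gap is in your clause step. From the pigeonhole you obtain an index $j^{*}$ such that $L_{d,j^{*}}$ is either fully clean or guarded by the $c_{d,j^{*}}$-searcher, and you conclude that ``in either case this searcher is absent from the relevant $S_p/S_{-p}$ guarding vertex at some move in $[t_{0},\first{b}]$.'' In the fully-clean case this does not follow: the edge of color $c_{d,j^{*}}$ inside $L_{d,j^{*}}$ could have been cleaned \emph{before} $t_{0}$, after which $L_{d,j^{*}}$ can be finished during $[t_{0},\first{b}]$ using only the $\RED$- and $\NPCclause{d}$-searchers, so the $c_{d,j^{*}}$-searcher need never leave the variable gadget inside that window. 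The paper closes exactly this hole with a separate argument: it shows (via Lemma~\ref{lemma:va2}, which is the reason the extra copies of the subtrees of the form $L'_{x'}$ are attached at $v_{l-1}$ in the construction of $\TSAT$) that a prematurely cleaned edge of color $x_{d,i}$ can be guarded against recontamination only by the searcher of that very color, so that searcher is pinned inside $L_{d,i}$ rather than on $S_p$ or $S_{-p}$, and the contradiction still goes through. Without an argument of this kind your case analysis is incomplete, and the reduction's correctness is not established.
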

\begin{proof}
The proof revolves around the configuration of searchers in the move  $\first{b}$.
We start by recalling the construction of subtrees based on clauses that must have been cleaned up to this point and the colors of searchers required to clean them. Then, we will use Lemma~\ref{lemma:guard} to address the availability of these colors.
We define a Boolean assignment as follows: $x_{p}$ is true if and only if a searcher of color $\NPCtrue{p}$  does \emph{not} guard the area $A_{a+1+p}$ in the move $\first{b}$, otherwise $x_p$ is false. Let $X\subset \{\NPCtrue{i}, \NPCfalse{i}\st i\in\{1,\ldots,n\}\}$ denote the colors of those searchers.
By Lemma \ref{lemma:guard}, a valid assignment will occur during execution of arbitrary successful search strategy using $2+3n+2m$ searchers.
We argue that some literal in each clause $C_{d}, d\in \{1,\ldots, m\}$, is true under the above assignment.

By construction of $\TSAT$, $v_{h}, h \in \{b+1,b+2,\ldots, l-1\}$, is the root of a subtree $L_{d, i}, i \in\{1,2,3\}$. $L_{d, i}$ contains an edge of color $\NPCtrue{p}$ if and only if the clause $C_{d}$ contains a variable $x_{p}$, and it contains an edge of color $\NPCfalse{p}$ if and only  if $C_{d}$ contains a variable's negation, $\overline{x_{p}}$. Let us denote this color as $x_{d, i}$.

Consider the step $t_{0}$.  It is impossible for any $L_{d, i}$ to be completely clean because all edges incident to $v_{h}, h \in \{b+1,b+2,\ldots, l-1\}$, are contaminated (there are not enough searchers of color $\RED$). 
Consider the move $\first{b}>t_{0}$. Each $L_{d, i}$ is completely clean or $\clean{A_0}{\first{b}}$ is guarded by searchers of colors $x_{d, i}$ and $\NPCclause{d}$ because $v_{h}$ is connected to clean edges and unguarded (there are not enough searchers of color $\RED$).
There are only two  searchers of color $\NPCclause{d}$, so for each $d$ one of the subtrees $L_{d, i}$ has a searcher of other color or is clean. Because $\sn{L_{d, i}}=3$, three out of five searchers in the following colors can be used: $\RED, x_{d, i},\NPCclause{d}$. The only searcher which is present in $L_{d, i}$ at move $\first{b}$ has color $x_{d, i}$.
Without loss of generality we assume that the strategy cleans a subtree if possible before guarding other subtrees rooted in the same vertex. 
Consider a move $m_{d,i}$ such that $\sn{L_{d, i}}$ searchers are used in $L_{d, i}$.
Due to the way the subtree is colored, a searcher in each color $\RED, x_{d, i},\NPCclause{d}$ has to be used in order to clean it. After $t_{0}$ a  searcher of color $\RED$ guards clean part of $A_{0}$, so only one searcher out of those five, namely the one of color $\NPCclause{d}$, can be present outside of $L_{d, i}$. $L_{d, i}$ can be fully cleaned only if 
searcher of color $x_{d, i}$ is available at this point during $[t_{0}, \first{b}]$, or all edges of color  $x_{d, i}$ were clean in the move $t_0$ . Due to its color this searcher can not be used to replace  
any searcher of color $x_{d, i}$ outside of $L_{d, i}$.

Let us address what follows if an edge of color  $x_{d, i}$ was clean in the move $t_0$. By construction, it can be guarded from contaminated edges of $P$ by a searcher of one of the following colors: $x_{d, i}$, $\NPCclause{d}$, $\RED$ in the moves of numbers from the interval $[\first{l}, \first{l-2}]$.
By Lemma~\ref{lemma:va2}, there exists a move of number in this interval such that all searchers of colors  $\NPCclause{d}$ and $\RED$ are not in $L_{d, i}$. Thus, in order to avoid recontamination, clean edges of color $x_{d, i}$ can be guarded only by a searcher of the same color.

Suppose for contradiction that no literal in a clause $C_i, i\in \{1,\ldots, n\}$ is true and
a search strategy for $\TSAT$ exists. 
By Lemma~\ref{lemma:guard}, one of the searchers of color $x\in\{ \NPCtrue{p}, \NPCfalse{p}\}$,  or a searcher of color $\RED$ is placed outside of $L_{d, i}$ during $[t_{0}, \first{b}]$. By the definition of a Boolean assignment $x\in X$. Additionally Lemma~\ref{lemma:guard} guarantees that no searcher of color $x$ is used during cleaning any $L_{d, i}$.
If $x=x_{d, i}$ then $m_{d, i}$ can not be performed and $L_{d, i}$ is guarded in the move $\first{b}$. 
Because there are only two searchers of color $\NPCclause{d}$, in order for a strategy to exist at least one subtree $L_{d, 1}$, $L_{d, 2}$, $L_{d, 3}$ for each $d$ is cleaned before $\first{b}$, or all three have to be guarded, and for that to happen they have to contain edges in at least one color $x_{d, i}\neq x$  corresponding to a true $\literal{d}{i}$ in the clause $C_{d}$.
\end{proof}

\section{NP-hardness of non-monotone searching of trees} \label{sec:nmhard}
This section is devoted to proving the problem remains NP-hard  when non-monotone search strategies are allowed:
\begin{theorem}\label{nph_t_theorem}
The problem $\problemHGS$ is NP-hard in the class of trees.
\end{theorem}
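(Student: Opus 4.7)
The plan is to reduce from $\problemSAT$ again, by modifying the tree $\TSAT$ from Section~\ref{sec:hard} into a new tree $\TSATP$ and showing that $\hsn{\TSATP} \leq \anum'$ for a carefully chosen $\anum'$ if and only if the underlying Boolean formula is satisfiable. Since Section~\ref{sec:hard} already gives the NP-hardness of the monotone version, it suffices to block any benefit from recontamination: if we can argue that an optimal strategy using $\anum'$ searchers must in fact be monotone on $\TSATP$, then the entire chain of lemmas from Section~\ref{sec:hard} (the color-assignment Lemma~\ref{lemma:colors}, the area-ordering Lemma~\ref{lemma:sequentialy}, the sequence Lemma~\ref{lemma:sequence}, the guarding Lemma~\ref{lemma:guard}, and the extraction of a satisfying assignment in Lemma~\ref{lemma:3SAT}) transfers verbatim, giving the reverse implication.

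The modification I would use is to attach, at each ``critical'' vertex of $\TSAT$ (in particular each junction $v_i$, $i \in I \setminus \{0\}$, and each root of a subtree $L_x$, $L'_{x'}$, $L_{d,j}$), a pendant ``blocker'' subtree of color $\RED$ and of sufficiently large search number. These blockers are attached via an edge in a color already incident to the vertex, so they neither increase the per-color lower bound $\lb{\TSATP}$ beyond what is counted in Lemma~\ref{lemma:colors}, nor introduce new constraints on the colors of available searchers. Their role is purely structural: once an edge of a blocker becomes clean, maintaining its cleanness consumes at most one guarding searcher, but any recontamination of an edge inside that blocker would require the full $\sn{\cdot}$ of the blocker to re-clean it, and by choosing the blocker large enough this cost provably exceeds the remaining searcher budget implied by the per-color allocation of Lemma~\ref{lemma:colors}.

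Given such $\TSATP$, the forward direction (satisfiable $\Rightarrow$ strategy exists) is a straightforward lift of Lemma~\ref{lemma:strategy}: the monotone strategy described there can be extended to clean each blocker at the time its anchoring vertex is first reached by the sweep of $A_0$, using the two searchers of color $\RED$ together with the color already present at that vertex; this introduces no new recontamination. For the reverse direction, one first shows that any search $\colS$-strategy for $\TSATP$ using $\anum'$ searchers is monotone: if some edge $e$ of $\TSATP$ were recontaminated, then either $e$ lies in a blocker (impossible by the budget argument above) or $e$ lies in the original $\TSAT$, in which case the recontamination propagates to an adjacent blocker through its attachment vertex, again exceeding the budget. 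Having ruled out recontamination, every lemma of Section~\ref{sec:hard} applies, including the reversibility assumption~\ref{eq:reversability}, and a satisfying assignment is read off as in Lemma~\ref{lemma:3SAT}.

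The main obstacle is calibrating the blockers so that the budget argument is tight: we need the per-color lower bounds of Lemma~\ref{lem:lower} on $\TSATP$ to remain $3n+2m+2$, while simultaneously ensuring that any single recontamination in $\TSATP$ is detectable as a shortage of some specific color at a specific moment. The cleanest way I see is to make each blocker a star of color $\RED$ attached via an edge whose color matches an existing color at the anchor vertex, tuned so that its search number equals $2$ (matching the red budget) but positioned so that its recontamination forces a third red searcher to appear simultaneously with the two already used on $A_0$. Verifying this simultaneity for every possible non-monotone move --- in particular for moves that occur outside the intervals analyzed in Section~\ref{sec:hard} --- is the technical heart of the argument and will require a case analysis parallel to, but more delicate than, Lemmas~\ref{lemma:sequentialy}--\ref{lemma:guard}.
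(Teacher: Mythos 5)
Your overall plan --- reduce from $\problemSAT$ by modifying $\TSAT$ --- matches the paper, but your central mechanism does not, and it contains a genuine gap. You propose to attach ``blocker'' subtrees whose recontamination ``would require the full $\sn{\cdot}$ of the blocker to re-clean it,'' so that ``by choosing the blocker large enough this cost provably exceeds the remaining searcher budget,'' thereby forcing every optimal strategy on $\TSATP$ to be monotone. This budget argument does not work: re-cleaning a recontaminated subtree never requires more searchers than cleaning it the first time, and nothing forces the re-cleaning to happen at any particular moment. A non-monotone strategy can simply let a blocker become recontaminated, finish whatever it was doing, and return later when the two searchers of color $\RED$ are free again. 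Recontamination costs moves, not searchers, so ``large enough'' buys you nothing in the searcher-count objective. Your closing remark that the blocker should be ``positioned so that its recontamination forces a third red searcher to appear simultaneously'' is exactly the claim that needs proof, and no mechanism in your construction creates that simultaneity: there is no deadline by which the blocker must be clean again. Without forced monotonicity, you cannot invoke the Section~\ref{sec:hard} lemmas verbatim (they all assume monotonicity, e.g.\ $U^-\cap U^+=\emptyset$ in Lemma~\ref{lemma:sequence} and the core step of Lemma~\ref{lemma:guard}), and the reverse implication collapses. It is also telling that the paper leaves NP-membership of $\problemHGS$ open --- the authors could not establish that optimal strategies on their hard instances are monotone, which is what your plan requires.

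The paper takes a different route: it does not try to prevent recontamination but instead neutralizes its only useful effect. It introduces $2n$ new ``valve'' colors $\NPCvalve{p}{1},\NPCvalve{p}{2}$ (raising the budget to $5n+2m+2$), inserts stars $O_p$ in these colors into the variable gadgets $\tilde{S}_p,\tilde{S}_{-p}$, enlarges the $L_x$ gadgets so they need all five colors $x,\RED,\NPCvariable{p},\NPCvalve{p}{1},\NPCvalve{p}{2}$ simultaneously, and then develops a framework of ``successful attempts'' to reason directly about non-monotone strategies. The key Lemma~\ref{lemma:switch} shows that switching which of $\NPCtrue{p},\NPCfalse{p}$ is the guarding searcher requires cleaning or recontaminating a star $O_p$, that this can happen at most once, and only in one of $\tilde{S}_p,\tilde{S}_{-p}$ --- so the Boolean assignment read off at move $\f{\tilde{b}}{1}$ is still consistent even though recontamination may occur elsewhere. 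If you want to salvage your write-up, you would need to replace the blocker/budget argument with something that, like the paper's valve gadgets, pins down a specific invariant that survives recontamination, rather than attempting to exclude recontamination altogether.
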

For the proof, we adapt the tree $\TSAT$ described in the previous section. The modified tree is denoted by $\TSATP$ and it is obtained by performing the following operations on the tree $\TSAT$. In order to preserve the familiar notation we denote each component of $\TSATP$ analogous to its counterpart in $\TSAT$ with an additional sign $\sim$ above its designation. 

We add $4n$ vertices to the path $P$ in the following fashion. Replace the edges $\{v_{a+1}, v_{a+2}\}$ and $\{v_{a+1}, v_{a}\}$ with paths of color $\RED$ of length $2n$ each, denoted by $\tilde{P}_R$ and $\tilde{P}_L$ respectively. Enumerate the vertices of $\tilde{P}$ in  $\TSATP$ as $\tilde{v}_i$ in such a way that $\tilde{v}_1=v_1$, $\tilde{v}_{\tilde{a}+1}=v_{a+1+2n}$, $\tilde{v}_{\tilde{b}}=v_{b+1+4n}$, $\tilde{v}_{\tilde{l}}=v_{l+4n}$. Note that this enumeration preserves the informal division of vertices into sets on the left and right of $\tilde{v}_{a+1}$, and $\tilde{R}=\{ \tilde{a}+2,\ldots,\tilde{b}-1 \} \cup \{-(\tilde{b}-1),\ldots,-(\tilde{a}+2) \}$ is defined accordingly.

We use $2n$ additional colors $O=\{\NPCvalve{1}{1},\ldots,\NPCvalve{n}{1},\NPCvalve{1}{2},\ldots, \NPCvalve{n}{2}\}$. For each $o\in O$ create a tree $H_{o}$ following the construction defined in the previous section and attach one to a unique vertex of the path $\tilde{P}_L$. We do the same for the path $\tilde{P}_R$ so that $4n$ subtrees are created in total. Let $\tilde{H}_o( \tilde{R})$ denote a subtree containing an edge of color $o\in O$ attached to the vertex $\tilde{v}_i,  i\in \tilde{R}$.  

Next we modify the construction of each subtree $L_x, x\in\{\NPCtrue{i},\NPCfalse{i}\mid i\in\{1,\ldots, n\}\}$  rooted in $v_{a+1}$ in the following way. Remove 2 leaves of color $\RED$ and attach 3 children by the edges of color $\NPCvalve{i}{1}$ to each leaf. Then attach 3 children by the edges of color $\NPCvalve{i}{2}$ to each of the new leaves.
Finally attach 2 children by the edges of color $\RED$ to each of the lastly added leaves. Denote the modified $L_x$ as $\tilde{L}_x$. Note that $\sn{\tilde{L}_x}=5$ and $\tilde{L}_x$ requires searchers of colors $x,\RED, \NPCvariable{i}, \NPCvalve{i}{1}, \NPCvalve{i}{2}$  to be simultaneously present in some move in $\tilde{L}_x$ in order to search it. In $\TSATP$, eleven copies of $\tilde{L}_x$ are rooted in $\tilde{v}_{a+1}$ in place of five copies of $L_x$ rooted in $v_{a+1}$ in the original $\TSAT$. Whenever an arbitrary copy can be chosen the notation of $\tilde{L}_x$ is used, when the argument requires copies to be distinct they are denoted by $\tilde{L}_{x, i}, i\in\{1,\ldots,11\}$.

Define a star $O_p, p\in\{1,\ldots, n\}$ with 3 leaves incident to edges of colors:  $\NPCvalve{p}{1}$, $\NPCvalve{p}{1}$ and  $\NPCvalve{p}{2}$.
We modify each subtree $\tilde{S}_p$ in $\TSATP$ corresponding to $S_p$ constructed according to the definition in the previous section. Recall that each $S_{p}$ and $\tilde{S}_{p}$ (respectively $S_{-p}$ and $\tilde{S}_{-p}$) contains an edge of color $\NPCtrue{p}$ ($\NPCfalse{p}$ respectively).
Define a $plugin(v, u)$ operation for vertices $u$ and $v$ of a tree as replacement of maximal subtree such that $u$ and $v$ are its leaves with a copy of $O_p$ in such a way that $u$ is identified with a leaf of color $\NPCvalve{p}{1}$ and $v$ is identified with a leaf of color $\NPCvalve{p}{2}$. 
For each $\tilde{T}\in\{\tilde{S}_p, \tilde{S}_{-p}\st p\in\{1,\ldots,n\}\}$ denote the endpoint which belongs to $\tilde{A}_0$ of the edge of color $\NPCtrue{p}$ or $\NPCfalse{p}$ in $\tilde{T}$ as $u_1$, and the other endpoint of this edge as $u_2$. Denote the endpoint which belongs to $\tilde{A}_{j}, j\in \tilde{R}$, of edge of color $\NPCvariable{p}$ as $u_3$, and the other endpoint of this edge as $u_4$. Perform $plugin(u_1, u_1)$, $plugin(u_2, u_3)$ and $plugin(u_4, u_4)$.

Informally speaking, the described modification prevents using recontamination to switch the searchers used as a basis for Boolean assignment without undoing all the progress made while cleaning subtrees corresponding to the clauses.

The following lemma follows directly from the lower bound $\lb{\TSATP}$ and the proof is analogous to Lemma~\ref{lemma:colors}
\begin{lemma}[Color assignment]\label{lemma:colorsnm}
A $\colS$-strategy using $\anum=5n+2m+2$ searchers has to color them in the following fashion: one searcher for each color in $\{\NPCtrue{p},\NPCfalse{p},\NPCvariable{p}, \NPCvalve{p}{1},\NPCvalve{p}{2}\st p\in\{1,\ldots,n\}\}$ and two searchers for each color in $\{\RED\}\cup\{\NPCclause{1}, \ldots, \NPCclause{m}\}$. 
\qed
\end{lemma}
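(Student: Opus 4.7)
The plan is to mimic the structure of Lemma~\ref{lemma:colors}: first establish the lower bound $\lb{\TSATP}\geq 5n+2m+2$ by producing, for each color, an area of the required search number, and then observe that because $\anum=5n+2m+2$, every individual bound must be attained with equality, which forces the distribution of colors claimed. Lemma~\ref{lem:lower} supplies the final inequality.

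First I would handle the colors inherited from $\TSAT$. The two red searchers are forced by any single area $\tilde A_i$, which is still a star with three $\RED$-edges and satisfies $\sn{\tilde A_i}=2$. For each clause color $\NPCclause{d}$, the subtree $L'_{\NPCclause{d}}$ (which is not altered in the new construction) still contains, at the vertex playing the role of $l_e$, an area of color $\NPCclause{d}$ that is a star with four edges; hence $\sn\geq 2$ and two searchers of color $\NPCclause{d}$ are required for each of the $m$ clauses. For each color in $\{\NPCtrue{p},\NPCfalse{p},\NPCvariable{p}\}$ one checks that an edge of that color exists whose two endpoints are junctions in $\TSATP$; by Fact~\ref{fact:leaves}, such an edge is an area by itself, giving $\sn\geq 1$ and one searcher of that color.

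The new ingredient is the two colors $\NPCvalve{p}{1}$ and $\NPCvalve{p}{2}$ for each $p\in\{1,\ldots,n\}$. For each of them I would exhibit, inside any copy of $\tilde L_x$ rooted at $\tilde v_{\tilde a+1}$, a single edge of that color whose two endpoints are both junctions. Concretely, in $\tilde L_x$ the vertices on the path of colors $x,\NPCvalve{i}{1},\NPCvalve{i}{2},\RED$ each carry edges of two consecutive distinct colors, and therefore each is a junction. Hence every $\NPCvalve{i}{1}$-edge and every $\NPCvalve{i}{2}$-edge in $\tilde L_x$ is, by Fact~\ref{fact:leaves}, a standalone area of search number $1$. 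Applying Lemma~\ref{lem:clean1} to one such area per color gives the required one searcher of color $\NPCvalve{p}{1}$ and one of color $\NPCvalve{p}{2}$ for each $p$. (The same holds in the $O_p$ gadgets, where by construction every internal vertex is a junction and each $\NPCvalve{p}{j}$-edge is likewise an area by itself.)

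Summing the lower bounds yields
\[
\lb{\TSATP}\;\geq\;\underbrace{3n}_{\NPCtrue{p},\NPCfalse{p},\NPCvariable{p}}\;+\;\underbrace{2n}_{\NPCvalve{p}{1},\NPCvalve{p}{2}}\;+\;\underbrace{2m}_{\NPCclause{d}}\;+\;\underbrace{2}_{\RED}\;=\;5n+2m+2\;=\;\anum.
\]
By Lemma~\ref{lem:lower}, any search $\colS$-strategy for $\TSATP$ needs at least this many searchers. Because the assumed strategy uses exactly $\anum$, each of the partial lower bounds above is met with equality, which pins down the claimed color assignment uniquely. The only step that requires care is the junction bookkeeping for the $\NPCvalve{p}{j}$-areas, but this follows directly from the alternation of edge colors along the relevant paths of $\tilde L_x$ and $O_p$; no substantial obstacle is expected beyond that structural check.
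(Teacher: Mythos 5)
Your proposal is correct and follows exactly the route the paper intends: the paper omits this proof, stating only that it follows from the lower bound $\lb{\TSATP}$ analogously to Lemma~\ref{lemma:colors}, and you supply precisely that computation --- per-color area lower bounds via Lemma~\ref{lem:clean1} (in particular, each $\NPCvalve{p}{j}$-edge is a single-edge area because both its endpoints are junctions), summing to $5n+2m+2=\anum$ so that equality is forced everywhere. The only cosmetic quibbles are the stray reference to ``$l_e$'' (which in the paper denotes a recontamination count, not a vertex) and that you could also note the $\NPCvalve{p}{j}$-edges inside the subtrees $\tilde{H}_o(\tilde{R})$ are likewise single-edge areas, but neither affects the argument.
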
 

\begin{lemma}\label{lemma:strategynm}
Let $x_{1},\ldots, x_{n}$ and a Boolean formula $C=C_{1}\land C_{2}\ldots\land C_{m}$ be an input to the $\problemSAT$.
If the answer to $\problemSAT$ is $\YES$, then there exists a search strategy using $5n+2m+2$ searchers for $\TSATP$.
\end{lemma}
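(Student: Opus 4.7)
The plan is to adapt the monotone strategy from Lemma~\ref{lemma:strategy} to $\TSATP$, exploiting the additional $2n$ searchers of colors in $O$ guaranteed by Lemma~\ref{lemma:colorsnm} to cope with the three structural additions: the plugin stars $O_p$ inside each modified $\tilde{S}_p,\tilde{S}_{-p}$, the subtrees $\tilde{H}_o$ attached along the extended red paths $\tilde{P}_L$ and $\tilde{P}_R$, and the enlarged subtrees $\tilde{L}_x$ rooted at $\tilde{v}_{\tilde{a}+1}$ (which now have search number five). The overall skeleton is identical to that of Lemma~\ref{lemma:strategy}: SAT-guided placement of $\NPCtrue{p}/\NPCfalse{p}$ separators inside $\tilde{S}_p/\tilde{S}_{-p}$, cleaning of the right-hand areas $\tilde{A}_j$ for $j\in\tilde{R}$, then cleaning of $\tilde{A}_0$ from $\tilde{v}_{\tilde{l}}$ toward $\tilde{v}_1$ while processing the clause components along the way, and finally cleaning the left-hand components rooted at $\tilde{v}_{\tilde{a}+1}$.

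What changes is how each new structural element is handled. When installing the SAT-dictated $\NPCtrue{p}/\NPCfalse{p}$ separator inside $\tilde{S}_p$ or $\tilde{S}_{-p}$, I would first sweep the plugin stars $O_p$ using the pair of searchers of colors $\NPCvalve{p}{1},\NPCvalve{p}{2}$ together with the corresponding $\NPCvariable{p}$ and $\NPCtrue{p}/\NPCfalse{p}$ searcher, and then release the $O$-colored searchers for later reuse. While cleaning $\tilde{P}_L$ and $\tilde{P}_R$, each attached $\tilde{H}_o$ is cleaned by the two red searchers plus the single searcher of color $o\in O$, exactly as Lemma~\ref{lemma:strategy} treats an $H_z$. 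In the final phase, each of the eleven copies of $\tilde{L}_x$ rooted at $\tilde{v}_{\tilde{a}+1}$ is cleaned sequentially with the five searchers of colors $x,\RED,\NPCvariable{i},\NPCvalve{i}{1},\NPCvalve{i}{2}$, all of which are free at that point since no other component still holds contamination requiring them.

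The clause-phase argument from Lemma~\ref{lemma:strategy} transfers verbatim: because $C$ is satisfied, for every clause $C_d$ there exists at least one clause subtree $\tilde{L}_{d,i}$ whose non-red, non-$\NPCclause{d}$ color $x_{d,i}$ matches a currently free $\NPCtrue{p}/\NPCfalse{p}$ searcher (the one of opposite polarity to the separator pinned in the corresponding variable component), so the three searchers of colors $\RED,\NPCclause{d},x_{d,i}$ suffice to clean that copy; the remaining two copies are merely guarded at $\tilde{A}_0$ by the pair of $\NPCclause{d}$ searchers.

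The main obstacle I expect is the bookkeeping once cleaning of $\tilde{A}_0$ finally crosses a separator at some $\tilde{v}_j$, $j\in\tilde{R}$: at that instant the remaining contamination of the associated $\tilde{S}_p$ or $\tilde{S}_{-p}$, including all three plugin stars introduced by the operations $plugin(u_1,u_1)$, $plugin(u_2,u_3)$, $plugin(u_4,u_4)$, must be eliminated and the pinned searchers released without recontaminating anything already clean. One needs to check that at that moment both $\NPCvalve{p}{1}$ and $\NPCvalve{p}{2}$ searchers are available---they are, provided that no $\tilde{L}_x$ of the same index $p$ has been started yet on the left-hand side and that no $\tilde{H}_o$ with $o\in\{\NPCvalve{p}{1},\NPCvalve{p}{2}\}$ is currently being guarded---so that the standard three-color sweep cleans each $O_p$ from its $\tilde{A}_j$-side leaf toward $\tilde{A}_0$. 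This is the only place where the plugin structure is really exercised, and verifying the invariant reduces to counting searchers against the allocation of Lemma~\ref{lemma:colorsnm}.
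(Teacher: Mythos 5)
Your proposal is correct and follows essentially the same route as the paper: it adapts the monotone strategy of Lemma~\ref{lemma:strategy}, cleans the plugin stars $O_p$ early while the searchers of colors $\NPCvalve{p}{1},\NPCvalve{p}{2}$ are still uncommitted, handles each $\tilde{H}_o$ with the two red searchers plus the single $o$-colored searcher left behind as a guard, and releases those guards before the five-color cleaning of the $\tilde{L}_x$ copies. The ordering invariant you flag at the end is exactly the point the paper's proof addresses by inserting its instruction 2' and by removing the $o$-colored searcher in instruction 4 once $\tilde{H}_o(\tilde{R})$ is fully clean.
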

\begin{proof}
We propose a modification to the monotone strategy described in Lemma \ref{lemma:strategy}. Note that the modified strategy is still monotone (we aim to show that recontamination does not help to search $\TSATP$). 
In the instruction~\ref{s:step2} we clean $\tilde{A}_{-(\tilde{a}+2n+1+p)}$ instead of $A_{-(a+1+p)}$ and stars $O_{p}$ instead of singular  edges of color $\RED$ replaced by these stars during construction of $\TSATP$.  No searcher of color either $\NPCvalve{p}{1}$ or  $\NPCvalve{p}{2}$ has already been placed on $\TSATP$ so it is  always possible. We introduce an additional instruction 2' executed after the instruction number~\ref{s:step2}. 
\begin{description}
\item [2'.] For each $\tilde{H}_{o}(\tilde{R})$ place a searcher of color $o$ on the vertex
of color $o$  and belongs to an $\tilde{A}_{i}$ in $\tilde{H}_{o}(\tilde{R})$. Then, clean each $A_{i}$, where $ \tilde{v}_i\in \tilde{P}_R$, and then the edge of color $o$. The searcher of color $o$ stays in $\tilde{H}_{o}(\tilde{R})$.
\end{description}
In instruction \ref{s:step4} a searcher of color $o$ is removed from $\tilde{H}_{o}(\tilde{R})$ when  the entire $\tilde{H}_{o}(\tilde{R})$ becomes clean during cleaning of $\tilde{A}_0$ in order to ensure that $\tilde{L}_x$ can be searched. 
\end{proof}

\subsection{Preliminaries on non-monotone strategies for $\TSATP$}

Let $G'$ be a subgraph of $G$. We define a \textit{successful attempt} $\atts{G'}{i}{S}=[t,t']$ as a maximal interval of numbers  of moves such that for each  $j\in [t,t']$, $\cont{G'}{j}\neq G'$, $\cont{G'}{j}\neq \emptyset$  and $\clean{G'}{t'}=G'$,  and $i$ is the ordeal number of this attempt among other successful attempts on $G'$. Analogously define an \textit{unsuccessful attempt} $\atts{G'}{i}{U}=[t,t']$  as a maximal interval of numbers  of moves such that for each  $j\in [t,t']$, $\cont{G'}{j}\neq G'$, $\clean{G'}{j}\neq G'$ and $i$ is the ordeal number of this attempt among other unsuccessful attempts on $G'$. 

By definition, at least one edge of $G'$ is clean during an attempt on $G'$. We remove the prefix $U$ or $S$ whenever the success of the attempt is not important at the point of speaking.

Note that any search strategy which cleans a graph $G$ contains the $\atts{G'}{1}{S}$  for any subgraph $G'$, thus in order to show that cleaning a graph is impossible it suffices to prove that there exists a subgraph for which there can be no successful attempt. By strengthening the previous statement we obtain the following:
\begin{observation}
If $G'$ is a subgraph of $G$, then for each $\atts{G}{i}{S}$ there exists $\atts{G'}{j}{S}$ such that $\atts{G'}{j}{S}\subseteq \atts{G}{i}{S}$.
\end{observation}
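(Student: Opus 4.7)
The plan is to construct, within the interval $[t,t']=\atts{G}{i}{S}$, a maximal contiguous sub-interval that satisfies the definition of a successful attempt on $G'$. The argument is essentially a bookkeeping one that propagates contamination dynamics from $G$ down to its subgraph $G'$.

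First I record the consequence of success: since $[t,t']$ is a successful attempt on $G$, the subgraph $G$ is fully clean at the end of move $t'$; as $G'\subseteq G$, also $G'$ is fully clean at that point. I then walk backwards from $t'$ to locate the right endpoint of the desired attempt on $G'$: take the largest index $s'\leq t'$ at which $G'$ transitions from partially clean to fully clean, meaning $\cont{G'}{s'}\neq\emptyset$ just before move $s'$ while $\clean{G'}{\cdot}=G'$ just after. Existence of such $s'$ follows whenever $G'$ carries contamination at some moment in $[t,t']$, since the cleaning of the last contaminated edge of $G'$ must occur at some specific move.

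Next I extract the left endpoint. Let $s$ be the smallest index in $[t,s']$ such that for every $j\in[s,s']$ the subgraph $G'$ is partially clean prior to move $j$, meaning $\cont{G'}{j}\notin\{\emptyset,G'\}$. Minimality of $s$ ensures maximality of the interval $[s,s']$: extending leftward would force $G'$ to be either fully contaminated or fully clean just before move $s$, violating one of the two partial-cleanness conditions. Combined with the endpoint condition at $s'$, this matches the definition of $\atts{G'}{j}{S}$ for some $j$, and $[s,s']\subseteq[t,t']$ holds by construction.

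The main obstacle I anticipate is the degenerate case in which $G'$ is fully clean throughout $[t,t']$, so that no $s'$ as above exists inside the attempt on $G$. This is either ruled out by the context (one only considers subgraphs $G'$ that actually receive contamination during the attempt on $G$, which is the intended reading in the subsequent arguments) or handled by treating the observation as vacuous when $G'$ is untouched. The remaining verification is a routine bookkeeping exercise showing that partial-cleanness periods of a subgraph nest inside those of its supergraph.
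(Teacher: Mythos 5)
The paper offers no proof of this observation (it is presented as an immediate strengthening of the remark that any strategy cleaning $G$ contains $\atts{G'}{1}{S}$), so the comparison is really about whether your reconstruction is sound. Your overall plan --- take the last move $s'$ in $[t,t']$ at which $G'$ passes from partially clean to fully clean and sweep left to the start of that partial-cleanness window --- is the right one, but the justification of the left endpoint has a hole. You define $s$ as the smallest index \emph{in $[t,s']$} with the required property and then claim that minimality of $s$ forces $G'$ to be fully clean or fully contaminated just before the interval. That inference is valid only when $s>t$; if $s=t$, minimality within $[t,s']$ says nothing about the state of $G'$ at move $t-1$, and if $G'$ were still partially clean there, the \emph{maximal} interval demanded by the definition of $\atts{G'}{j}{S}$ would protrude to the left of $t$ and the claimed containment $\atts{G'}{j}{S}\subseteq\atts{G}{i}{S}$ would fail. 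The missing ingredient is the maximality of $\atts{G}{i}{S}$ itself: prior to move $t-1$ one has $\cont{G}{t-1}\in\{\emptyset, E(G)\}$, and since $G'$ is a subgraph of $G$ this all-or-nothing state is inherited by $G'$, so the partial-cleanness window of $G'$ cannot cross $t-1$. This is the one step of substance in the whole argument and it must be made explicit rather than attributed to the minimality of $s$.

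Your ``degenerate case'' is not merely a loose end; it is a genuine restriction on the statement. If the attempt $[t,t']$ on $G$ begins with $G$ fully clean (a later attempt following a recontamination that misses $G'$), or if $G'$ consists of a single edge (which is never partially clean), then no attempt on $G'$ occurs inside $[t,t']$ and the observation, read literally, is false. The same boundary fact that repairs the left endpoint also delimits exactly when the statement holds: if $G$ is fully contaminated just before the attempt and $G'$ has at least two edges, then $G'$ enters the attempt fully contaminated, leaves it fully clean, and since a single move cleans at most one edge it must pass through partial cleanness in between, which yields the desired successful sub-attempt. This covers the paper's uses (the outermost containing attempt is always $\F$, the first successful attempt on $\tilde{P}$, which necessarily starts from full contamination, and the subgraphs considered all have several edges), but the hypothesis should be stated, not gestured at as ``the intended reading.''
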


We skip the proof of the following lemma as it follows from the folklore of the way to clean a caterpillar graph. 

\begin{lemma} \label{lemma:fail}
If in any move of $\att{\tilde{P}}{i}=[t,t']$ both edges $\{\tilde{v}_1, \tilde{v}_2\}$ and $\{\tilde{v}_{\tilde{l}}, \tilde{v}_{\tilde{l}-1}\}$ are contaminated, then the attempt is unsuccessful.\qed
\end{lemma}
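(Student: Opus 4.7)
The plan is to argue by contradiction, reducing to the folklore that searching a monochromatic path with only two same-colour searchers admits essentially no freedom in how the two searchers move.

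Two structural facts about $\tilde{P}$ inside $\TSATP$ are crucial. First, by Lemma~\ref{lemma:colorsnm} every $\colS$-strategy uses exactly two searchers of colour $\RED$; since every edge of $\tilde{P}$ has colour $\RED$, these two are the only searchers that can ever slide along an edge of $\tilde{P}$. Second, every subtree attached to $\tilde{P}$ in the construction of $\TSATP$ is attached through a $\RED$ edge, so every vertex of $\tilde{P}$ has colour set $\{\RED\}$; hence any searcher ever placed on or slid to a vertex of $\tilde{P}$ must itself be $\RED$. In particular, each vertex of $\tilde{P}$ that is occupied in order to block the spread of contamination on $\tilde{P}$ costs us one of the two $\RED$ searchers.

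Assuming for contradiction that $\att{\tilde{P}}{i}=[t,t']$ is successful and that some $s\in[t,t']$ has both $\{\tilde{v}_1,\tilde{v}_2\}$ and $\{\tilde{v}_{\tilde{l}-1},\tilde{v}_{\tilde{l}}\}$ contaminated, I would take $t_1$ and $t_l$ to be the last moves in $[t,t']$ at which these two edges respectively become clean. Both exist because $\tilde{P}$ is fully clean at $t'$, and they are distinct since at most one edge is cleaned per move; assume without loss of generality $t_1<t_l$. For every $j\in[t_1,t_l-1]$ the edge $\{\tilde{v}_1,\tilde{v}_2\}$ is clean while $\{\tilde{v}_{\tilde{l}-1},\tilde{v}_{\tilde{l}}\}$ is contaminated, so the clean segment of $\tilde{P}$ containing $\{\tilde{v}_1,\tilde{v}_2\}$ has a right-end boundary at some internal vertex $\tilde{v}_{k(j)}$ of $\tilde{P}$ that, by the second observation above, must be occupied by one of the two $\RED$ searchers throughout this interval.

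The crux --- and what I expect to be the main obstacle --- is to rule out the possibility that the second $\RED$ searcher travels across $\tilde{P}$ to slide across $\{\tilde{v}_{\tilde{l}-1},\tilde{v}_{\tilde{l}}\}$ at $t_l$ without triggering a recontamination that reaches back to $\{\tilde{v}_1,\tilde{v}_2\}$. Following that searcher's route along $\tilde{P}$, each slide vacates an internal vertex of $\tilde{P}$ whose attached subtree cannot be entirely clean during the attempt on $\tilde{P}$ --- a resource-counting argument against the $5n+2m$ non-$\RED$ searchers shows that the subtrees hanging from the interior of $\tilde{P}$ cannot all be simultaneously clean and internally guarded. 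The newly vacated $\tilde{P}$-vertex therefore becomes a conduit through which contamination propagates back along $\tilde{P}$ to $\{\tilde{v}_1,\tilde{v}_2\}$, recontaminating it after $t_1$ and contradicting the choice of $t_1$ as the last move at which that edge became clean. The case $t_l<t_1$ is symmetric, completing the contradiction and proving that $\att{\tilde P}{i}$ must be unsuccessful.
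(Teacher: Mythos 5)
The paper does not actually prove this lemma---it is explicitly skipped as ``folklore of the way to clean a caterpillar graph''---so your attempt has to stand on its own, and it does not: the decisive step is both hand-waved and, as stated, false. Your setup is fine (two searchers of colour $\RED$ by Lemma~\ref{lemma:colorsnm}, every vertex of $\tilde{P}$ has colour set $\{\RED\}$, so only those two searchers can ever occupy a vertex of $\tilde{P}$). The fatal problem is that after the setup you never use the hypothesis again. The interval $[t_1,t_l]$ between the last cleanings of the two end edges is exactly where an ordinary, perfectly legal end-to-end sweep of the caterpillar lives: one $\RED$ searcher guards the advancing boundary vertex while the other cleans the attached subtrees at that vertex, after which the guard slides one step further. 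In such a sweep the vacated interior vertices \emph{do} have entirely clean attached subtrees, so your claim that ``each slide vacates an internal vertex of $\tilde{P}$ whose attached subtree cannot be entirely clean during the attempt'' is simply wrong, and the ``resource-counting argument'' you invoke to show the interior subtrees ``cannot all be simultaneously clean and internally guarded'' is not a counting fact at all---it is essentially the lemma restated. An argument that never exploits the hypothesized move would prove that \emph{no} attempt on $\tilde{P}$ can succeed, which is absurd.

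The idea you need is concentrated at the hypothesized move $s$ itself. At $s$ some edge of $\tilde{P}$ is clean (by the definition of an attempt) but neither end edge is, so the clean part of $\tilde{P}$ is a nonempty union of internal segments; each maximal segment has two boundary vertices $\tilde{v}_p,\tilde{v}_q$ with $2\le p<q\le\tilde{l}-1$, each incident to both a clean and a contaminated edge of $\tilde{P}$ and hence occupied, necessarily by a $\RED$ searcher. With only two $\RED$ searchers there is exactly one such segment and \emph{both} searchers are pinned at its ends. From this configuration neither searcher can ever be freed: removing one, or sliding it off $\tilde{P}$ into an attached subtree, lets contamination sweep through the vacated boundary vertex across the whole clean segment, making $\cont{\tilde{P}}{\cdot}=\tilde{P}$ and ending the attempt unsuccessfully. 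The only remaining moves advance a boundary along $\tilde{P}$, which is possible only past a vertex all of whose attached (necessarily $\RED$) edges are already clean; this stalls at the very next vertex, because every vertex outside the segment was unoccupied and adjacent to a contaminated path edge at move $s$, so its attachment edges are contaminated and can only be cleaned by a $\RED$ searcher that cannot be spared. Hence the end edges are never reached and the attempt cannot terminate with $\tilde{P}$ clean. That is the folklore argument the paper is alluding to, and it is precisely the part your write-up leaves out.
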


As a consequence we can be sure that a $\atts{\tilde{P}}{i}{S}$ starts cleaning at either $\tilde{v}_1$ or $\tilde{v}_{\tilde{l}}$ and ends at $\tilde{v}_{\tilde{l}}$ or $\tilde{v}_1$ respectively. If it starts at $\tilde{v}_l$ and if the edge $\{\tilde{v}_{j+1}, \tilde{v}_{j}\}$ is clean, then the next clean edge of $\tilde{P}$ can be only $\{\tilde{v}_{j}, \tilde{v}_{j-1}\}$ or the next contaminated set of edges of $\tilde{P}$ has to include all edges $\{\tilde{v}_{j+1}, \tilde{v}_{j}\},\ldots, \{\tilde{v}_{j+1+x}, \tilde{v}_{j+x}\}$ for some $j+x<l, j-1>0$  and no other edges of $\tilde{P}$. We say that in such attempt a strategy \emph{cleans $\tilde{P}$ from $\tilde{v}_{\tilde{l}}$ to $\tilde{v}_1$}.
Thanks to the result concerning reversal of strategies established in \cite{WormanYang08} a symmetrical case does not need to be considered. Thus, we establish an assumption about non-monotone strategies analogous to ~\ref{eq:reversability}:
\begin{enumerate} [label={\normalfont{(**)}},leftmargin=*]
\item\label{eq:reversability-nm} $\cS$ cleans $\tilde{P}$ in $\F$ from $\tilde{v}_{\tilde{l}}$ to $\tilde{v}_1$.
\end{enumerate} 

Denote the number of a move when two searchers of color $\RED$ arrive on the vertex $\tilde{v}_i$ of the path $\tilde{P}$ in $\F$ as $\f{i}{j}$ where $j$ is the ordinal number of the move $\f{i}{j}$ among other moves $\f{i}{j}$ of index $i$. Specifically $\f{i}{j+1}$ is the number of the first such move after the move  $\f{i}{j}$.
Informally speaking, the $j$ in the expression $\f{i}{j}$ indicates how many times the vertex $\tilde{v}_i$ was reached in the first successful attempt to clean $\tilde{P}$.
Let $\tilde{P}_{i}, i\in \{1, \ldots, \tilde{l}\}$, denote $\induced{\TSATP}{\{\tilde{v}_{\tilde{l}},\ldots, \tilde{v}_{\tilde{i}}\}}$. Let $i(j)=k+j$ denote the ordeal number of the $\atts{G}{i(j)}{S}$ such that $i(0)$ is the ordeal number of the first $\atts{G}{k}{S}$ such that $\atts{G}{k}{S}\subseteq\F$.
Whenever we speak of $\atts{G}{i(0)}{S}$, we are concerned with the first attempt to clean $G$ within the first attempt which successfully cleaned $\tilde{P}$.

\subsection{Some technical lemmas}

In the next lemma we show that before the vertex $\tilde{v}_{\tilde{a}}$ is reached in the first successful attempt to clean $\tilde{P}$, for each of the listed sets of colors there exists a move which requires searchers of those color to be present in  $\tilde{T}_{\tilde{a}+1}$.
Analogously to Lemma~\ref{lemma:va}, these sets correspond to sets of colors of subtrees attached to the vertices of $\tilde{P}_{\tilde{a}}$.
\begin {lemma}\label{lemma:mc}
Let 
\[\cK = \{ \{\RED, \NPCclause{i}\}\st i\in\{1,\ldots,m\}\} \cup \{ \{\RED, \NPCvariable{n},\NPCvalve{n}{1},\NPCvalve{n}{2}, x\} \st x\in\{\NPCtrue{1},\ldots,\NPCtrue{n},\NPCfalse{1},\ldots,\NPCfalse{n}\} \}. \]
For each $K\in \cK$, in $\atts{\tilde{P}_{\tilde{a}}}{i(0)}{S}$, there exists a move $t_K\leq\f{\tilde{a}}{1}$ which requires all searchers of colors from the set $K$ to be in $\tilde{T}_{\tilde{a}+1}$. 
\end {lemma}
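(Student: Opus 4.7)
The plan is to mirror the proof of Lemma~\ref{lemma:va}, now working at the two ``bracketing'' moves $\f{\tilde{a}+2}{1}$ and $\f{\tilde{a}}{1}$ of the first successful attempt $\F$ on $\tilde{P}$ in place of the monotone markers $\first{a+2}$ and $\first{a}$, and redoing the counting with the enlarged multiplicity ($11$ instead of $5$) of copies of each $\tilde{L}_x$ rooted at $\tilde{v}_{\tilde{a}+1}$. First I would use assumption~\ref{eq:reversability-nm} together with the definition of $\f{\cdot}{1}$ inside $\F$ to assert that in the move $\f{\tilde{a}+2}{1}$ the edge $\{\tilde{v}_{\tilde{a}},\tilde{v}_{\tilde{a}+1}\}$ is contaminated (cleaning within $\F$ has not yet reached this side from the right), while at $\f{\tilde{a}}{1}$ that edge is clean and $\tilde{v}_{\tilde{a}+1}$ is no longer incident to any contaminated edge of $\tilde{P}$. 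Since $\tilde{v}_{\tilde{a}+1}$ is incident only to edges of color $\RED$ and both searchers of color $\RED$ are pinned at $\tilde{v}_{\tilde{a}+2}$ or $\tilde{v}_{\tilde{a}}$ in these respective moves, $\tilde{v}_{\tilde{a}+1}$ is unguarded in both, and hence every copy of $\tilde{L}_x$ or $L'_{x'}$ attached to $\tilde{v}_{\tilde{a}+1}$ is either in the ``matching'' state with the currently contaminated side of $\tilde{P}$ or carries a guarding searcher strictly inside it.

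Next I would execute a color-class counting argument. By Lemma~\ref{lemma:colorsnm}, for $x\in\{\NPCtrue{i},\NPCfalse{i}\}$ the only non-red searchers whose colors can ever appear in the interior of a copy of $\tilde{L}_x$ are the four dedicated ones of colors $x$, $\NPCvariable{i}$, $\NPCvalve{i}{1}$, and $\NPCvalve{i}{2}$. So at each of the two reference moves at most four of the eleven copies of $\tilde{L}_x$ can be internally guarded, and the remaining copies are in the matching state. Comparing the two snapshots, at least one copy of $\tilde{L}_x$ must move from fully contaminated at $\f{\tilde{a}+2}{1}$ to fully clean at $\f{\tilde{a}}{1}$; picking the first such copy $\tilde{L}_{x,j}$ that completes this transition inside $\atts{\tilde{P}_{\tilde{a}}}{i(0)}{S}$, the equality $\sn{\tilde{L}_x}=5$ then forces a move $t_K\leq \f{\tilde{a}}{1}$ in which all five searchers of colors $K=\{\RED,\NPCvariable{i},\NPCvalve{i}{1},\NPCvalve{i}{2},x\}$ sit inside $\tilde{L}_{x,j}\subseteq \tilde{T}_{\tilde{a}+1}$. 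The remaining case $K=\{\RED,\NPCclause{d}\}$ is handled identically, using the copies of $L'_{x'}$ attached to $\tilde{v}_{\tilde{a}+1}$ and the fact that the area of color $x'$ inside $L'_{x'}$ has search number two.

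The hard part will be that, unlike in the monotone setting, recontamination can occur freely between the two reference moves, so the ``matching state'' dichotomy has to be established pointwise at $\f{\tilde{a}+2}{1}$ and at $\f{\tilde{a}}{1}$ rather than across the entire interval between them. What licenses the pointwise statement is precisely the observation that $\tilde{v}_{\tilde{a}+1}$ is incident only to red edges, so the only ways to block contamination flowing through it are either a red searcher on $\tilde{v}_{\tilde{a}+1}$ itself (ruled out at both snapshots by the definition of $\f{\cdot}{1}$) or a guarding searcher inside each potentially threatening copy. Once the two snapshots are pinned down, the counting is robust to intermediate recontamination because it only needs the existence, within $\atts{\tilde{P}_{\tilde{a}}}{i(0)}{S}$, of a single copy that is clean at $\f{\tilde{a}}{1}$ and whose contaminated counterpart existed at $\f{\tilde{a}+2}{1}$; the multiplicity $11$ is chosen precisely so that no deployment of the four available non-red searchers of the relevant class can keep every copy in limbo at either moment.
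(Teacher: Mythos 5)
Your proposal follows essentially the same route as the paper's proof: pin down the two snapshots where both $\RED$ searchers sit on $\tilde{v}_{\tilde{a}+2}$ and on $\tilde{v}_{\tilde{a}}$, observe that at each snapshot every copy rooted at the unguarded $\tilde{v}_{\tilde{a}+1}$ is either in the matching contamination state or internally guarded by one of the at most four dedicated non-red searchers, conclude from $11-4-4\geq 3$ that some copies are fully cleaned in between, and invoke the search number of $\tilde{L}_x$ (resp.\ $\tilde{L}'_{x'}$) to force the simultaneous-presence move $t_K\leq\f{\tilde{a}}{1}$. The only cosmetic deviations are that the paper brackets with the minimal $j$ for which $\f{\tilde{a}+2}{j}$ lies inside $\atts{\tilde{P}_{\tilde{a}}}{i(0)}{S}$ (so that $t_K$ is guaranteed to fall within that attempt) and extracts the simultaneity from the union of three fully-cleaned copies via $\sn{\tilde{L}_{x,1}\cup\tilde{L}_{x,2}\cup\tilde{L}_{x,3}}=6$ rather than from a single copy.
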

\begin{proof}
The proof is divided into three parts. First we argue that $\tilde{T}_{\tilde{a}+1}$ could not have been left clean before the move $\f{\tilde{a}+2}{j}$. Then we argue that $\tilde{T}_{\tilde{a}+1}$ has to be clean before move $\f{\tilde{a}}{1}$.
Finally we analyze the construction of  $\tilde{T}_{\tilde{a}+1}$ to show that cleaning the subtrees of $\tilde{T}_{\tilde{a}+1}$ requires certain sets of searchers.
These sets of searchers are listed as a family $\cX$ and $\cY$.

Let us consider moves performed only in the attempt $\F$ which by \ref{eq:reversability-nm} cleans $\tilde{P}$ from $\tilde{v}_{\tilde{l}}$ to $\tilde{v}_1$.
Choose the minimal $j$ such that $\f{\tilde{a}+2}{j}\in \atts{\tilde{P}_{\tilde{a}}}{i(0)}{S} $. Hence, the edge $\{\tilde{v}_{\tilde{a}},\tilde{v}_{\tilde{a}+1}\}$ is not clean at move $\f{\tilde{a}+2}{j}$ move. 
The subtree $\tilde{T}_{\tilde{a}+1}$ cannot be completely clean in the move  $\f{\tilde{a}+2}{j}$, because it contains the vertex $\tilde{v}_{\tilde{a}+1}$, which is unoccupied (by the definition of $\f{\tilde{a}+2}{1}$) and incident to the contaminated edge $\{\tilde{v}_{\tilde{a}+1}, \tilde{v}_{\tilde{a}}\}$ (by \ref{eq:reversability-nm}). 

On the other hand each copy of $\tilde{L}_x$ has to be either completely clean or guarded in the move  $\f{\tilde{a}}{1}$. Suppose otherwise for a contradiction, then the contamination spreads unobstructed  through $\tilde{v}_{\tilde{a}+1}$, 
which cannot be occupied by a searcher during move $\f{\tilde{a}}{1}$,  to $\tilde{v}_{\tilde{l}}$ and, by the Lemma~\ref{lemma:fail}, the attempt $\F$ fails contrary to its definition.

Because the two searchers of color $\RED$ are not in a non-leaf vertex of $\tilde{L}_x$ in neither of the moves number $\f{\tilde{a}}{1}$ and $\f{\tilde{a}+2}{j}$ at most four copies of $\tilde{L}_x$ can be guarded at each of these moves. In total at most eight out of eleven copies of $\tilde{L}_x$ can be cleaned only partially between these two moves. Which means that in the attempt $\atts{\tilde{P}_{\tilde{a}}}{i(0)}{S}$ 
there exists a $\atts{\tilde{L}_{x,1}\cup \tilde{L}_{x,2}\cup \tilde{L}_{x,3}}{k}{S}$.

By construction, for each of set:
\[X\in\cX=\{\{\RED, \NPCvariable{1},\NPCvalve{1}{1},\NPCvalve{1}{2}, x\}\st x\in \{\NPCtrue{1}, \ldots, \NPCtrue{n}\}\cup\{\NPCfalse{1}, \ldots, \NPCfalse{n}\}\}\]
there exists  a  subtree $\tilde{L}_{x}$ requiring searchers of these colors. 
Note that $\sn{\tilde{L}_{x}}=5$ and $\sn{\tilde{L}_{x,1}\cup \tilde{L}_{x,2}\cup \tilde{L}_{x,3}}=6$, thus all searchers of colors contained in $X$ will be present on some vertices of $\tilde{L}_{x,1}\cup \tilde{L}_{x,2}\cup \tilde{L}_{x,3}$  simultaneously, in at least one move, whose number is contained in $\atts{\tilde{L}_{x,1}\cup \tilde{L}_{x,2}\cup \tilde{L}_{x,3}}{k}{S}$. 
Denote the number of the first such move in this attempt as $t_{X}$. Because we consider only moves whose numbers belong to $\F$, one searcher of color $\RED$ is present on $\tilde{P}$. In the move $t_{X}$ this searcher occupies $\tilde{v}_{\tilde{a}}$, therefore  $t_K\leq\f{\tilde{a}}{1}$.

The same argument can be repeated for any $\tilde{L'}_{y}$ and the respective set from $Y\in\cY=\{\{\RED, \NPCclause{i}\}\st i\in\{1,\ldots,m\}\}$ to prove existence of analogously defined $t_Y$. $\cK=\cX\cup \cY$ finishes the proof.
\end{proof}
We skip the proof of the following lemma because it is analogous to the one of Lemma \ref{lemma:mc}.
\begin {lemma}\label{lemma:mc2}
Let 
\[\cK'' = \{ \{\RED, \NPCclause{i}\}\st i\in\{1,\ldots,m\}\} . \]

For each $K''\in \cK''$, in $\atts{\tilde{P}_{\tilde{l}-2}}{i(0)}{S}$, there exists a move  $t_{K''}$ which requires all searchers of colors from the set $K''$ to be  to be in one of the subtrees $\tilde{L''}_{\NPCclause{d}}, d\in\{1,\ldots, m\}$.\qed
\end {lemma}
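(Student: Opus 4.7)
The plan is to follow the argument of Lemma~\ref{lemma:mc} almost verbatim, only applied at the right end of the path $\tilde{P}$ rather than in its middle, and with the five copies of $\tilde{L''}_{\NPCclause{d}}$ attached to $\tilde{v}_{\tilde{l}-1}$ playing the role of the subtrees at $\tilde{v}_{\tilde{a}+1}$ there. The two reference moves I would fix inside $\atts{\tilde{P}_{\tilde{l}-2}}{i(0)}{S}$ are $\f{\tilde{l}}{1}$ (the first move of $\F$ placing both $\RED$ searchers on $\tilde{v}_{\tilde{l}}$) and $\f{\tilde{l}-2}{j}$ for the smallest $j$ with $\f{\tilde{l}-2}{j}\in\atts{\tilde{P}_{\tilde{l}-2}}{i(0)}{S}$; by~\ref{eq:reversability-nm} these bracket the interval during which the two $\RED$ searchers sweep past $\tilde{v}_{\tilde{l}-1}$.

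Next I would pin down the state of the five copies at each of the two reference moves. At $\f{\tilde{l}}{1}$, by~\ref{eq:reversability-nm} the edge $\{\tilde{v}_{\tilde{l}-2},\tilde{v}_{\tilde{l}-1}\}$ together with the rest of $\tilde{P}$ to its left is contaminated while $\tilde{v}_{\tilde{l}-1}$ carries no $\RED$ searcher, so every copy of $\tilde{L''}_{\NPCclause{d}}$ that is not fully contaminated must host a guarding searcher; since the only colors appearing inside such a copy are $\RED$ and $\NPCclause{d}$ and both $\RED$ searchers are pinned at $\tilde{v}_{\tilde{l}}$, this guard is forced to be of color $\NPCclause{d}$. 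Symmetrically, at $\f{\tilde{l}-2}{j}$ the segment of $\tilde{P}$ to the right of $\tilde{v}_{\tilde{l}-2}$ is clean, and any copy that still contains contaminated edges must likewise be guarded by a $\NPCclause{d}$ searcher, for otherwise contamination propagates through the unoccupied $\tilde{v}_{\tilde{l}-1}$ back into the clean segment of $\tilde{P}$ and by Lemma~\ref{lemma:fail} $\F$ cannot succeed.

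Since only two searchers of color $\NPCclause{d}$ exist and each can guard at most one copy at a time, at $\f{\tilde{l}}{1}$ at least three of the five copies are fully contaminated and at $\f{\tilde{l}-2}{j}$ at least three are fully clean; pigeonhole then supplies three distinct copies $\tilde{L''}_{\NPCclause{d},i_1},\tilde{L''}_{\NPCclause{d},i_2},\tilde{L''}_{\NPCclause{d},i_3}$ that are each cleaned from scratch inside $[\f{\tilde{l}}{1},\f{\tilde{l}-2}{j}]\subseteq\atts{\tilde{P}_{\tilde{l}-2}}{i(0)}{S}$, hence an attempt $\atts{\tilde{L''}_{\NPCclause{d},i_1}\cup\tilde{L''}_{\NPCclause{d},i_2}\cup\tilde{L''}_{\NPCclause{d},i_3}}{k}{S}$ contained in that interval. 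A direct calculation then shows that the search number of the union of these three copies (which share the root $\tilde{v}_{\tilde{l}-1}$) equals $4$: two $\NPCclause{d}$ searchers for the four-edge star of color $\NPCclause{d}$ in whichever copy is currently being processed, one $\RED$ searcher stationed at $\tilde{v}_{\tilde{l}-1}$ to block reinfection from the two still-dirty sibling copies, and a second $\RED$ searcher to sweep the red edges of the active copy. Since this matches exactly the total count of searchers of the two colors in $K''$, all of them must lie inside the union simultaneously at some move $t_{K''}$ of this attempt, which is the required conclusion.

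The hard part is justifying the lower bound $\sn{\tilde{L''}_{\NPCclause{d},i_1}\cup\tilde{L''}_{\NPCclause{d},i_2}\cup\tilde{L''}_{\NPCclause{d},i_3}}\geq 4$: one has to rule out any scheme trying to substitute the $\RED$ guard at $\tilde{v}_{\tilde{l}-1}$ by a $\NPCclause{d}$ searcher (impossible because every edge incident to $\tilde{v}_{\tilde{l}-1}$ is of color $\RED$, so only $\RED$ searchers may be placed there) as well as any scheme that would let a finished copy's root-edge $\{\tilde{v}_{\tilde{l}-1},u_i\}$ recontaminate in order to free its guard (impossible inside a successful attempt on the union, by maximality of the interval in the definition of $\atts{\cdot}{k}{S}$). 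This is the exact analogue of the bounds $\sn{\tilde{L}_x}=5$ and $\sn{\tilde{L}_{x,1}\cup\tilde{L}_{x,2}\cup\tilde{L}_{x,3}}=6$ invoked in Lemma~\ref{lemma:mc}, and once it is in place the remainder of the argument transfers verbatim from there.
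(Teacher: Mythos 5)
Your overall route is the right one---the paper gives no proof here beyond declaring it analogous to Lemma~\ref{lemma:mc}, and your two reference moves $\f{\tilde{l}}{1}$ and $\f{\tilde{l}-2}{j}$, the guard counting at each of them (only the two $\NPCclause{d}$ searchers can guard, since $\tilde{v}_{\tilde{l}-1}$ admits only color $\RED$ and both $\RED$ searchers are pinned on the path), and the appeal to Lemma~\ref{lemma:fail} are exactly the intended transfer of that argument. However, your pigeonhole step is arithmetically wrong, and the rest of the proof leans on it. From ``at least three of the five copies are fully contaminated at $\f{\tilde{l}}{1}$'' and ``at least three of the five are fully clean at $\f{\tilde{l}-2}{j}$'' you may only conclude that at least $3+3-5=1$ copy passes from fully contaminated to fully clean inside the interval, not three: the two copies guarded at the first reference move and the two guarded at the second may be four distinct copies. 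This is precisely why Lemma~\ref{lemma:mc} uses \emph{eleven} copies of $\tilde{L}_x$ with at most four guards per endpoint, so that $7+7-11=3$; the clause gadgets at $\tilde{v}_{\tilde{l}-1}$ were left at five copies, for which the same computation yields one.

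This matters because your closing step needs all three copies: you force the four searchers of colors in $K''$ into the subtree by showing that the union of three copies has search number $4$, which equals the total number of searchers of colors $\RED$ and $\NPCclause{d}$. With only one copy guaranteed, its search number is $3$, and three searchers inside a subtree whose vertices admit only the colors $\RED$ and $\NPCclause{d}$ force merely one searcher of each color plus one more---the fourth searcher is unaccounted for, so the stated conclusion does not follow from your argument as written. To close the gap you would have to reason about the single guaranteed copy directly, e.g.\ apply Lemma~\ref{lem:clean1} to its internal four-edge star of color $\NPCclause{d}$ (whose search number is $2$) to force both $\NPCclause{d}$ searchers into the copy at some move, and then separately account for the positions of the two $\RED$ searchers at that move; the ``search number of the union equals the number of available searchers'' shortcut is simply not available with five copies.
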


Let $\tilde{\cT}$ be the set of all subtrees $\tilde{S}_p, \tilde{S}_{-p}, \tilde{H}_{\NPCvalve{p}{1}}(\tilde{R}), \tilde{H}_{\NPCvalve{p}{2}}(\tilde{R}), p\in\{1,\ldots, n\}$.
Recall that these subtrees are attached to the vertices $\tilde{v}_j$ for $j\in \tilde{R}$.

\begin {lemma}\label{lemma:allcleanm} 
In the move  $\f{\tilde{a}}{1}$ all subtrees in $\tilde{\cT}$ are clean.
\end {lemma}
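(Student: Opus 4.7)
The plan is to argue by contradiction: assume that at move $\f{\tilde{a}}{1}$ some subtree $G\in\tilde{\cT}$ attached to $\tilde{v}_j$ with $j\in\tilde{R}$ still contains a contaminated edge. First I would use assumption \ref{eq:reversability-nm} together with the defining property of $\F$ to argue that at move $\f{\tilde{a}}{1}$ the edges of $\tilde{P}$ with indices $\geq \tilde{a}$ are clean, and in particular the two $\tilde{P}$-edges incident to every such $\tilde{v}_j$ are clean. Since both $\RED$ searchers are pinned on $\tilde{v}_{\tilde{a}}$ by the definition of $\f{\tilde{a}}{1}$, the junction $\tilde{v}_j$ is unoccupied, hence any contamination inside $G$ must be blocked by a searcher placed on a vertex strictly inside $G$; call such a searcher a \emph{guard} of $G$.

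Next I would classify the possible colors of a guard. Since only edges of the colors appearing in $G$ are traversable by a searcher entering $G$, the guard of $\tilde{H}_{\NPCvalve{p}{j}}(\tilde{R})$ must have color $\RED$ or $\NPCvalve{p}{j}$; because both $\RED$ searchers are on $\tilde{v}_{\tilde{a}}$, it must be of color $\NPCvalve{p}{j}$. For $\tilde{S}_{p}$ or $\tilde{S}_{-p}$ the guard must have one of the colors $\NPCvariable{p}$, $\NPCvalve{p}{1}$, $\NPCvalve{p}{2}$, $\NPCtrue{p}$ (or $\NPCfalse{p}$). By Lemma~\ref{lemma:colorsnm}, for each $p$ there is exactly one searcher of each of these colors, while the guards are needed simultaneously in the four subtrees $\tilde{H}_{\NPCvalve{p}{1}}(\tilde{R})$, $\tilde{H}_{\NPCvalve{p}{2}}(\tilde{R})$, $\tilde{S}_{p}$, $\tilde{S}_{-p}$ if all of them remain partially contaminated at $\f{\tilde{a}}{1}$.

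The crux is then to invoke Lemma~\ref{lemma:mc} with $K=\{\RED,\NPCvariable{p},\NPCvalve{p}{1},\NPCvalve{p}{2},x\}$, which supplies a move $t_K\leq \f{\tilde{a}}{1}$ inside $\atts{\tilde{P}_{\tilde{a}}}{i(0)}{S}$ at which every one of these five colors has its searcher inside $\tilde{T}_{\tilde{a}+1}$, i.e., \emph{not} inside any subtree of $\tilde{\cT}$. Consequently at $t_K$ none of $\tilde{H}_{\NPCvalve{p}{1}}(\tilde{R})$, $\tilde{H}_{\NPCvalve{p}{2}}(\tilde{R})$, $\tilde{S}_{\pm p}$ can hold a guard, so each of them is either fully clean or, if it contains any contaminated edge, that contamination propagates unopposed through $\tilde{v}_{j}$ into the part of $\tilde{P}$ that is clean at $t_K$. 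Tracing this propagation with the description of cleaning of $\tilde{P}$ given after \ref{eq:reversability-nm}, I would show that any such spread forces a block of edges of $\tilde{P}$ to become contaminated in a way that is inconsistent with the right-to-left cleaning pattern of $\F$, or leaves no searcher available to re-establish a guard later without recontaminating a vertex on $\tilde{P}$. Running this case analysis through each $p\in\{1,\dots,n\}$ exhausts all subtrees of $\tilde{\cT}$ and yields the contradiction.

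I expect the main obstacle to be the non-monotone bookkeeping: unlike in the monotone case (Lemma~\ref{lemma:sequence}), a subtree that is clean at $t_K$ may become contaminated again before $\f{\tilde{a}}{1}$, so the argument cannot just be ``clean at $t_K$ implies clean at $\f{\tilde{a}}{1}$''. The delicate point is to use the valve structure $O_p$ introduced by the $plugin$ operations, together with the fact that the lone searchers of colors $\NPCvalve{p}{1}$ and $\NPCvalve{p}{2}$ are tied up in $\tilde{H}_{\NPCvalve{p}{1}}(\tilde{R})$ and $\tilde{H}_{\NPCvalve{p}{2}}(\tilde{R})$, to rule out any reconfiguration that would simultaneously protect $\tilde{S}_p$ or $\tilde{S}_{-p}$ from recontamination through $\tilde{v}_{j}$ on the way from $t_K$ to $\f{\tilde{a}}{1}$.
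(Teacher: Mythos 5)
Your skeleton matches the paper's proof: you invoke Lemma~\ref{lemma:mc} to obtain, for each $\tilde{G}\in\tilde{\cT}$, a move $t_K\leq\f{\tilde{a}}{1}$ at which every searcher whose color occurs in $\tilde{G}$ sits in $\tilde{T}_{\tilde{a}+1}$, which is disjoint from $\tilde{G}$; hence $\tilde{G}$ is searcher-free at $t_K$, and a contaminated edge in $\tilde{G}$ at that moment would propagate through its (unoccupied) attachment vertex along the suffix of $\tilde{P}$ to the edge $\{\tilde{v}_{\tilde{l}},\tilde{v}_{\tilde{l}-1}\}$, so Lemma~\ref{lemma:fail} makes $\F$ fail --- a contradiction. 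Up to that point your argument and the paper's coincide (the paper phrases the propagation step directly via Lemma~\ref{lemma:fail} rather than via ``inconsistency with the right-to-left pattern'', but that is cosmetic).

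The one place where your plan goes astray is the step you yourself flag as delicate: ruling out recontamination of $\tilde{G}$ between $t_K$ and $\f{\tilde{a}}{1}$. You propose to close this with the valve gadgets $O_p$ and the whereabouts of the $\NPCvalve{p}{1},\NPCvalve{p}{2}$ searchers; that machinery is not needed here (it is the engine of Lemma~\ref{lemma:switch}, much later in the argument) and it is not clear it would close the case at all, since at $t_K$ those very searchers are forced \emph{out} of the subtrees of $\tilde{\cT}$. The paper's closing observation is far simpler and purely topological: every $\tilde{G}\in\tilde{\cT}$ meets the rest of the tree only at its attachment vertex $\tilde{v}_{|j|}$ with $|j|\geq\tilde{a}+2$, so recontamination can enter $\tilde{G}$ only through that vertex; and throughout $[t_K,\f{\tilde{a}}{1}]\subseteq\F$, since $t_K\geq\f{\tilde{a}+2}{j}$, the searcher of color $\RED$ walking along $\tilde{P}$ in the attempt $\F$ separates $\cont{\tilde{P}}{t}$ (which lies at indices at most $\tilde{a}+2$) from all of these attachment vertices. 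Hence a subtree that is clean at $t_K$ stays clean until $\f{\tilde{a}}{1}$, with no appeal to the valves. You should replace your last paragraph with this observation; as written, that step of your proof is not actually carried out.
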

\begin{proof}
Each move $t_K$ introduced in Lemma~\ref{lemma:mc}, where $K\in \cK$, happens before the vertex $v_a$ is reached in the first successful attempt to clean $\tilde{P}$. 
Each subtree in $\tilde{\cT}$ contains only vertices of colors found in some $K\in \cK$.
Every subtree in $\tilde{\cT}$ is connected to vertices which were cleaned before $v_a$.
Because in the move $t_K$ all searchers of colors present in some subtree of $\tilde{\cT}$ are in $T_{a+1}$ if this subtree  of $\tilde{\cT}$ contains a contaminated edge, then the attempt to clean $\tilde{P}$ fails --- since we analyze a successful attempt, a contradiction occurs.
Finally we show that a recontamination of this subtree  of $\tilde{\cT}$  cannot happen prior to the move $\f{\tilde{a}}{1}$.

By Lemma~\ref{lemma:mc}, $\f{a}{1}\geq t_K$ for each $K\in \cK$.
By construction, for each subtree $\tilde{G}\in\tilde{\cT}$ there exists $K\in \cK$ such that the set of colors in vertices of $\tilde{G}$, denoted by $\cset{\tilde{G}}$, is a subset of $K$.
Note that any subtree in $\tilde{\cT}$ is connected to the vertex $\tilde{v}_{\tilde{a}+1}$ only by a subpath of $\tilde{P}$, which may contain a subset of the following vertices $\{\tilde{v}_{\tilde{a}+1},\ldots, \tilde{v}_{\tilde{b}}\}$, and all these vertices are of color $\RED$.  

Suppose for a contradiction that $\tilde{G}$ contains a contaminated edge in the move $t_K$ such that $\cset{\tilde{G}}\subseteq K$. Because all searchers in colors $\cset{\tilde{G}}$ are in $\tilde{T}_{\tilde{a}+1}$ (one of color $\RED$ is explicitly on the vertex $\tilde{v}_{\tilde{a}+1}$) and $\tilde{G}\cap \tilde{T}_{\tilde{a}+1}= \emptyset$, $\tilde{G}$ contains no searchers in the move $t_K$. If it were to contain a contaminated edge at this point, then the contamination would have spread unobstructed along the path $\tilde{P}$ from a vertex by which $\tilde{G}$ is attached (which may be one of the following $\{\tilde{v}_{\tilde{a}+2},\ldots, \tilde{v}_{\tilde{b}}\}$) to the edge $\{\tilde{v}_{\tilde{l}}, \tilde{v}_{\tilde{l}-1}\}$. By Lemma~\ref{lemma:fail}, the attempt $\F$ fails, which contradicts its definition.   

If $\tilde{G}$ contained no contaminated edge nor searchers and was adjacent to the $\clean{\tilde{P}}{t_K)}$, then in the move  $t_K$ it was completely clean. By construction, recontamination may be introduced to $\tilde{G}$ only through the vertex of $\tilde{P}$ by which it is attached. Because $t_K\geq \f{\tilde{a}+2}{j}$ there is always a searcher of color $\RED$ guarding it from   $\cont{\tilde{P}}{t}, t\in[t_K, \f{\tilde{a}}{1}]$, so it stays clean.
\end{proof}

\begin {lemma}\label{lemma:oneclean}
There is at least one clean edge in each $\tilde{A}_r, r\in \tilde{R}$, in each move of $\F$.
\end {lemma}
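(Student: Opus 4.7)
The plan is to combine Lemma~\ref{lemma:allcleanm} with a counting argument on the two searchers of color $\RED$. First I observe that at every move within $\F$ where $\tilde{P}$ has both clean and contaminated edges, the interface vertex on $\tilde{P}$ must be guarded to prevent immediate recontamination; and since every vertex of $\tilde{P}$ has only $\RED$-colored incident edges (a property of the construction of $\TSATP$), such a guard must itself be of color $\RED$. Hence at every such move at least one of the two $\RED$ searchers sits on $\tilde{P}$, and at most one $\RED$ searcher is ever free to act elsewhere during $\F$.

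Because $\tilde{A}_r$ is a star with $\sn{\tilde{A}_r}=2$ consisting solely of $\RED$ edges (and whose internal vertex has only $\RED$-colored incident edges), cleaning any previously contaminated edge of $\tilde{A}_r$ and keeping it clean requires two $\RED$ searchers simultaneously inside $\tilde{A}_r$ --- which is incompatible with having one pinned to $\tilde{P}$. Consequently no edge of $\tilde{A}_r$ can transition from contaminated to clean during $\F$. Combined with Lemma~\ref{lemma:allcleanm}, which asserts that every edge of the subtree $\tilde{G}_r\in \tilde{\cT}$ containing $\tilde{A}_r$ is clean at the move $\f{\tilde{a}}{1}$, this forces $\tilde{A}_r$ to be fully clean at every move of $\F$ up to and including $\f{\tilde{a}}{1}$: any edge of $\tilde{A}_r$ contaminated earlier would have had to persist to $\f{\tilde{a}}{1}$, contradicting the lemma.

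For moves of $\F$ strictly after $\f{\tilde{a}}{1}$, the subtree $\tilde{G}_r$ is fully clean at the start of this phase and its only attachment to the rest of the graph is through the vertex $\tilde{v}_{|r|}\in \tilde{P}$. By~\ref{eq:reversability-nm} the cleaning of $\tilde{P}$ in $\F$ proceeds from $\tilde{v}_{\tilde{l}}$ towards $\tilde{v}_1$, so after $\f{\tilde{a}}{1}$ the clean suffix of $\tilde{P}$ always contains $\tilde{v}_{|r|}$, because the $\RED$ guard at the interface (established in the first paragraph) prevents the recontamination option permitted by~\ref{eq:reversability-nm} from propagating back rightward into $\tilde{P}_R$. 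Hence $\tilde{v}_{|r|}$ is never incident to a contaminated edge, no contamination can enter $\tilde{G}_r$, and $\tilde{A}_r$ retains all three clean edges through the end of $\F$. The step I expect to require most care is precisely this last one --- rigorously arguing, using the characterization of~\ref{eq:reversability-nm} together with the pinned $\RED$ guard, that the clean suffix of $\tilde{P}$ cannot shrink past $\tilde{v}_{\tilde{a}+1}$ during the rest of $\F$.
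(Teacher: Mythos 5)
Your overall strategy is the paper's: pin one of the two $\RED$ searchers to $\tilde{P}$ throughout $\F$ (correct, since every vertex of $\tilde{P}$ has only the color $\RED$, so the separator between the clean and contaminated parts of $\tilde{P}$ must be a $\RED$ searcher), use $\sn{\tilde{A}_r}=2$, and contradict Lemma~\ref{lemma:allcleanm}. However, the pivotal claim in your second paragraph --- that \emph{no single edge} of $\tilde{A}_r$ can pass from contaminated to clean during $\F$ --- is false. Each $\tilde{A}_r$ is a star with three $\RED$ edges at least two of whose leaves are pendant vertices of $\TSATP$; the one unpinned $\RED$ searcher can be placed on such a pendant leaf and slid to the centre, cleaning that edge and keeping it clean for as long as the centre is occupied. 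Consequently your stronger intermediate conclusion (that $\tilde{A}_r$ is \emph{fully} clean at every move of $\F$ up to $\f{\tilde{a}}{1}$) does not follow from your argument. The statement you actually need, and the one the paper uses, is weaker and true: $\tilde{A}_r$ cannot pass from \emph{fully contaminated} to \emph{fully clean} within $\F$, because that transition forces some move with two $\RED$ searchers simultaneously inside $\tilde{A}_r$ (searchers of other colors can only sit on the junction leaves and, as in Lemma~\ref{lem:clean1}, cannot substitute), which is impossible while one $\RED$ searcher must remain on $\tilde{P}$ and $\tilde{A}_r$ is disjoint from $\tilde{P}$. Combined with Lemma~\ref{lemma:allcleanm} this immediately gives the lemma for all moves of $\F$ up to $\f{\tilde{a}}{1}$, which is exactly the paper's one-line contradiction.

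Your third paragraph tries to extend the conclusion to the moves of $\F$ after $\f{\tilde{a}}{1}$; the paper's own proof does not address these moves (and in the sequel the lemma is only invoked for moves in $\Fb$, which lie before $\f{\tilde{a}}{1}$). The step you yourself flag as delicate is indeed where your argument breaks: the assertion that the clean suffix of $\tilde{P}$ can never shrink past $\tilde{v}_{|r|}$ is not established. The paper's characterization of a successful attempt under~\ref{eq:reversability-nm} explicitly permits a contiguous block of edges of $\tilde{P}$ to be recontaminated during the attempt, and a $\RED$ guard ``at the interface'' does not prevent the interface itself from retreating when that guard is lifted. So as written, the post-$\f{\tilde{a}}{1}$ case rests on an unproven claim; if you want to cover it, you need a separate argument rather than an appeal to the guard's presence.
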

\begin{proof}
Assume for the sake of a contradiction, that an area  $\tilde{A}_r$ is fully contaminated in the move  $\f{\tilde{l}}{1}$. $\sn{\tilde{A}_r}=2$ so it cannot be cleaned in $\F$, because at least one searcher of color $\RED$  is in $\tilde{P}$. This contradicts Lemma~\ref{lemma:allcleanm}, because
the move $\f{\tilde{a}}{1}$, in which all of the subtrees in $\tilde{\cT}$
are clean, belongs to $\F$.  
\end{proof}

Let $\tilde{P}_{\tilde{b}}^{+}$ denote $\induced{\TSATP}{\{\tilde{v}_{\tilde{l}},\ldots, \tilde{v}_{\tilde{b}}, v\}}$ where $v$ is a leaf incident $\tilde{v}_{\tilde{b}}$ which does not belong to $\tilde{P}$.
Consider the attempt $\Fb$.
Note that  $\f{\tilde{l}}{1}\in\Fb,\f{\tilde{b}}{1}\in \Fb$.
Informally speaking, we define the first successful attempt to clean the subtree containing clause components, which are by construction connected to the vertices of the path $\tilde{P}_{\tilde{b}}$.

\begin {lemma}\label{lemma:atleastoneAll}
There is at least one searcher in each $\tilde{G}\in\tilde{\cT}$ guarding each $\clean{\tilde{A}_r}{t}$ where  $r\in \tilde{R}$, $t\in \Fb$.
\end {lemma}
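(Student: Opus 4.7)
The plan is to argue by contradiction: fix an arbitrary $\tilde{G}\in\tilde{\cT}$ and an arbitrary move $t\in\Fb$, and assume that no searcher is on any vertex of $\tilde{G}$ in move $t$. I would then exhibit a search-free path from a clean edge of $\tilde{A}_r\subseteq \tilde{G}$ to a contaminated edge, contradicting the monotonicity of the set of clean edges at that step.

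First I would set up geometry. Let $r\in\tilde{R}$ be the index with $\tilde{A}_r\subseteq\tilde{G}$, and let $\tilde{v}_{|r|}$ denote the unique vertex of $\tilde{P}$ to which $\tilde{G}$ is attached (so $\tilde{v}_{|r|}\in V(\tilde{G})$). Since $\Fb\subseteq\F$, Lemma~\ref{lemma:oneclean} applied at time $t$ guarantees a clean edge $e\in\tilde{A}_r$.

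Second, I would identify a suitable contaminated edge that can recontaminate $e$. The natural candidate is $e'=\{\tilde{v}_{|r|-1},\tilde{v}_{|r|}\}$; this is a genuine edge of $\tilde{P}$ because $|r|\in\{\tilde{a}+2,\ldots,\tilde{b}-1\}$, so $|r|-1\geq \tilde{a}+1$. Both endpoints of $e'$ have index at most $\tilde{b}-1$, hence $e'\notin E(\tilde{P}_{\tilde{b}}^{+})$. Here I would invoke \ref{eq:reversability-nm}: the strategy cleans $\tilde{P}$ in $\F$ from $\tilde{v}_{\tilde{l}}$ toward $\tilde{v}_1$, so no $\tilde{P}$-edge whose endpoints both have index strictly less than $\tilde{b}$ is touched before $\f{\tilde{b}}{1}$. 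Since $t\in\Fb$ occurs at or before $\f{\tilde{b}}{1}$, the edge $e'$ is contaminated in move $t$.

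Third, I would close the contradiction. Let $u$ be the endpoint of $e$ such that the unique path $P$ in $\TSATP$ from $u$ to $\tilde{v}_{|r|}$ avoids $e$. Because $\tilde{G}$ meets $\tilde{P}$ only at $\tilde{v}_{|r|}$ and $\tilde{A}_r\subseteq\tilde{G}$, every vertex of $P$, including $\tilde{v}_{|r|}$ itself, lies in $V(\tilde{G})$. By the contradiction hypothesis, none of these vertices hosts a searcher, so $P$ witnesses that $e$ is adjacent, through a searcher-free path, to the contaminated edge $e'$. The recontamination rule then forces $e$ to become contaminated in move $t$, contradicting the cleanliness of $e$. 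Hence at least one searcher must be present on a vertex of $\tilde{G}$, and any such searcher blocks the only channel by which contamination could reach $\tilde{A}_r$, namely the path through $\tilde{v}_{|r|}$.

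The only delicate point, which I expect to be the main obstacle, is step two: justifying that $e'$ remains contaminated for \emph{every} $t\in\Fb$, not merely at the beginning or end of the attempt. This relies on combining \ref{eq:reversability-nm} with the observation that the final move of $\Fb$ is at most $\f{\tilde{b}}{1}$, so the strategy has not yet slid a $\RED$-colored searcher across any $\tilde{P}$-edge of index smaller than $\tilde{b}$. Once this is in hand, the remainder of the argument is a direct unfolding of definitions.
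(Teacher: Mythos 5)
Your argument is correct and follows essentially the same route as the paper's own proof: Lemma~\ref{lemma:oneclean} supplies the clean edge of $\tilde{A}_r$, the cleaning direction~\ref{eq:reversability-nm} shows that the $\tilde{P}$-edges below $\tilde{v}_{\tilde{b}}$ are still contaminated during $\Fb$, and since $\tilde{G}$ meets the rest of $\TSATP$ only at the junction $\tilde{v}_{|r|}$, the guarding searcher must sit inside $\tilde{G}$. The one imprecision is your claim that every $t\in\Fb$ satisfies $t\leq\f{\tilde{b}}{1}$ (the paper only guarantees $\f{\tilde{b}}{1}\in\Fb$, so the attempt may extend slightly past that move), but the fact you actually need --- that $\{\tilde{v}_{|r|-1},\tilde{v}_{|r|}\}$ remains contaminated throughout $\Fb$ --- still follows from~\ref{eq:reversability-nm}, because the clean suffix of $\tilde{P}$ cannot extend below $\tilde{v}_{\tilde{b}}$ before $\tilde{P}_{\tilde{b}}^{+}$ is entirely clean.
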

\begin{proof}
Because the move $t$ belongs to $\F$, by Lemma~\ref{lemma:oneclean} there is at least one clean edge in $\tilde{A}_r$ which has to be separated from $\cont{\tilde{P}}{t}$. By the definition of $t$, at least one searcher of color $\RED$ is on $\tilde{v}_c, c>\tilde{b}$ and edges $\{\{\tilde{v}_{\tilde{b}}, \tilde{v}_{\tilde{b}+1}\},\ldots, \{\tilde{v}_{\tilde{l}}, \tilde{v}_{\tilde{l}-1}\}\}$ are contaminated. Therefore there is at least one contaminated edge incident to each $G$, and a searcher guarding $\clean{\tilde{A}_r}{\f{\tilde{b}}{k}}$ can only be placed in  the subtree $\tilde{G}\in \tilde{\cT}$ containing $\tilde{A}_r$. 
\end{proof}

Note that in $\TSATP$ the areas $\tilde{A}_r, r\in\tilde{R}$, are subgraphs of $\tilde{H}_{o_{p}}(\tilde{R})$, $\tilde{S}_p$ and $\tilde{S}_{-p}$ (while only ${S}_p$ and ${S}_{-p}$ were included in $\TSAT$), hence the two separate lemmas below.
The lemmas state that in the moves in which two searchers of color $\RED$ are on the path $P$  within the attempt to clean the clause components, only searchers of  colors different than $\RED$ can be used to guard clean edges of the subtrees $\tilde{H}_{o_{p}}(\tilde{R})$, $\tilde{S}_p$ and $\tilde{S}_{-p}$.

\begin {lemma}\label{lemma:atleastoneO}
For any $i$ and $j$, such that $\f{i}{j}\in\Fb$, in a move  $\f{i}{j}$ there is a searcher of color $o_{p}\in\{\NPCvalve{p}{1}, \NPCvalve{p}{2}\}, p\in\{1,\ldots,n\}$, in  $\tilde{H}_{o_{p}}(\tilde{R})$. 
\end {lemma}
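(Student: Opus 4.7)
The plan is to combine Lemma~\ref{lemma:atleastoneAll}, the restricted color palette of $\tilde{H}_{o_p}(\tilde{R})$, and the color budget given by Lemma~\ref{lemma:colorsnm}. Fix $p\in\{1,\ldots,n\}$ and $o_p\in\{\NPCvalve{p}{1},\NPCvalve{p}{2}\}$, and set $\tilde{G}=\tilde{H}_{o_p}(\tilde{R})\in\tilde{\cT}$. First I would apply Lemma~\ref{lemma:atleastoneAll} at the move $\f{i}{j}\in\Fb$ to obtain a searcher lying inside $\tilde{G}$ that guards the clean edges of its red area $\tilde{A}_r$, whose existence is in turn supplied by Lemma~\ref{lemma:oneclean}. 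By the construction recalled at the start of Section~\ref{sec:nmhard}, every edge of $\tilde{G}$ is colored $\RED$ or $o_p$; hence the color set of every vertex of $\tilde{G}$ is contained in $\{\RED,o_p\}$, and so the guarding searcher itself carries color $\RED$ or $o_p$.

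It remains to rule out the color $\RED$. By Lemma~\ref{lemma:colorsnm} there are exactly two searchers of color $\RED$ in the whole strategy, and by the defining property of $\f{i}{j}$ both of them occupy the single path vertex $\tilde{v}_i$ at this move. Since $\tilde{G}$ is attached to a vertex $\tilde{v}_r$ with $|r|\in\{\tilde{a}+2,\ldots,\tilde{b}-1\}$, whereas the indices $i$ arising for $\f{i}{j}\in\Fb$ correspond to vertices of the segment $\tilde{P}_{\tilde{b}}^{+}$ being cleaned (so $i\in\{\tilde{b},\ldots,\tilde{l}\}$), one has $\tilde{v}_i\ne\tilde{v}_r$, and no $\RED$ searcher resides inside $\tilde{G}$ at the move $\f{i}{j}$. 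Consequently the guarding searcher provided by Lemma~\ref{lemma:atleastoneAll} must be of color $o_p$, as claimed.

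The main obstacle is precisely the index bookkeeping carried out in the last paragraph: confirming from the definitions of $\tilde{P}_{\tilde{b}}^{+}$ and $\tilde{R}$ that both $\RED$ searchers cannot simultaneously sit on the attachment vertex of any $\tilde{H}_{o}(\tilde{R})$ during the attempt $\Fb$. If that coincidence were allowed, a $\RED$ searcher would formally be a candidate for the guard, and one would have to argue separately — using that the two $\RED$ searchers on a single attachment cannot jointly block the unique $o_p$-colored edge of $\tilde{G}$ — that an additional searcher on an endpoint of that edge is still required, which then again has to be of color $o_p$ for lack of other admissible colors. Once the separation of indices is established, however, the whole lemma reduces mechanically to Lemmas~\ref{lemma:atleastoneAll} and~\ref{lemma:colorsnm} combined with the two-color palette of $\tilde{H}_{o_p}(\tilde{R})$.
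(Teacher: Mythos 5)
Your proof is correct and follows essentially the same route as the paper's: rule out a searcher of color $\RED$ inside $\tilde{H}_{o_p}(\tilde{R})$ using the definition of $\f{i}{j}$ (both $\RED$ searchers sit on the path vertex $\tilde{v}_i$), invoke Lemma~\ref{lemma:atleastoneAll} for the existence of a guarding searcher in that subtree, and conclude from its two-color palette that the guard has color $o_p$. Your extra index bookkeeping (checking that $\tilde{v}_i$ cannot coincide with the attachment vertex of any $\tilde{H}_{o}(\tilde{R})$ during $\Fb$) merely makes explicit what the paper asserts in one line, so no substantive difference remains.
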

\begin{proof}
By the definition of $\f{i}{j}$, there can be no searcher of color $\RED$ on any vertex of $\tilde{H}_{o_{p}}(\tilde{R})$. By Lemma \ref{lemma:atleastoneAll},   each of them contains a searcher, and by colors of vertices in $\tilde{H}_{o_{p}}(\tilde{R})$, it is of color $o_{p}$. 
\end{proof}

Note that Lemmas~\ref{lemma:atleastoneO} and \ref{lemma:atleastoneV} speak of the same moves, therefore the pool of available searchers is shared between them.

\begin {lemma}\label{lemma:atleastoneV}
For any $i$ and $j$, such that $\f{i}{j}\in\Fb$, in a move  $\f{i}{j}$ there is a searcher of color $\NPCtrue{p}$ (respectively $\NPCfalse{p}$) or $\NPCvariable{p}$ in  $\tilde{S}_p$ ($\tilde{S}_{-p}$ respectively).
\end {lemma}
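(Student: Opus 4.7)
The plan is to argue this lemma as a direct consequence of Lemma~\ref{lemma:atleastoneAll} combined with Lemma~\ref{lemma:atleastoneO} and the color-count constraint from Lemma~\ref{lemma:colorsnm}. First, I would fix $p \in \{1,\ldots,n\}$ and focus on $\tilde{S}_p$; the argument for $\tilde{S}_{-p}$ is symmetric (with $\NPCfalse{p}$ replacing $\NPCtrue{p}$). Since $\tilde{S}_p \in \tilde{\cT}$ contains an area $\tilde{A}_r$ for some $r \in \tilde{R}$, Lemma~\ref{lemma:atleastoneAll} guarantees the existence of at least one searcher inside $\tilde{S}_p$ during the move $\f{i}{j} \in \Fb$ that is used to guard $\clean{\tilde{A}_r}{\f{i}{j}}$.

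Next, I would enumerate the colors available in $\tilde{S}_p$. Inspecting the construction (the original path $\NPCvariable{p},\RED,\NPCtrue{p},\RED$ from the star $\tilde{A}_r$ to $\tilde{P}$, together with the plugin operations that insert copies of $O_p$), the only colors present on vertices of $\tilde{S}_p$ are contained in the set
\[
\{\RED,\ \NPCvariable{p},\ \NPCtrue{p},\ \NPCvalve{p}{1},\ \NPCvalve{p}{2}\}.
\]
Consequently, any searcher occupying a vertex of $\tilde{S}_p$ must carry one of these colors.

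I would then eliminate three of these five options. By the definition of $\f{i}{j}$, both searchers of color $\RED$ are on the vertex $\tilde{v}_i$ of $\tilde{P}$, hence neither lies in $\tilde{S}_p$. By Lemma~\ref{lemma:atleastoneO} applied to $o_p = \NPCvalve{p}{1}$ and $o_p = \NPCvalve{p}{2}$, the unique searcher of color $\NPCvalve{p}{1}$ sits inside $\tilde{H}_{\NPCvalve{p}{1}}(\tilde{R})$ and the unique searcher of color $\NPCvalve{p}{2}$ sits inside $\tilde{H}_{\NPCvalve{p}{2}}(\tilde{R})$; uniqueness is given by Lemma~\ref{lemma:colorsnm}. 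Since both of these subtrees are vertex-disjoint from $\tilde{S}_p$, neither valve-colored searcher can simultaneously lie in $\tilde{S}_p$.

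The only remaining candidate colors in $\tilde{S}_p$ are $\NPCvariable{p}$ and $\NPCtrue{p}$, so the searcher whose existence is guaranteed by Lemma~\ref{lemma:atleastoneAll} must be of one of these two colors, which is exactly the claim of the lemma. The analogous argument for $\tilde{S}_{-p}$ yields the $\NPCfalse{p}$ case. I expect no real obstacle here: all the heavy lifting has already been carried out in Lemmas~\ref{lemma:atleastoneAll} and \ref{lemma:atleastoneO}, and the remaining step is just a color-availability count on the five-color palette of $\tilde{S}_p$.
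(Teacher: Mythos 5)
Your proposal is correct and follows essentially the same route as the paper's own proof: existence of a guarding searcher in $\tilde{S}_p$ via Lemma~\ref{lemma:atleastoneAll}, exclusion of color $\RED$ via the definition of $\f{i}{j}$, and exclusion of the valve colors via Lemma~\ref{lemma:atleastoneO} together with the color counts of Lemma~\ref{lemma:colorsnm}. Your write-up merely spells out the five-color palette of $\tilde{S}_p$ more explicitly than the paper does.
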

\begin{proof}
By the definition of $\f{i}{j}$, there can be no searcher of color $\RED$ on any vertex of $\tilde{S}_p$. By Lemma \ref{lemma:atleastoneAll},   each of them contains a searcher, and by colors of vertices in $\tilde{S}_p$ and Lemma \ref{lemma:atleastoneO}, it is of color $\NPCtrue{p}$ or $\NPCvariable{p}$. Proof for $\tilde{S}_{-p}$ is analogical.
\end{proof}

\subsection{Adaptation to non-monotonicity --- there is no going back}

Because of a possibility of recontamination, the previous lemmas are insufficient to obtain a result analogous to that given by Lemma~\ref{lemma:guard}. In this section we find a configuration of searchers that cannot be used in $\tilde{P}_{\tilde{b}}^{+}$ in a successful attempt to clean $\tilde{P}_{\tilde{b}}^{+}$, cf. Lemma~\ref{lemma:summary}.

\begin {lemma}\label{lemma:switch}
At least one of searchers of color from the following set: $Q=\{ \NPCvalve{p}{1},\NPCvalve{p}{2}, x_{p}\}, x_{p}\in \{\NPCtrue{p}, \NPCfalse{p}\}$, has to remain in each $\tilde{S}_{p}\cup\tilde{S}_{-p}$ in each move  $t\in[\f{\tilde{l}}{1},\f{\tilde{b}}{1}]\subseteq \Fb$.
\end {lemma}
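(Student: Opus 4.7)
The plan is to fix the color $x_p$ based on the configuration at the first move of the interval, and then argue by contradiction that if no searcher of a color in $Q$ remains in $\tilde{S}_p\cup\tilde{S}_{-p}$ at some point, then recontamination inevitably propagates back onto $\tilde{P}$, violating the success of the attempt $\Fb$ (in particular via Lemma~\ref{lemma:fail}).

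First I would fix $x_p$. By Lemma~\ref{lemma:atleastoneV} applied to the move $\f{\tilde{l}}{1}\in\Fb$, there is a searcher of color in $\{\NPCtrue{p},\NPCvariable{p}\}$ in $\tilde{S}_p$ and one of color in $\{\NPCfalse{p},\NPCvariable{p}\}$ in $\tilde{S}_{-p}$. Since by Lemma~\ref{lemma:colorsnm} there is only one searcher of color $\NPCvariable{p}$, at least one of the two cases must occur: $\NPCtrue{p}$ is in $\tilde{S}_p$ or $\NPCfalse{p}$ is in $\tilde{S}_{-p}$. I set $x_p$ to be this color (choosing arbitrarily if both hold). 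So the conclusion of the lemma holds at move $\f{\tilde{l}}{1}$.

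Next, I would assume for contradiction that there is a move $t\in[\f{\tilde{l}}{1},\f{\tilde{b}}{1}]$ at which no searcher of color in $Q=\{\NPCvalve{p}{1},\NPCvalve{p}{2},x_p\}$ occupies a vertex of $\tilde{S}_p\cup\tilde{S}_{-p}$. Let $t^\star<t$ be the latest move at which some such $Q$-colored searcher is still present, so during move $t^\star+1$ the last $Q$-colored searcher leaves. I would then do a case analysis on the colour of this departing searcher. The decisive structural observation here is that the three $plugin$ operations performed on $\tilde{S}_p$ (and symmetrically on $\tilde{S}_{-p}$) replace every cut separating the $x_p$-edges from the rest of the tree by a copy of the star $O_p$, whose edges carry only the colours $\NPCvalve{p}{1}$ and $\NPCvalve{p}{2}$. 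Thus the subgraph containing the $x_p$-edges is reachable from $\tilde{A}_0$ and from any $\tilde{A}_r$, $r\in\tilde{R}$, only through vertices incident to $\NPCvalve{p}{1}$- and $\NPCvalve{p}{2}$-coloured edges.

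Consequently: if the departing searcher has colour $x_p$, it cannot slide out across a plugin star (wrong colour), so it must be removed; with no searcher of colour $\NPCvalve{p}{1}$ or $\NPCvalve{p}{2}$ guarding the adjacent junction, the clean region of $\tilde{S}_p\cup\tilde{S}_{-p}$ previously separated by it becomes recontaminated. If the departing searcher has colour $\NPCvalve{p}{1}$ or $\NPCvalve{p}{2}$, the same conclusion follows because the $O_p$ junction it was guarding is now exposed. In either case, by Lemma~\ref{lemma:atleastoneAll} and Lemma~\ref{lemma:oneclean}, the clean regions inside $\tilde{S}_p$, $\tilde{S}_{-p}$ and $\tilde{A}_r$ touching $\tilde{P}_{\tilde{b}}$ cannot simultaneously be protected by the remaining searchers, so contamination propagates through $\tilde{v}_j$ ($j\in\tilde R$) onto $\tilde{P}_{\tilde{b}}^{+}$. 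This reintroduces contamination inside $\Fb$, contradicting that $\Fb$ is a successful attempt at $\tilde{P}_{\tilde{b}}^{+}$.

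I expect the main obstacle to be formalising the structural claim about the plugin: one needs to verify, edge by edge within the modified $\tilde{S}_p$ and $\tilde{S}_{-p}$, that every separator between the $x_p$-region and $\tilde{A}_0\cup\bigcup_{r\in\tilde{R}}\tilde{A}_r$ lies inside an $O_p$ star, so that guarding without a colour of $Q$ is structurally impossible. A secondary nuisance is the bookkeeping for the single $\NPCvariable{p}$-searcher, which may temporarily sit in $\tilde{S}_p\cup\tilde{S}_{-p}$ but cannot serve as the sole guard there for both sides once the $x_p$-coloured searcher has left.
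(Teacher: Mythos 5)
Your high-level plan --- fix $x_p$ from the first move, then track the departure of the last $Q$-coloured searcher and derive a recontamination that kills the attempt --- does not engage with the actual difficulty of this lemma, and the contradiction you aim for does not materialize. The clean parts of $\tilde{S}_p$ and $\tilde{S}_{-p}$ need not be guarded by searchers of colours in $Q$: by Lemma~\ref{lemma:atleastoneV} the guard of $\tilde{S}_p$ may be the searcher of colour $\NPCvariable{p}$ (standing on an endpoint of the $\NPCvariable{p}$-coloured edge), while the guard of $\tilde{S}_{-p}$ is the searcher of colour $\NPCfalse{p}$ (supposing you fixed $x_p=\NPCtrue{p}$). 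In that configuration no searcher of a colour in $Q$ is in $\tilde{S}_p\cup\tilde{S}_{-p}$, yet nothing is recontaminated, so your case analysis on the ``departing searcher'' yields no contradiction. You set aside the $\NPCvariable{p}$-searcher as a ``secondary nuisance,'' but it is the crux of the lemma: what must be excluded is that the single $\NPCvariable{p}$-searcher takes over guarding duty in $\tilde{S}_p$ at one moment and in $\tilde{S}_{-p}$ at another, which would let each of the $\NPCtrue{p}$- and $\NPCfalse{p}$-searchers escape at (different) times in the interval. Relatedly, fixing $x_p$ from the configuration at move $\f{\tilde{l}}{1}$ is premature: if both $\NPCtrue{p}$ and $\NPCfalse{p}$ are initially present, the correct choice of $x_p$ is the one attached to the subtree whose guard never switches to $\NPCvariable{p}$, and that is not visible at the first move.

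What is needed (and what the paper does) is a switch-counting argument. One observes that a guard vertex in $\tilde{S}_p$ carries, besides the valve colours, at most one of $\NPCtrue{p}$, $\NPCvariable{p}$ (colour $\RED$ being excluded via Lemma~\ref{lemma:atleastoneV}); that moving the guard from an $\NPCtrue{p}$-vertex to an $\NPCvariable{p}$-vertex forces the copy of $O_p$ separating them to have been clean beforehand; and that no copy of $O_p$ can be successfully cleaned inside $[\f{\tilde{l}}{1},\f{\tilde{b}}{1}]$ (Observation~\ref{obs}, a consequence of Lemma~\ref{lemma:atleastoneAll}). Hence each of $\tilde{S}_p$, $\tilde{S}_{-p}$ admits at most one such switch and, because there is only one searcher of colour $\NPCvariable{p}$, at most one of the two subtrees ever switches; the other retains a guard of colour in $\{\NPCvalve{p}{1},\NPCvalve{p}{2},x_p\}$ throughout, which is the claim. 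Your structural observation about the $plugin$ stars is the right ingredient for the ``no re-cleaning of $O_p$'' step, but without counting the switches and exploiting the uniqueness of the $\NPCvariable{p}$-searcher the proof does not go through.
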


\begin{proof}
The proof revolves around analyzing colors of vertices in $\tilde{S}_p$ and $\tilde{S}_{-p}$ which can be used by guarding searchers in Lemma~\ref{lemma:atleastoneAll}.
A switch is a change in the guarding searchers of colors different than $\NPCvalve{p}{1}$ or  $\NPCvalve{p}{2}$ in $\tilde{S}_p$.
In order to make such a switch, we either have to clean $O_p$ or recontaminate it.
We exclude the possibility to clean any star $O_{p}$ in the moves $t$ listed in the lemma (see Observation~\ref{obs}).
Thus, we have to consider recontamination of $O_p$.
Then, we use Lemma~\ref{lemma:atleastoneV} to establish that such a switch can occur once in $\tilde{S}_p$ (or $\tilde{S}_{-p}$ analogically). Finally we look at  guarding searchers in both  $\tilde{S}_p$ and $\tilde{S}_{-p}$ in the moves $\f{\tilde{l}}{1}$ and $\f{\tilde{b}}{1}$  and show that a switch can occur in either $\tilde{S}_p$ or $\tilde{S}_{-p}$, but not both, between these moves.

Pick a vertex, denoted by $u_{p, t}$, such that $u_{p, t}\in V(\tilde{S}_p$) ($u_{-p, t}\in V(\tilde{S}_{-p})$ respectively) and it is incident to a contaminated and a clean edge in the move  $t\in \Fb$.
Let us focus only on $u_{p,t}$ as the approach is analogous for  $u_{-p,t}$.
By Lemma \ref{lemma:atleastoneAll},  this vertex exists.

Denote the set of colors other than $\NPCvalve{p}{1},\NPCvalve{p}{2}$ in the set of colors of $u_{p, t}$ as $c_{p,t}$, i.e., $c_{p,t}=\colV(u_{p,t})\setminus\{\NPCvalve{p}{1},\NPCvalve{p}{2}\}$.  
Recall that $\tilde{S}_p$ and $\tilde{S}_{-p}$ contain copies of the star $O_{p}$. 
Note that if $c_{p,t}=\emptyset$ then  $u_{p, t}$ is a central vertex of such a copy of  $O_{p}$. Otherwise $|c_{p,t}|=1$. Let $f^{+}(t)$ denote the number of the first move such that $f^{+}(t)\geq t$ and $c_{p,f^{+}(t)}\neq\emptyset$.
Let $f^{-}(t)$ denote the number of the first move such that $f^{-}(t)\leq t$ and $c_{p,f^{-}(t)}\neq\emptyset$. 

\begin{observation}\label{obs}
By Lemma \ref{lemma:atleastoneAll}, no $\atts{O_{p}}{i}{S}\subseteq[\f{\tilde{l}}{1},\f{\tilde{b}}{1}] $ exists.
\end{observation}

By construction, any maximal subtree $\tilde{T}$, such that $\tilde{T}$ is a subgraph of $\tilde{S}_p$ where $u_{p,f^{-}(t)}$ and $u_{p,f^{+}(t')}, t<t'$, are leaves and $c_{p, f^{-}(t)}\neq c_{p,f^{+}(t')}$, contains a copy of $O_{p}$, to which we refer further as $O'_{p}$. If $c_{p, h}=\emptyset$, then $h\in \att{O_{p}}{i}$ and $f^{-}(h)$ corresponds to the beginning of this attempt ($f^{+}(h)$ corresponds to its end, respectively).
Note that both $c_{p,\f{\tilde{l}}{1}}\neq \emptyset$ and $c_{p,\f{\tilde{b}}{k}}\neq \emptyset$.  
By Observation~\ref{obs}, a pair $u_{p,f^{-}(t)}$ and $u_{p,f^{+}(t')}$, such that $t, t'\in [\f{\tilde{l}}{1},\f{\tilde{b}}{k}]$, which satisfies  $c_{p, f^{-}(t)}\neq c_{p,f^{+}(t')}$,  exists only if $O'_{p}\subseteq \tilde{T}$ has already been clean in the move  $f^{-}(t)$. 
Similar argument can be repeated for a pair $u_{p,t}$ and $u_{-p,t'}$ (i.e. vertices in $\tilde{S}_{-p}$ and $\tilde{S}_{-p}$) with the conclusion that a pair which satisfies the above constraints does not exist --- there are no clean copies of $O_p$ between them.
Informally, we can switch a searcher of color in $c_{p, t}\neq \emptyset$ to a searcher of color in $c_{p, t'}\neq c_{p, t}\neq \emptyset$ only by causing recontamination, and we cannot use the first searcher again. Thus, there is a finite number of switches in $\Fb$.

Note that by Lemma~\ref{lemma:atleastoneV}, $c_{p,\f{l}{1}}$ ($c_{-p,\f{l}{1}}$ respectively) contains either  $\NPCtrue{p}$ (respectively $\NPCfalse{p}$)  or $\NPCvariable{p}$. If $c_{p, t'}=\{\RED\}$  we contradict Lemma~\ref{lemma:atleastoneV} in the next move  $\f{i}{j}$ after $t'$.  Therefore, a set $c_{p, t}$ or $c_{p, t'}$ can contain only a one out of these two colors: $\NPCtrue{p}$,  $\NPCvariable{p}$, or be empty.
By the previous paragraph and the number of different colors, there exists at most one interval of numbers of moves 
$J=[f^{-}(j),f^{+}(j')]$ such that $c_{p, f^{-}(j)}=\NPCtrue{p}$ and $c_{p, f^{+}(j')}=\NPCvariable{p}$ or vice versa. Informally, we can switch the color of required searcher once.
The same argument holds for $\tilde{S}_{-p}$ and colors $\NPCfalse{p}$,  $\NPCvariable{p}$ Denote the corresponding interval as $L$.

Because there is only one $\NPCvariable{p}$ searcher at least one of the following is true: $c_{p,\f{l}{1}}=\{\NPCtrue{p}\}$ or $c_{-p,\f{l}{1}}=\{\NPCfalse{p}$\}, thus at most one edge of color $\NPCvariable{p}$ in $\tilde{S}_{-p}\cup \tilde{S}_{p}$  is clean. 
For the same reason at most one the following is true: $c_{p,\f{b}{k}}=\{\NPCvariable{p}\}$ or $c_{-p,\f{b}{k}}=\{\NPCvariable{p}\}$. Thus, in a single strategy at most one of the two intervals $J$ and $L$ exists. If $J$ ($L$ respectively) does not exist, then $c_{p,t}\in\{c_{p,\f{l}{1}}, \emptyset\}$ ($c_{-p,t}\in\{c_{-p,\f{l}{1}}, \emptyset\}$ respectively) for each move $t\in\Fb$. By definition of $u_{p,t}$, only searchers of colors $\NPCvalve{p}{1},\NPCvalve{p}{2}$ and those in $c_{p,t}$ can stay in $\tilde{S}_p$ in each move of $[\f{\tilde{l}}{1},\f{\tilde{b}}{1}]$.
\end{proof}

\begin{lemma}\label{lemma:looseend}
If the searcher of color  $\NPCvalve{p}{z}, z\in \{1,2\}$ is not in in $\tilde{H}_{\NPCvalve{p}{z}}(\tilde{R})$ in a move  $t\in\Fb$, then a searcher of color $\RED$ is in $\tilde{H}_{\NPCvalve{p}{z}}(\tilde{R})$.
\end{lemma}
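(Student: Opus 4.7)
The plan is to invoke Lemma~\ref{lemma:atleastoneAll} on the subtree $\tilde{H}_{\NPCvalve{p}{z}}(\tilde{R})$, which belongs to $\tilde{\cT}$. Since $t\in\Fb\subseteq\F$, for the area $\tilde{A}_r$ lying inside $\tilde{H}_{\NPCvalve{p}{z}}(\tilde{R})$ with $r\in\tilde{R}$, the lemma provides a searcher on some vertex of $\tilde{H}_{\NPCvalve{p}{z}}(\tilde{R})$ guarding the clean portion of $\tilde{A}_r$ from the contaminated part of $\tilde{P}$. Note that Lemma~\ref{lemma:oneclean} justifies that there is something clean to be guarded here.

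Next, I would read off which colors of searchers are \emph{permitted} to sit in $\tilde{H}_{\NPCvalve{p}{z}}(\tilde{R})$. By the construction of the subgraph $H_z$ from Section~\ref{sec:hard}, the subtree $\tilde{H}_{\NPCvalve{p}{z}}(\tilde{R})$ has edges of only two colors: $\RED$ (the small star and the edge $e'$) and $\NPCvalve{p}{z}$ (the single edge $e$). Therefore every vertex $v$ in this subtree satisfies $\colV(v)\subseteq\{\RED,\NPCvalve{p}{z}\}$, and the heterogeneous placement rule (a searcher of color $j$ may only be placed on a vertex $v$ with $j\in\colV(v)$) forces the guarding searcher to be of color $\RED$ or $\NPCvalve{p}{z}$.

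Combining these two observations with the hypothesis of the lemma yields the conclusion: if the unique searcher of color $\NPCvalve{p}{z}$ is not present in $\tilde{H}_{\NPCvalve{p}{z}}(\tilde{R})$ during move $t$, then the guarding searcher produced by Lemma~\ref{lemma:atleastoneAll} must be of color $\RED$, and hence a $\RED$ searcher occupies a vertex of $\tilde{H}_{\NPCvalve{p}{z}}(\tilde{R})$ in move $t$. There is essentially no obstacle; the only care needed is to confirm that Lemma~\ref{lemma:atleastoneAll} applies to \emph{arbitrary} $t\in\Fb$ (not just to the moves $\f{i}{j}$ used in Lemmas~\ref{lemma:atleastoneO} and~\ref{lemma:atleastoneV}), which is exactly how it is stated, so the argument goes through verbatim.
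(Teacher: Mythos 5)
Your proof is correct and follows essentially the same route as the paper: the paper likewise deduces that some searcher must remain in $\tilde{H}_{\NPCvalve{p}{z}}(\tilde{R})$ (via Lemma~\ref{lemma:oneclean} and the guarding argument) and then concludes from the two available vertex colors that it is $\RED$ or $\NPCvalve{p}{z}$. If anything, your explicit appeal to Lemma~\ref{lemma:atleastoneAll} and your check that it applies to arbitrary $t\in\Fb$ makes the chain of citations slightly more precise than the paper's one-line version.
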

\begin{proof}
Because $t\in\F$  and by Lemma~\ref{lemma:oneclean} at least one searcher has to remain in $\tilde{H}_{\NPCvalve{p}{z}}(\tilde{R})$. It can be of color $\RED$ or $\tilde{H}_{\NPCvalve{p}{z}}(\tilde{R})$.
\end{proof}

\begin{lemma}\label{lemma:summary}
There exists a set of searchers of colors $\{ \RED, \NPCvalve{p}{1},\NPCvalve{p}{2}, x_{p}\st p\in \{1,\ldots,n\}\}, x_{p} \in \{\NPCtrue{p}, \NPCfalse{p}\}$ such that all but one have to remain outside of $\tilde{P}_{\tilde{b}}^{+}$ in each move  $t\in[\f{\tilde{l}}{1},\f{\tilde{b}}{1}]\subseteq \Fb$.
\end{lemma}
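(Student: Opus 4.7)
The plan is to combine the constraints from Lemmas~\ref{lemma:atleastoneAll}, \ref{lemma:switch}, and \ref{lemma:looseend} to bound how many searchers of the special colors can simultaneously occupy $\tilde{P}_{\tilde{b}}^{+}$. I would first apply Lemma~\ref{lemma:switch} individually to each $p\in\{1,\ldots,n\}$ to fix the color $x_{p}\in\{\NPCtrue{p},\NPCfalse{p}\}$ it provides; together with $\RED$ these form the set of $3n+1$ colors in the statement. For this choice Lemma~\ref{lemma:switch} guarantees that at every move of $[\f{\tilde{l}}{1},\f{\tilde{b}}{1}]$ a searcher whose color lies in $\{\NPCvalve{p}{1},\NPCvalve{p}{2},x_{p}\}$ sits in $\tilde{S}_{p}\cup\tilde{S}_{-p}$, which is disjoint from $\tilde{P}_{\tilde{b}}^{+}$.

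The first easy reduction is to remove the $\NPCvalve{p}{z}$-searchers from the analysis: no vertex of $\tilde{P}_{\tilde{b}}^{+}$ carries either color $\NPCvalve{p}{1}$ or $\NPCvalve{p}{2}$, so all $2n$ such searchers lie outside $\tilde{P}_{\tilde{b}}^{+}$ automatically. It therefore suffices to analyse the placement of the two $\RED$-searchers and of the $n$ $x_{p}$-searchers.

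The core of the argument is a counting step. Fix $t\in[\f{\tilde{l}}{1},\f{\tilde{b}}{1}]$ and let $r$ denote the number of $\RED$-searchers outside $\tilde{P}_{\tilde{b}}^{+}$. By Lemma~\ref{lemma:looseend}, for each $p$ and $z\in\{1,2\}$ the subtree $\tilde{H}_{\NPCvalve{p}{z}}(\tilde{R})$ must contain either its own $\NPCvalve{p}{z}$-searcher or a $\RED$-searcher, and at most $r$ of the $2n$ such subtrees can be served by $\RED$; hence at least $2n-r$ of the $\NPCvalve{p}{z}$-searchers are pinned in their own $\tilde{H}$-subtrees. Consequently for at least $n-r$ indices $p$ both $\NPCvalve{p}{1}$ and $\NPCvalve{p}{2}$ are so pinned, so Lemma~\ref{lemma:switch} forces $x_{p}$ into $\tilde{S}_{p}\cup\tilde{S}_{-p}$ and hence outside $\tilde{P}_{\tilde{b}}^{+}$. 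The remaining at most $r$ $x_{p}$-searchers and $2-r$ $\RED$-searchers are the only candidates for being inside $\tilde{P}_{\tilde{b}}^{+}$.

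The main obstacle will be tightening this count to the bound claimed by the lemma: the naive arithmetic above only yields a total of $2$, whereas the statement asserts that all but one remain outside. I expect to close this gap using the extra structural fact established inside the proof of Lemma~\ref{lemma:switch}, namely that the unique $\NPCvariable{p}$-searcher cannot serve both $\tilde{S}_{p}$ and $\tilde{S}_{-p}$ simultaneously. That asymmetry should rule out the borderline configuration in which one $\RED$-searcher slips into an $\tilde{H}$-subtree to free an $\NPCvalve{p}{z}$-searcher while at the same time an $x_{p}$-searcher slips into a clause component of $\tilde{P}_{\tilde{b}}^{+}$, leaving exactly one searcher from the specified set inside.
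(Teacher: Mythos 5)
Your proposal follows the same route as the paper's proof. Both arguments rest on the same three ingredients: Lemma~\ref{lemma:switch} (for each $p$, a searcher of color in $\{\NPCvalve{p}{1},\NPCvalve{p}{2},x_p\}$ is pinned in $\tilde{S}_p\cup\tilde{S}_{-p}$), Lemma~\ref{lemma:looseend} (a valve searcher absent from $\tilde{H}_{\NPCvalve{p}{z}}(\tilde{R})$ must be replaced there by a searcher of color $\RED$), and the fact that one of the two $\RED$ searchers is committed to the path $\tilde{P}_{\tilde{b}}$ throughout $[\f{\tilde{l}}{1},\f{\tilde{b}}{1}]$, so that at most one spare $\RED$ is available to liberate a valve searcher. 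Your explicit count with the parameter $r$ is a more systematic rendering of the paper's one-line conclusion that the escaping searcher ``$s$ is unique''; your preliminary observation that no vertex of $\tilde{P}_{\tilde{b}}^{+}$ carries a valve color is correct (and in fact applies to every non-$\RED$ color, which is a sign that the intended content of the lemma is the stronger statement your counting actually delivers: the searchers are confined to specific subtrees disjoint from $\tilde{P}_{\tilde{b}}^{+}$, not merely absent from its vertices).

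The divergence is at the very end, and it is worth being precise about it. The paper does \emph{not} perform the tightening step you announce: it stops once $s$ is shown to be unique and concludes from the disjointness of $\tilde{S}_p\cup\tilde{S}_{-p}$ and $\tilde{H}_{\NPCvalve{p}{z}}(\tilde{R})$ from $\tilde{P}_{\tilde{b}}^{+}$, implicitly treating the single remaining unpinned searcher as the ``one'' exception of the statement. So the $2$-versus-$1$ discrepancy you flag is not resolved by an extra argument in the paper either; in particular, the paper never invokes the uniqueness of the $\NPCvariable{p}$ searcher at this point. Your proposed fix is therefore additional machinery the paper does not use, and as written it is only a conjecture: you do not actually show that the borderline configuration (one $\RED$ inside an $\tilde{H}$-subtree while a freed $x_{p}$ participates in cleaning a clause component) is excluded, and it is not obvious that the $\NPCvariable{p}$ observation does this, since that observation is already consumed inside the proof of Lemma~\ref{lemma:switch}. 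A complete write-up should either carry that argument out or adopt the paper's accounting, under which the unique escape $s$ costs the spare $\RED$ searcher and the conclusion is read off directly.
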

\begin{proof}
By Lemma~\ref{lemma:switch}, for each $p\in\{1,\ldots,n\}$ at least one of searchers of color from the following set: $Q=\{ \NPCvalve{p}{1},\NPCvalve{p}{2}, x_{p}\}, x_{p}\in \{\NPCtrue{p}, \NPCfalse{p}\}$, has to remain in $\tilde{S}_{p}\cup\tilde{S}_{-p}$ in each move  $t\in[\f{\tilde{l}}{1},\f{\tilde{b}}{1}]$. Let $s$ denote such a searcher of color other than $x_{p}$. If $s$  exists then by Lemma~\ref{lemma:looseend} a searcher of color $\RED$ is in $\tilde{H}_{\NPCvalve{p}{z}}(\tilde{R})$. In a move $t\in[\f{\tilde{l}}{1},\f{\tilde{b}}{1}]$ at least one searcher of color $\RED$ is on the path $\tilde{P}_{\tilde{b}}$ so $s$ is unique.
Then, $(\tilde{S}_{p}\cup\tilde{S}_{-p})\cap\tilde{P}_{\tilde{b}}^{+} =\emptyset $  and $\tilde{H}_{\NPCvalve{p}{z}}(\tilde{R})\cap\tilde{P}_{\tilde{b}}^{+} =\emptyset $ finish the proof.
\end{proof}

To informally summarize, we show that there exists a set of searchers of colors $\{ \RED, \NPCvalve{p}{1},\NPCvalve{p}{2}, x_{p}\st p\in \{1,\ldots,n\}\}, x_{p} \in \{\NPCtrue{p}, \NPCfalse{p}\}$
 of which at most one at a time can take part in cleaning of $\tilde{P}_{\tilde{b}}^{+}$.

\subsection{Conclusion}

\begin{lemma}\label{lemma:3SATnm}
Let $x_{1},\ldots, x_{n}$ and Boolean formula $C=C_{1}\land C_{2}\ldots\land C_{m}$ be an input to the $\problemSAT$ problem.
If there exists a search strategy using $2+5n+2m$ searchers for $\TSATP$, then the answer to $\problemSAT$ problem is $\YES$.
\end{lemma}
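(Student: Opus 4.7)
The plan is to mirror the structure of Lemma~\ref{lemma:3SAT}, replacing the monotone guarding argument (Lemma~\ref{lemma:guard}) with its non-monotone counterpart (Lemma~\ref{lemma:summary}). Given any search strategy using $5n+2m+2$ searchers on $\TSATP$, Lemma~\ref{lemma:colorsnm} fixes its color distribution. I would consider the first successful attempt $\F$ that cleans $\tilde{P}$ (invoking assumption~\ref{eq:reversability-nm}) and apply Lemma~\ref{lemma:summary} to extract the set $Z$ of searchers of colors $\{\RED, \NPCvalve{p}{1}, \NPCvalve{p}{2}, x_p : p \in \{1,\ldots,n\}\}$, where for each $p$ exactly one of $\{\NPCtrue{p},\NPCfalse{p}\}$ is fixed as the locked literal-color $x_p$. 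The Boolean assignment is then defined by: $x_p$ is set to \emph{true} if and only if the locked color $x_p$ equals $\NPCfalse{p}$ (so the searcher corresponding to the positive literal is \emph{not} locked in~$Z$).

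Assume for a contradiction that this assignment fails to satisfy some clause $C_d$, meaning all three of its literals are false. Then for each $j \in \{1,2,3\}$ the color $x_{d,j}$ of the literal-edge inside $L_{d,j}$ coincides with the locked color $x_p$ of the corresponding variable. Focusing on the sub-attempt $\Fb = [\f{\tilde{l}}{1},\f{\tilde{b}}{1}]$, the cleaning front on $\tilde{P}$ must eventually reach $\tilde{v}_{\tilde{b}+d}$. At that point the three subtrees $L_{d,1}, L_{d,2}, L_{d,3}$ attached at this vertex must each be either completely clean or guarded from the still-contaminated portion of $\tilde{P}$; since both $\RED$ searchers are occupied with the path and only two $\NPCclause{d}$ searchers exist, at most two of the three $L_{d,j}$'s can be guarded, so at least one of them must have been fully cleaned during this sub-attempt.

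Cleaning any such $L_{d,j}$ requires, at some move in the sub-attempt, the three searchers of colors $\RED, \NPCclause{d}, x_{d,j}$ to simultaneously occupy vertices of $L_{d,j}$ (as $\sn{L_{d,j}} = 3$). Under our assumption, $x_{d,j}$ is precisely the locked color for the relevant variable, so the required searcher belongs to $Z$. By Lemma~\ref{lemma:switch} this searcher must remain in $\tilde{S}_p \cup \tilde{S}_{-p}$ throughout $[\f{\tilde{l}}{1},\f{\tilde{b}}{1}]$, and since $(\tilde{S}_p \cup \tilde{S}_{-p}) \cap L_{d,j} = \emptyset$, it cannot be in $L_{d,j}$ at the same time as the $\RED$ and $\NPCclause{d}$ searchers --- a contradiction, forcing $C_d$ to be satisfied.

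I expect the main obstacle to be justifying that the single-exception allowance in Lemma~\ref{lemma:summary} does not undermine the contradiction. The exception slot of one $Z$-searcher being inside $\tilde{P}_{\tilde{b}}^{+}$ is essentially consumed by the $\RED$ searcher traversing the path $\tilde{P}_{\tilde{b}}$ during these moves (as in the proof of Lemma~\ref{lemma:summary}), so no other $Z$-searcher can exploit this window. One must also verify that the locked $x_p$-searcher cannot reach the interior of $L_{d,j}$ by alternative means: it cannot slide in via $\tilde{v}_{\tilde{b}+d}$, whose incident edges into $L_{d,j}$ are all $\RED$, nor can it be re-placed there after a prior removal from $\tilde{S}_p \cup \tilde{S}_{-p}$ without triggering the recontamination ruled out by Lemma~\ref{lemma:switch}. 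These routing obstructions close the argument and establish the lemma.
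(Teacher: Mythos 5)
Your overall architecture matches the paper's: fix the color distribution via Lemma~\ref{lemma:colorsnm}, work inside the first successful attempt on $\tilde{P}$ under~\ref{eq:reversability-nm}, use Lemma~\ref{lemma:summary} to lock, for each variable, one searcher of color $\NPCtrue{p}$ or $\NPCfalse{p}$ outside $\tilde{P}_{\tilde{b}}^{+}$ throughout $[\f{\tilde{l}}{1},\f{\tilde{b}}{1}]$, and read the truth assignment off the locked colors. The counting at $\tilde{v}_{\tilde{b}+d}$ (two searchers of color $\NPCclause{d}$, both searchers of color $\RED$ on the path, all three literal searchers locked) is also the right skeleton for the contradiction.

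The gap is in the step ``at least one of them must have been fully cleaned during this sub-attempt.'' What the front reaching $\tilde{v}_{\tilde{b}+d}$ forces is only that at least one $L_{d,j}$ is \emph{clean at that moment}; in a non-monotone strategy this does not imply it was cleaned within $[\f{\tilde{l}}{1},\f{\tilde{b}}{1}]$. The subtree $L_{d,j}$ (or at least its edge of color $x_{d,j}$) could have been cleaned \emph{before} $\f{\tilde{l}}{1}$ --- for instance during an earlier, ultimately abandoned attempt, when the searcher of color $x_{d,j}$ was still free to travel --- and then merely kept clean by a single guard of color $\NPCclause{d}$ at the junction of $L_{d,j}$ until the front arrives. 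In that scenario no move of the sub-attempt requires the searchers of colors $\RED$, $\NPCclause{d}$ and $x_{d,j}$ to meet inside $L_{d,j}$, and your contradiction evaporates. Your closing remarks about ``routing obstructions'' concern where the locked searcher can go, not whether the subtree could already be clean. This is precisely the point where the paper notes that the monotonicity argument of Lemma~\ref{lemma:3SAT} ``has to be continued'': it invokes Lemma~\ref{lemma:mc2} to produce a move, early in the attempt, at which both searchers of color $\RED$ and both of color $\NPCclause{d}$ are forced into the subtrees $\tilde{L}''_{\NPCclause{d}}$ attached to $\tilde{v}_{\tilde{l}-1}$; at that move a pre-cleaned edge of color $x_{d,j}$ could only be guarded by the (locked) searcher of color $x_{d,j}$, hence it is recontaminated and is still contaminated at $\f{\tilde{b}}{1}$. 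Without this lemma, or an equivalent argument excluding the ``pre-cleaned and parked behind a $\NPCclause{d}$ guard'' scenario, the proof is incomplete.
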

\begin{proof}
The proof revolves around the configuration of searchers in the move  $\f{\tilde{b}}{i(0)}$.
We define a Boolean assignment as follows: $x_{p}$ is true if and only if a searcher of color $\NPCtrue{p}$  does \emph{not} guard the area $A_{\tilde{a}+1+p}$ in the move  $\f{\tilde{b}}{i(0)}$, otherwise $x_p$ is false. Let $X\subset \{\NPCtrue{i}, \NPCfalse{i}\st i\in\{1,\ldots,n\}$ denote the colors of those searchers.  By Lemma~\ref{lemma:summary}, a valid assignment will occur during execution of $\tilde{c}$-search strategy using $2+5n+2m$ searchers. We omit the detailed proof in favor of an analogy to the proof of Lemma~\ref{lemma:3SAT}. 

Let $\tilde{T}$ denote the maximal subtree containing  $\tilde{v}_{\tilde{l}}$ such that $\tilde{v}_{\tilde{b}}$ is this subtree's leaf. Note that $\tilde{T}$ is  isomorphic to its equivalent in $\TSAT$, and monotone strategies are a subset of strategies available in this version of the problem. We focus on proving that the configuration of searchers in the move  $\f{\tilde{b}}{1}$ has properties analogous to those of the configuration in the move $\first{b}$ of a strategy for $\TSAT$.
Regardless of the moves performed by searchers in a $\tilde{c}$-strategy if a color $x\notin X$, then an edge of color $x$ in $E(\tilde{T})$ which was contaminated before the move  $\f{\tilde{l}}{1}$ remains contaminated in the moves of numbers from the interval $[\f{\tilde{l}}{1}, \f{\tilde{b}}{1}]$. Thus, configurations which do not correspond to a valid assignment cannot use the searchers of appropriate colors required to guard them and continue cleaning the tree.

All that remains to be addressed is the possibility of these edges being clean before the move  $\f{\tilde{l}}{1}$ (recall that in the proof of Lemma~\ref{lemma:3SAT} we used the notion of monotonicity to resolve this issue, here the argument has to be continued).  If this was the case they would have to be guarded by at least one searcher in the moves of numbers from the interval $[\f{\tilde{l}}{1}, \f{\tilde{h}}{1}]$. By Lemma~\ref{lemma:mc2}, there exists a move whose number is in  this interval such that all searchers of color $\RED$ and $\NPCclause{d}$ are not in $\tilde{L}_{d, 1}$, $\tilde{L}_{d, 2}$, $\tilde{L}_{d, 3}$. Thus, by the colors of vertices of $\tilde{L}_{d, i}$ only  the searcher of color $x$ can prevent recontamination of an edge of color $x$ and it cannot be used, by the definition of $X$.  Furthermore, these edges stay contaminated in the move  $\f{\tilde{b}}{i(0)}$.

 We use only the positions of searchers in a move of a specific number, so we are interested in a result, not the process, of a partial cleaning of $\TSAT$ and $\TSATP$. Therefore, most arguments from Lemma~\ref{lemma:3SAT} can be applied to $\TSATP$.
Recall the conclusion of the proof of Lemma~\ref{lemma:3SAT}. In order for a $\tilde{c}$-strategy for $\TSAT$ to exist at least one subtree $L_{d, 1}$, $L_{d, 2}$, $L_{d, 3}$ for each $d\in \{1,\ldots, m\}$ is cleaned before the clean part of $A_0$ reaches $v_b$, or all three have to be guarded, and for that to happen they have to contain edges in at least one color  corresponding to $\literal{d}{i}$ in clause $C_{d}$. The same is true for  a $\tilde{c}$-strategy for $\TSATP$ and  its respective counterparts of $\TSAT$.
\end{proof}

\section{Polynomially tractable instances} \label{sec:easy}

If $G$ is a tree then Lemma~\ref{lem:lower} gives us a~lower bound of $\lb{G}$ on the number of searchers. 
In this section we will look for an upper bound assuming that there is exactly one connected component per color.
With this assumption we show a constructive, polynomial-time algorithm both for {\problemHGS} and {\problemHCGS}.

\medskip
Let $(E_1,\ldots,E_k)$ be the partition of edges of $T$ so that $E_i$ induces the area of color $i$ in $T$.
Observe that this partition induces a~tree structure.
More formally, consider a graph in which the set of vertices is $P_E = \{E_1, E_2, \ldots E_k\}$ and $\{E_i, E_j\}$ is an edge if and only if an edge in $E_i$ and and edge in $E_j$ share a common junction in $T$.
Then, let $\SimpTree$ be the BFS spanning tree with the root $E_1$ in this graph.
We write $V_i$ to denote all vertices of the area with edge set $E_i$, $i\in\{1,\ldots,\cnum\}$.

Our strategy for cleaning $T$ is recursive, starting with the root.
The following procedure requires that when it is called, the area that corresponds to the parent of $E_i$ in $\SimpTree$ has been cleaned, and if $i\neq 1$ (i.e., $E_i$ is not the root of $\SimpTree$), then assuming that $E_j$ is the parent of $E_i$ in $\SimpTree$, a searcher of color $j$ is present on the junction in $V_i\cap V_j$.
With this assumption, the procedure recursively cleans the subtree of $\SimpTree$ rooted in $E_i$.
\begin{algorithmic}
\Procedure{\ProcClean}{labeled tree $T$, $E_i$}
\Comment{Clean the subtree of $T$ that corresponds to the subtree of $\SimpTree$ rooted in $E_i$}
\begin{enumerate}
 \item For each $E_j$ such that $E_j$ is a~child of $E_i$ in $\SimpTree$ 
		place a searcher of color $j$ on the~junction $v \in V_j \cap V_i$.  
\item \label{s:opt}
Clean the area of color $i$ using $\sn{T[V_i]}$ searchers. Remove all searchers of color $i$ from vertices in $V_i$.
\item For each child $E_j$ of $E_i$ in $\SimpTree$: 
	\begin{enumerate} 
	 \item place a searcher of color $i$ on the junction $v \in V_j \cap V_i$,
		\item \label{move:remove1} remove the searcher of color $j$ from the vertex $v$,
		\item call $\ProcClean$ recursively with input $T$ and $E_j$,
		\item \label{move:remove2} remove the searcher of color $i$ from the vertex $v$.
	\end{enumerate}
\end{enumerate}
\EndProcedure	
\end{algorithmic}

\begin{lemma}
\label{lem:upper}
For a~given tree $G = (V(G), E(G), \colE)$, procedure~{\ProcClean}($G$, $E_1$) cleans $G$ using 
$\lb{G}$ searchers. \end{lemma}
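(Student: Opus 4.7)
I would prove the lemma in two parts: (i) correctness (the procedure indeed cleans $G$), and (ii) the searcher count is bounded by $\lb{G}$. The argument proceeds by structural induction on $\SimpTree$, rooted at $E_1$, with the following invariant for a call to $\ProcClean(G,E_i)$: upon invocation, every edge of every area $E_j$ on the path from $E_1$ to $E_{parent(i)}$ in $\SimpTree$ is clean; and if $i\neq 1$, a searcher of the parent color is present on the junction in $V_i\cap V_{parent(i)}$. Under this invariant, I would show that the call cleans all edges in the subtree of $G$ corresponding to the subtree of $\SimpTree$ rooted at $E_i$ without recontaminating ancestors' areas.

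For correctness within a single call, I would argue step by step. Step~1 places guards on each junction with a child of $E_i$; since these junctions form a separator between $V_i$ and each child area, the child-colored guards together with the parent-colored guard (from the invariant) isolate $V_i$ from all contaminated edges. Step~\ref{s:opt} then cleans the \emph{unlabeled} subtree $T[V_i]$ using the classical edge search strategy; this needs exactly $\sn{T[V_i]}$ searchers of color~$i$, all of which may legally stand on vertices of $V_i$ because such vertices carry color $i$, and no recontamination into the cleaned fragment can occur thanks to the guards. Step~3 processes children one at a time: swapping the child-colored guard for a color-$i$ guard preserves the separator at the junction (so the partly-clean region stays clean through the swap), the recursive call clears the subtree rooted at $E_j$ by the induction hypothesis, and after it returns both sides of the junction are clean so removing the color-$i$ guard causes no harm. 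The remaining siblings' guards remain throughout, keeping $V_i$'s other exits blocked.

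For the searcher count I would bound, for each color $k$ separately, the maximum number of color-$k$ searchers simultaneously on $G$. The key observations are: (a) color-$k$ searchers may only occupy vertices in $V_k$; (b) the only moment when more than one color-$k$ searcher is on $G$ is during step~\ref{s:opt} of $\ProcClean(G,E_k)$, which uses exactly $\sn{T[V_k]}$ of them (and removes them all before returning from step~\ref{s:opt}); (c) at every other moment at most one color-$k$ searcher is present, namely either the single guard placed in step~1 of $\ProcClean(G,E_{parent(k)})$ (which lives only until the corresponding iteration of parent's step~3 removes it) or the single color-$k$ guard placed by step~3(a) of $\ProcClean(G,E_k)$ during iteration over one of its children (removed in step~3(d) before moving to the next child). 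Since $\sn{T[V_k]}\ge 1$ whenever $V_k$ contains an edge, the per-color maximum is $\sn{T[V_k]}$, and summing over all $\cnum$ colors yields a total of at most $\sum_{k=1}^{\cnum}\sn{T[V_k]}=\lb{G}$ searchers at any instant.

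The main obstacle I anticipate is the bookkeeping in the searcher count, namely verifying that no color-$k$ guards accumulate across the levels of recursion. This requires tracking, for each ancestor of $E_k$ along the chain up to the root, which guards from its step~1 have already been swapped out by its step~3 and which remain; the crucial point is that such leftover guards belong to \emph{pairwise distinct colors} (because colors correspond bijectively to areas, so each child of any ancestor contributes at most one guard of its own color). Once this is made precise, the monotone-looking argument for correctness and the color-by-color counting combine to give both that $\ProcClean(G,E_1)$ is a legal search strategy and that its searcher usage never exceeds $\lb{G}$.
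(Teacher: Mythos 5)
Your proposal is correct and follows essentially the same route as the paper: correctness is established by checking that the junction guards prevent recontamination at each removal/swap, and the searcher bound comes from a per-color count showing that color $k$ contributes $\sn{T[V_k]}$ searchers only while its own area is being cleaned and at most one guard at any other time, summing to $\lb{G}$. Your write-up is considerably more detailed than the paper's (which compresses the counting to one sentence and the recontamination check to two), but no new ideas are introduced on either side.
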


\begin{proof}
First, observe that the number of searchers used during the execution of procedure {\ProcClean} is exactly as specified.
Indeed, to clean each of the $T[V_i]$ we use $\sn{T[V_i]}$ searchers of color $i$ and at most one searcher of other colors.

Note that moves~\ref{move:removing} do not cause recontamination. 
Indeed, the move defined in step~\ref{move:remove1} of procedure~{\ProcClean} removes a~searcher from the node on which another searcher is present, 
while the move~\ref{move:remove2} is performed on node $v$ when both subtree and the parent subtree of $v$ are cleaned.
This gives the correctness of search strategy produced by procedure~{\ProcClean}.
\end{proof}

We also immediately obtain.
\begin{lemma}
If all the strategies used in step~\ref{s:opt} of procedure {\ProcClean} to clean a subtree $T[V_i]$ are monotone, then the resulting ${\colS}$-strategy for $G$ is also monotone.
\qed
\end{lemma}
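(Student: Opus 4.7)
The plan is to prove the statement by induction on the height of $E_i$ in $\SimpTree$, arguing that every individual move produced by {\ProcClean}$(G,E_i)$ either extends the set of clean edges or leaves it unchanged. The base case is $E_i$ a leaf of $\SimpTree$: then Step~1 and Step~3 are vacuous and the procedure reduces to the monotone strategy for $T[V_i]$ called in Step~\ref{s:opt}, which is monotone by hypothesis. For the inductive step, I would maintain throughout the execution the following two invariants: (i) the junction $V_i\cap V_{i'}$ shared with the parent area $E_{i'}$ (if $i\neq 1$) carries the color-$i'$ searcher placed by the caller for the entire duration of the call, and (ii) for every child $E_j$ of $E_i$ in $\SimpTree$ whose recursive subcall has not yet been executed, the junction $V_i\cap V_j$ carries either the color-$j$ searcher placed in Step~1 or the color-$i$ searcher placed in Step~3(a).

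Given these invariants, I would check the moves in order. Step~1 consists only of placements of type~\ref{move:placing}, which never cause recontamination. For Step~\ref{s:opt}, the adjacent areas of $E_i$ in $T$ are the parent area (clean before the call by the precondition) and the children areas (still contaminated but each sealed off by its color-$j$ guard), hence the inner monotone strategy for $T[V_i]$ — which uses only searchers of color $i$ and can freely coexist with the differently-colored guards on the same junction vertex — cleans $E_i$ without any external contamination seeping in and without recontaminating anything inside $E_i$. The bulk removal of color-$i$ searchers at the end of Step~\ref{s:opt} is safe because $E_i$ is then fully clean and the surviving guards from invariants (i) and (ii) still isolate the still-contaminated child subtrees from $E_i$. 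Inside Step~3, for each child $E_j$: (a) is a plain placement; (b) removes the color-$j$ searcher from a junction already co-occupied by the freshly placed color-$i$ searcher, so contamination of $E_j$ cannot cross into $E_i$; (c) is the recursive call, whose preconditions hold (parent area $E_i$ is clean, the color-$i$ guard sits at $V_i\cap V_j$), and which is monotone and ends with the subtree of $\SimpTree$ rooted at $E_j$ cleaned, by the inductive hypothesis; and (d) removes the color-$i$ guard at a moment when both $E_i$ (clean since Step~\ref{s:opt}) and all edges of $E_j$ incident to $v$ (cleaned during the recursion) are clean, so no edge becomes recontaminated.

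The only real subtlety — and the place I would be most careful in writing the proof — is verifying that Step~\ref{s:opt} is executable and stays monotone in the presence of the pre-placed guards of other colors at the junctions of $E_i$. The point to emphasize is that in our model nothing forbids two searchers of different colors from simultaneously occupying the same vertex, and that the contamination-spreading rule from Section~\ref{sec:formulation} is blocked by \emph{any} searcher at a junction irrespective of its color; together with the precondition that the parent area is already clean, this ensures that the inner strategy's view of $T[V_i]$ really is that of an isolated tree with possibly some extra guards at its leaves, which can only help it. Closing the induction then immediately yields the lemma.
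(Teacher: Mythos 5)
Your proposal is correct and follows essentially the same route as the paper: the paper states this lemma as an immediate consequence (with no separate proof) of the move-by-move check in the proof of Lemma~\ref{lem:upper}, which already verifies that the placements and the removal moves of {\ProcClean} never cause recontamination, so the only possible recontaminations would have to occur inside the step~\ref{s:opt} strategies. Your induction on the height in $\SimpTree$ with the two guarding invariants is just a fully spelled-out version of that same observation.
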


It is known that there exists an optimal monotone search strategy for any graph~\cite{LaPaugh93} and it can be computed in linear time for a tree~\cite{MegiddoHGJP88}. An optimal connected search strategy can be also computed in linear time for a tree~\cite{BFFFNST12}.

Using Lemmas~\ref{lem:lower} and \ref{lem:upper} we conclude with the following theorem:
\begin{theorem}
Let $G = (V(G), E(G), \colE)$ be a tree such that the subgraph $G_j$ composed by the edges in $E_j$ is connected for each $j \in \{ 1, 2, \ldots, \cnum\}$. Then, 
there exists a~polynomial-time algorithm for solving problems $\problemHGS$ and $\problemHCGS$. 
\end{theorem}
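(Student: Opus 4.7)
The plan is to combine the lower bound $\hsn{G}\geq\lb{G}$ from Lemma~\ref{lem:lower} with the upper bound from Lemma~\ref{lem:upper}, together with the linear-time algorithms known for computing $\sn{T'}$ and $\csn{T'}$ on a tree $T'$. Observe that the hypothesis ``$G_j$ is connected for each $j$'' means that each color class induces a single area of $G$, so $\lb{G} = \sum_{j=1}^{\cnum} \sn{G_j}$. Consequently, both Lemmas~\ref{lem:lower} and~\ref{lem:upper} yield matching bounds and thus $\hsn{G}=\lb{G}$.

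First, I would build the tree $\SimpTree$ on the set of areas $\{E_1,\ldots,E_{\cnum}\}$ (which under the hypothesis coincides with the set of color classes); this can be done in linear time by scanning edges of $G$ and identifying junctions. Rooting $\SimpTree$ arbitrarily at $E_1$ sets up the recursion needed by procedure \ProcClean. For each area $T[V_i]$, I would then compute an optimal (monotone) edge search strategy using the linear-time algorithm of Megiddo et al.~\cite{MegiddoHGJP88}, which gives the number $\sn{T[V_i]}$ and a corresponding sequence of moves. Plugging these strategies into step~\ref{s:opt} of \ProcClean yields, by Lemma~\ref{lem:upper}, a valid search $\colS$-strategy that uses exactly $\lb{G}$ searchers, matching the lower bound. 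The total running time is polynomial (in fact linear), since each area is processed once and the recursion in \ProcClean performs a linear number of auxiliary placement/removal moves per junction of $\SimpTree$.

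For $\problemHCGS$ the argument is analogous but requires two additional observations. First, I would replace the calls in step~\ref{s:opt} of \ProcClean by optimal connected edge search strategies for each $T[V_i]$, which are computable in linear time on a tree by the result of~\cite{BFFFNST12}; since $\csn{T[V_i]}=\sn{T[V_i]}$ for trees (connected search is monotone on trees and the minima coincide as explained in~\cite{BFFFNST12}), the searcher count is unchanged. Second, I would verify that the composite strategy produced by \ProcClean remains connected: before \ProcClean$(G, E_j)$ is invoked, the parent area $T[V_i]$ is entirely clean and a searcher sits on the junction $v\in V_i\cap V_j$, so the clean subgraph stays connected as the child area is cleaned connectedly, and re-rooting the child's connected strategy so that it starts at $v$ (this is a standard property for connected search on trees) preserves the invariant across the recursion.

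The main obstacle I expect is precisely the re-rooting requirement in the connected case: one must ensure that the connected sub-strategy for each child area begins at the junction with its parent so that the overall clean region never disconnects. This is routine for trees because an optimal connected search strategy can be started at any prescribed leaf (or, via Fact~\ref{fact:leaves}, at the junction, which is a leaf of the child area) without increasing the searcher count; invoking this property from~\cite{BFFFNST12} closes the argument and yields a polynomial-time algorithm for both $\problemHGS$ and $\problemHCGS$ under the stated connectivity hypothesis.
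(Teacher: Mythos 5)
For \problemHGS{} your argument is correct and is essentially the paper's own proof: combine the lower bound of Lemma~\ref{lem:lower} with the upper bound of Lemma~\ref{lem:upper}, note that under the connectivity hypothesis $\lb{G}=\sum_j\sn{G_j}$, and plug the linear-time tree algorithm of~\cite{MegiddoHGJP88} into step~\ref{s:opt} of \ProcClean{}. Nothing to add there.

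For \problemHCGS{}, however, your justification contains a genuinely false step: it is \emph{not} true that $\csn{T'}=\sn{T'}$ for trees. Monotonicity of connected search on trees does not imply that the connected and unrestricted search numbers coincide; in fact the ratio $\csn{T}/\sn{T}$ can approach $2$ on trees (this is exactly why the paper cites~\cite{Dereniowski12SIDMA} for the ``price of monotonicity/connectivity''). Consequently, your claim that replacing the sub-strategies in step~\ref{s:opt} by optimal connected ones leaves the searcher count at $\lb{G}$ does not follow: if $\csn{T[V_i]}>\sn{T[V_i]}$ for some area, the composite strategy uses more than $\lb{G}$ searchers of color $i$, and you would separately need a matching lower bound for $\hcsn{G}$ (which Lemma~\ref{lem:lower}, being stated in terms of $\sn{H}$, does not provide). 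The paper itself is terse here --- it only remarks that optimal connected strategies are computable in linear time for trees and invokes Lemmas~\ref{lem:lower} and~\ref{lem:upper} --- but it does not assert the identity $\csn{}=\sn{}$ that your count relies on. Your final observation about having to re-root each child's connected sub-strategy at the junction with its parent is a legitimate additional concern (the junction is a leaf of the child area by Fact~\ref{fact:leaves}, so this is doable on trees), but it does not repair the counting issue above; to close the connected case you must either prove a lower bound of the form $\hcsn{G}\geq\sum_j\csn{G_j}$ (or whatever quantity your algorithm actually achieves) or argue directly why $\lb{G}$ searchers suffice for a connected strategy.
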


\section{Conclusions and open problems} \label{sec:conclusions}

Recalling our main motivation standing behind introducing this graph searching model, we note that its properties allow for much easier construction of graphs in which recontaminations need to occur in optimal strategies.
Our main open question, following the same unresolved one for connected searching, is whether problems $\problemHGS$ and $\problemHCGS$ belong to NP?

Our more practical motivation for studying the problems is derived from modeling physical environments to whose parts different robots have different access.
More complex scenarios than the one considered in this work are those in which either an edge can have multiple colors (allowing it to be traversed by all searchers of those colors), and/or a searcher can have multiple colors, which in turns extends its range of accessible parts of the graph.
As a way of modeling mobile entities of different types cooperating to solve various computational tasks (of which searching is just one example), heterogeneous mobile agent computing is receiving a growing interest, including fields like distributed computing. Hence, one may ask for different ways of modeling differences between searchers, which may fit potential practical applications.

\bibliographystyle{plain}
\bibliography{references}

\begin{thebibliography}{10}

\bibitem{AminiCN15}
O.~Amini, D.~Coudert, and N.~Nisse.
\newblock Non-deterministic graph searching in trees.
\newblock {\em Theor. Comput. Sci.}, 580:101--121, 2015.

\bibitem{BFFFNST12}
L.~Barri{\`e}re, P.~Flocchini, F.V. Fomin, P.~Fraigniaud, N.~Nisse, N.~Santoro,
  and D.M. Thilikos.
\newblock Connected graph searching.
\newblock {\em Inf. Comput.}, 219:1--16, 2012.

\bibitem{BFFS02}
L.~Barri\`{e}re, P.~Flocchini, P.~Fraigniaud, and N.~Santoro.
\newblock Capture of an intruder by mobile agents.
\newblock In {\em SPAA'02: Proc. of the Fourteenth Annual ACM Symposium on
  Parallelism in Algorithms and Architectures}, pages 200--209, New York, NY,
  USA, 2002. ACM.

\bibitem{BartschiC0D0HP17}
A.~B{\"{a}}rtschi, J.~Chalopin, S.~Das, Y.~Disser, D.~Graf, J.~Hackfeld, and
  P.~Penna.
\newblock Energy-efficient delivery by heterogeneous mobile agents.
\newblock In {\em 34th Symposium on Theoretical Aspects of Computer Science,
  {STACS} 2017, March 8-11, 2017, Hannover, Germany}, pages 10:1--10:14, 2017.

\bibitem{monotonicity_in_graph_searching}
D.~Bienstock and P.~Seymour.
\newblock Monotonicity in graph searching.
\newblock {\em J. Algorithms}, 12(2):239--245, 1991.

\bibitem{BlinBN17}
L.~Blin, J.~Burman, and N.~Nisse.
\newblock Exclusive graph searching.
\newblock {\em Algorithmica}, 77(3):942--969, 2017.

\bibitem{ClarkeC06}
N.E. Clarke and E.L. Connon.
\newblock Cops, robber, and alarms.
\newblock {\em Ars Comb.}, 81:283--296, 2006.

\bibitem{ClarkeN00}
N.E. Clarke and R.J. Nowakowski.
\newblock Cops, robber, and photo radar.
\newblock {\em Ars Comb.}, 56:97--103, 2000.

\bibitem{ClarkeN01}
N.E. Clarke and R.J. Nowakowski.
\newblock Cops, robber and traps.
\newblock {\em Utilitas Mathematica}, 60:91--98, 2001.

\bibitem{CzyzowiczGKK11}
J.~Czyzowicz, L.~Gasieniec, A.~Kosowski, and E.~Kranakis.
\newblock Boundary patrolling by mobile agents with distinct maximal speeds.
\newblock In {\em Algorithms - {ESA} 2011 - 19th Annual European Symposium,
  Saarbr{\"{u}}cken, Germany, September 5-9, 2011. Proceedings}, pages
  701--712, 2011.

\bibitem{Czyzowicz2014}
Jurek Czyzowicz, Evangelos Kranakis, Dominik Pajak, and Najmeh Taleb.
\newblock {\em Patrolling by Robots Equipped with Visibility}, pages 224--234.
\newblock Springer International Publishing, Cham, 2014.

\bibitem{deren_ipl09}
D.~Dereniowski.
\newblock Maximum vertex occupation time and inert fugitive: Recontamination
  does help.
\newblock {\em Inf. Process. Lett.}, 109(9):422--426, 2009.

\bibitem{Dereniowski11}
D.~Dereniowski.
\newblock Connected searching of weighted trees.
\newblock {\em Theor. Comp. Sci.}, 412:5700--5713, 2011.

\bibitem{Dereniowski12}
D.~Dereniowski.
\newblock Approximate search strategies for weighted trees.
\newblock {\em Theor. Comput. Sci.}, 463:96--113, 2012.

\bibitem{Dereniowski12SIDMA}
D.~Dereniowski.
\newblock From pathwidth to connected pathwidth.
\newblock {\em SIAM J. Discrete Math.}, 26(4):1709--1732, 2012.

\bibitem{DereniowskiDTY15}
D.~Dereniowski, D.~Dyer, R.M. Tifenbach, and B.~Yang.
\newblock The complexity of zero-visibility cops and robber.
\newblock {\em Theor. Comput. Sci.}, 607:135--148, 2015.

\bibitem{DereniowskiKKK15}
D.~Dereniowski, R.~Klasing, A.~Kosowski, and {\L}.~Kuszner.
\newblock Rendezvous of heterogeneous mobile agents in edge-weighted networks.
\newblock {\em Theor. Comput. Sci.}, 608:219--230, 2015.

\bibitem{DyerYY08}
D.~Dyer, B.~Yang, and {\"{O}}.~Yasar.
\newblock On the fast searching problem.
\newblock In {\em Algorithmic Aspects in Information and Management, 4th
  International Conference, {AAIM} 2008, Shanghai, China, June 23-25, 2008.
  Proceedings}, pages 143--154, 2008.

\bibitem{FarrugiaGKP15}
A.~Farrugia, L.~Gasieniec, {\L}.~Kuszner, and E.~Pacheco.
\newblock Deterministic rendezvous with different maps.
\newblock In {\em Journal of Computer and System Sciences}, volume 106, pages
  49--59, 2019.

\bibitem{FeinermanKKR14}
O.~Feinerman, A.~Korman, S.~Kutten, and Y.~Rodeh.
\newblock Fast rendezvous on a cycle by agents with different speeds.
\newblock In {\em Distributed Computing and Networking - 15th International
  Conference, {ICDCN} 2014, Coimbatore, India, January 4-7, 2014. Proceedings},
  pages 1--13, 2014.

\bibitem{FominHT04}
F.V. Fomin, P.~Heggernes, and J.A. Telle.
\newblock Graph searching, elimination trees, and a generalization of
  bandwidth.
\newblock {\em Algorithmica}, 41(2):73--87, 2004.

\bibitem{FominT03}
F.V. Fomin and D.M. Thilikos.
\newblock On the monotonicity of games generated by symmetric submodular
  functions.
\newblock {\em Discrete Applied Mathematics}, 131(2):323--335, 2003.

\bibitem{FN08}
P.~Fraigniaud and N.~Nisse.
\newblock Monotony properties of connected visible graph searching.
\newblock {\em Inf. Comput.}, 206(12):1383--1393, 2008.

\bibitem{GMNP10}
S.~Gaspers, M.-E. Messinger, R.J. Nowakowski, and P.~Pralat.
\newblock Parallel cleaning of a network with brushes.
\newblock {\em Discrete Applied Mathematics}, 158(5):467--478, 2010.

\bibitem{HollingerKS10}
G.A. Hollinger, A.~Kehagias, and S.~Singh.
\newblock {GSST:} anytime guaranteed search.
\newblock {\em Auton. Robots}, 29(1):99--118, 2010.

\bibitem{INS09}
D.~Ilcinkas, N.~Nisse, and D.~Soguet.
\newblock The cost of monotonicity in distributed graph searching.
\newblock {\em Distributed Computing}, 22(2):117--127, 2009.

\bibitem{KawamuraK15}
A.~Kawamura and Y.~Kobayashi.
\newblock Fence patrolling by mobile agents with distinct speeds.
\newblock {\em Distributed Computing}, 28(2):147--154, 2015.

\bibitem{LaPaugh93}
A.S. LaPaugh.
\newblock Recontamination does not help to search a graph.
\newblock {\em J. ACM}, 40(2):224--245, 1993.

\bibitem{LunaFSVY17}
G.A.~Di Luna, P.~Flocchini, N.~Santoro, G.~Viglietta, and M.~Yamashita.
\newblock Self-stabilizing meeting in a polygon by anonymous oblivious robots.
\newblock {\em CoRR}, abs/1705.00324, 2017.

\bibitem{MarkouNP17}
E.~Markou, N.~Nisse, and S.~P{\'{e}}rennes.
\newblock Exclusive graph searching vs. pathwidth.
\newblock {\em Inf. Comput.}, 252:243--260, 2017.

\bibitem{MegiddoHGJP88}
N.~Megiddo, S.L. Hakimi, M.R. Garey, D.S. Johnson, and C.H. Papadimitriou.
\newblock The complexity of searching a graph.
\newblock {\em J. ACM}, 35(1):18--44, 1988.

\bibitem{MT09}
R.~Mihai and I.~Todinca.
\newblock Pathwidth is {NP}-hard for weighted trees.
\newblock In {\em FAW '09: Proc. of the 3rd Inter. Workshop on Frontiers in
  Algorithmics}, pages 181--195, Berlin, Heidelberg, 2009. Springer-Verlag.

\bibitem{Parsons76}
T.D. Parsons.
\newblock Pursuit-evasion in a graph.
\newblock In {\em Theory and Applications of Graphs, Lecture Notes in
  Mathematics}, volume 642, pages 426--441. Springer-Verlag, 1978.

\bibitem{Petrov82}
N.N. Petrov.
\newblock A problem of pursuit in the absence of information on the pursued.
\newblock {\em Differentsial'nye Uravneniya}, 18:1345--1352, 1982.

\bibitem{SeanQian2017183}
Z.~Qian, J.~Li, X.~Li, M.~Zhang, and H.~Wang.
\newblock Modeling heterogeneous traffic flow: A pragmatic approach.
\newblock {\em Transportation Research Part B: Methodological}, 99:183--204,
  2017.

\bibitem{SundaramKC16}
S.~Sundaram, K.~Krishnamoorthy, and D.W. Casbeer.
\newblock Pursuit on a graph under partial information from sensors.
\newblock {\em CoRR}, abs/1609.03664, 2016.

\bibitem{WormanYang08}
C.~Worman and B.~Yang.
\newblock Searching trees with sources and targets.
\newblock In {\em FAW '08: Proc. of the 2nd annual international workshop on
  Frontiers in Algorithmics}, pages 174--185, Berlin, Heidelberg, 2008.
  Springer-Verlag.

\bibitem{YDA09}
B.~Yang, D.~Dyer, and B.~Alspach.
\newblock Sweeping graphs with large clique number.
\newblock {\em Discrete Mathematics}, 309(18):5770--5780, 2009.

\end{thebibliography}

\end{document}